\documentclass[11pt]{article}

\usepackage{enumitem} % Provides the ability to easily modify the amount of white space in a list
\setlist{noitemsep,topsep=0pt,parsep=0pt} % Modify the amount of white space for all lists

\usepackage[margin=1cm]{caption} % Adds an addition margin on either side of figure captions. Must come before "\usepackage{subfig}" (otherwise there are errors).
\usepackage{subcaption}
\usepackage{lineno}
\usepackage{amsmath,amsthm}
\usepackage{graphicx}
\usepackage{amsfonts}
\usepackage{enumitem}
\usepackage{mathtools}
\usepackage{empheq}
\usepackage{float}
\usepackage{tikz}
\usetikzlibrary{positioning, shapes.misc}
\usepackage{mathtools} % Adds cases* and dcases envirnoments

\usepackage{tikz}
\usetikzlibrary{arrows,backgrounds,calc,fit,decorations.pathreplacing,decorations.markings,shapes.geometric}

% Hides underfull badboxes caused by using 'fit' in the tikzpicture envirnoments.
% For more info, see: http://tex.stackexchange.com/questions/61064/badbox-from-pgf-layer-using-fit
\tikzset{every fit/.append style=text badly centered}

\usepackage{tyson}

\usepackage[bookmarks=true,hypertexnames=false]{hyperref}
\hypersetup{colorlinks=true, citecolor=blue, linkcolor=red, urlcolor=blue}

\newcommand{\Holant}{\operatorname{Holant}}
\newcommand{\PlHolant}{\operatorname{Pl-Holant}}
\newcommand{\holant}[2]{\ensuremath{\Holant\left(#1\mid #2\right)}}
\newcommand{\plholant}[2]{\ensuremath{\PlHolant\left(#1\mid #2\right)}}

\newcommand{\numP}{{\rm \#P}}

\newcommand{\smm}[4]{\left[ \begin{smallmatrix} #1 & #2 \\ #3 & #4 \end{smallmatrix}\right]}
\newcommand{\smmv}[2]{\left[ \begin{smallmatrix} #1 \\ #2 \end{smallmatrix}\right]}
\newcommand{\smmh}[2]{\left[ \begin{smallmatrix} #1 & #2 \end{smallmatrix}\right]}

\newcommand{\rlm}{\lambda/\mu}

\theoremstyle{definition}
\newtheorem{remark}{Remark}

\def\partIPlCSP2SecNum{5}
\def\partIPl2CSPSecPageNum{24}

% variables for gadgets

\tikzstyle{internal} = [draw, fill, shape=circle]
\tikzstyle{external} = [shape=circle]
\tikzstyle{square}   = [draw, fill, rectangle]
\tikzstyle{triangle} = [draw, fill, regular polygon, regular polygon sides=3, inner sep=3pt]
\tikzstyle{pentagon} = [draw, fill, regular polygon, regular polygon sides=5, inner sep=2pt, minimum size=14pt]

\title{Planar 3-way Edge
Perfect Matching Leads to A Holant Dichotomy}

\author{
Jin-Yi Cai%
\thanks{Department of Computer Sciences, University of Wisconsin--Madison.}\\
\footnotesize \texttt{jyc@cs.wisc.edu}
\and
Austen Z. Fan$^*$\\
\footnotesize \texttt{afan@cs.wisc.edu}
}

\date{} % no date

\begin{document}
\pagenumbering{arabic}
\maketitle

\begin{abstract}
We prove a complexity dichotomy theorem
%classification theorem
for  a class of Holant problems on planar 3-regular bipartite graphs. The complexity dichotomy states that for every weighted constraint function $f$  defining the problem (the weights can 
even be negative),
the problem is either
computable in polynomial time if $f$ satisfies a tractability
criterion, or  \#P-hard otherwise.
One particular problem in this 
%class of problems
problem space 
is  a long-standing open problem of
%raised by 
Moore and Robson~\cite{MooreR01} on  counting Cubic Planar X3C.
%, which originated in the work of Dyer and Frieze~\cite{DyerF86}. 
The dichotomy resolves this problem by showing that it is \numP-hard.
%in the affirmative.
%\#Cubic Planar Monotone 1-in-3 SAT is \#P-hard.
Our proof relies on the machinery of 
signature theory developed in the study
of Holant problems. An essential ingredient in our proof of the main dichotomy theorem is
a pure graph-theoretic result: Excepting
 some trivial cases, every  
3-regular plane graph has
a planar 3-way edge perfect matching.
The proof technique of this graph-theoretic  result is a combination of
algebraic and combinatorial methods.
%that has an interesting connection to the famous four color theorem.

The  P-time  tractability criterion of the dichotomy is explicit.
Other than the known classes of tractable constraint functions
 (degenerate, affine, product type, matchgates-transformable) we also identify a new infinite set of P-time computable
planar Holant problems; however, its
tractability is not by a direct holographic
transformation to matchgates, but by a combination of 
this method and a global argument. The complexity dichotomy
states that everything else in this Holant class is \#P-hard.
\end{abstract}
\newpage

\section{Introduction}\label{sec:intro}

Holant problems are also known as edge-coloring models.
They can express a broad class of counting problems,
such as counting matchings,
%({\sc \#Matchings}),
perfect matchings
({\sc \#PM}),
proper edge-colorings, cycle coverings,
and a host of counting orientation problems such as counting
Eulerian orientations or the six-vertex model.
Every counting constraint satisfaction problem (\#CSP)  can be expressed
as a Holant problem. 
%The framework of Holant problems is {\em provably} more expressive than the framework of counting satisfaction problems (\#CSP). Precisely,
%Every \#CSP problem can be easily expressed
%as a Holant problem. 
On the other hand,
Freedman, Lov\'asz and Schrijver~\cite{freedman2007reflection} proved that the prototypical Holant problem
\#PM cannot be expressed 
as a graph homomorphism function (vertex-coloring model)
by any real valued constraint function.
%This was extended to
This is  true even for 
complex valued constraint functions~\cite{CaiG19}. 
%Thus Holant problems
%are strictly more expressive.

Some problems are \#P-hard in general, yet computable on
planar graphs. 
The problem \#PM is such a problem~\cite{Valiant79,Jerrum87}.
%JerrumSV04}.
A most fascinating algorithm---the FKT algorithm~\cite{kasteleyn1967graph,kasteleyn1961statistics, temperley1961dimer}---computes \#PM in polynomial time (FP, polynomial-time computable functions)
for planar graphs. 
%One of the most fascinating of all algorithms, 
Valiant  introduced  holographic algorithms
which are non-parsimonious
%holographic 
reductions to the FKT
 algorithm, 
 %and these algorithms
 placing  many planar counting problems  in FP
 that 
 %otherwise 
 seemed to be intractable.
To understand these algorithms
 a signature theory was developed and the Holant
 framework was introduced.
 %, which led to the study of Holant problems.
 Stated in this signature theory,
  Valiant's holographic algorithms boil down to
  what constraint functions (signatures) can be realized by the so-called matchgate signatures
  under a holographic transformation.
  Delineating the precise
boundary of FP tractability for these problems has been a central focus in the classification theory of
  counting problems~\cite{Backens21,CaiF17,CaiFGW22,GuoW20,CaiLX17,Valiant08}.
 A general theme has emerged:
 %of the classification emerges:
 for  very broad classes of counting problems, one can
 classify every problem in the class to be of
 exactly one of three types: (1) 
 FP, (2) \#P-hard in general but FP
 on planar graphs, or (3)  \#P-hard on planar graphs. Furthermore, for all \#CSP on Boolean variables (which includes vertex models),  Valiant's holographic algorithm
  is a universal
 algorithm~\cite{CaiF17} that solves problems in (2)\footnote{However, this is not true for Holant problems in general~\cite{CaiFGW22}. If one recalls that
the FKT algorithm solves  \#PM for planar graphs, which is
\emph{the} prototypical  Holant problem but not a vertex model,
%graph homomorphism problem,  
this is particularly intriguing.}. 
 %However this is not true for Holant problems.
In this paper, we prove that for a class of bipartite
Holant problems, this three-way classification \emph{holds}.
However,
 there are \emph{two methods} 
for planar tractability in type (2): In addition to
holographic
transformations to matchgates, there is another type which
combines this transformation with a global argument. Either method alone is
not, but \emph{together} they do form, a
universal strategy for planar tractability.
%restricted

%
% there are some planar Holant problems in FP
% that are hard in general, and yet not  which are not transformable to matchgates~\cite{CaiFGW22}.

%  After In hindsight, the problems considered in~\cite{Valiant08} are planar variants of some similarly looking problems that are known to be \#P-hard, and the first-ever holographic algorithms work by expressing those problems in terms of matchgates which can then be solved in polynomial time by the famous FKT algorithm~\cite{temperley1961dimer, kasteleyn1961statistics, kasteleyn1967graph}. The theory of matchgates is then further developed~\cite{CaiCL09, CaiL10, CaiG14} and it is proven in~\cite{CaiLX17} that they capture all planarly tractable \#CSP problems that are hard in general. However, there are some planarly tractable Holant problems that are hard in general which are not transformable to matchgates~\cite{CaiFGW22}.

We briefly define  
Holant
problems on Boolean variables. An
input is a signature grid $\Omega$
consisting of a graph $G = (V,E)$ with 
each vertex $v$  labeled by a constraint function $f_v$
(also called a signature).
The Holant problem is to compute a sum-of-product
$\operatorname{Holant}(\Omega) = 
\sum_{\sigma: E \rightarrow \{0, 1\}} 
\prod_{v\in V}f_v(\sigma|_{E(v)})$,
where $E(v)$ denotes the incident edges  of $v$.
E.g., \#PM is the counting problem where each $f_v$ is the 0-1 valued \textsc{Exact-One} function.
%of arity $\deg(v)$, which is 1 iff its input has exactly one 1.
In planar Holant problems, denoted by $\operatorname{Pl-Holant}$,
$G$ is required to be  planar, 
and $f_v$ takes inputs from 
$E(v)$ which is given a cyclic order starting from some edge  (specified by $\Omega$).

%\sum\limits_{\sigma: E \rightarrow \{0, 1\}} 
%\prod\limits_{v\in V}f_v(\sigma|_{E(v)})$
%{\rm wt}(\sigma)$ where $wt(\sigma):= \prod\limits_{v\in V}f_v(\sigma|_{E(v)})$ is the product over all constraint functions with their inputs given by the adjacent edges. 

In this paper, we study a class of Holant problems
%with Boolean domain 
whose input graphs are planar, 3-regular and bipartite.
More precisely, let $f(x, y, z)= [f_0, f_1, f_2, f_3]$
be any ternary constraint function  which takes value $f_i \in  {\mathbb{Q}}$, if the input has Hamming weight $i$.  We allow both positive and negative
values.
 We study  \plholant{f}{(=_3)},
%${\rm wt}(x, y, z) = i$. We write  $f We study  \plholant{f}{(=_3)},
the Holant problem on planar, 3-regular bipartite
graphs where LHS vertices are assigned $f$ and RHS
vertices are assigned a ternary equality $(=_3)$.
Without  planarity, a complexity dichotomy was proved for 
 these bipartite Holant problem in~\cite{CaiFL21}.
 %~\cite{FanC21}.
 Planarity plus regularity add considerable difficulty.
%we 
%studied these bipartite Holant problem in~\cite{FanC21}
%and proved a complexity dichotomy.

%Bipartiteness and regularity
%impose a severe restriction and it causes significant
%technical difficulties. In particular, the only
%gadget on either side of the bipartite problem
%has a arity a multiple of 3.
%The planarity restriction is on its incident graph.

%The class of Holant problems we consider in this paper encodes

One can think of them as counting problems on
3-regular 3-uniform hypergraphs, or set systems where
every subset has cardinality 3 and every element appears in 3 subsets.
The planarity refers to its (bipartite) incidence  graph.
These include some well studied problems.
One long-standing open problem raised by Moore and Robson in~\cite{MooreR01}
is counting
\texttt{Cubic-Planar-X3C}, 
%Cubic Planar X3C 
(X3C stands for \textsc{Exact-3-Cover}), or equivalently Cubic Planar Monotone 1-in-3 SAT. Expressed as a Holant problem it is 
\plholant{[0,1,0,0]}{(=_3)}, 
where $[0,1,0,0]$ is the ternary \textsc{Exact-One} function.
%where $[0,1,0,0]$ is the ternary Exact-One function.
Schaefer~\cite{Schaefer78} proved that Monotone 1-in-3 SAT is NP-complete. 
Lichtenstein~\cite{Lichtenstein82} first considered the complexity of many
 planar  problems,
%%Planar Formulae and Their Uses
%David Lichtenstein
%   Published 1 May 1982
%    SIAM J. Comput.
    and Laroche~\cite{Laroche93} proved that
Planar Monotone 1-in-3 SAT is NP-complete.
Monotone 1-in-3 SAT is the same as X3C.
Dyer and Frieze~\cite{DyerF86} proved the NP-completeness of  
%Planar 1-in-3 SAT,
%and 
Planar X3C and 3DM where 
 each element is in either 2 or 3 subsets (of
cardinality 3).
Moore and Robson~\cite{MooreR01}, in a reduction
using ingenious combinatorial gadgets, showed that
this problem remains NP-complete when 
each element is in exactly 3 subsets. 
%Another equivalent form of the problem is Cubic Planar Monotone 1-3 SAT.
However, they 
noted that they  were not able to conclude the \numP-hardness of its counting version, which is 
precisely
 \plholant{[0,1,0,0]}{(=_3)}, while all previous NP-complete proofs listed here
 do extend to \#P-hardness for its counting version.
We observe that  these proofs are combinatorial, and they become
 increasingly  more delicate with  planarity and regularity restrictions.
%reduction proof becomes more delicate.
%and more delicate and difficult.

Our proof is carried out using
 the machinery of
signature theory developed in the study
of Holant problems. These are algebraic proofs
which show that the underlying combinatorial constructions succeed.
%have the desired properties.
%Its success 
This machinery demonstrates the power of using {\em algebraic} method 
to prove complexity results which are combinatorial in nature.
%presented
%as {combinatorial} questions.
Indeed, this is exactly in the spirit of Valiant's holographic algorithms
which use arithmetic cancellations to achieve reductions
that are globally valid for counting, but solutions do not
correspond in a 1-1 fashion (i.e.,
non-parsimonious reductions).

%Due to impose a severe restriction In~\cite{FanC21}, we initiated the study of bipartite Holant problem by proving a dichotomy theorem for \holant{f}{(=_3)} where $f = [f_0,f_1,f_2,f_3]$ is a ternary Boolean function with $f_i \geq 0$ where $1 \leq i \leq 3$. We strengthen the dichotomy result to its planar variant, identifying the only additional tractable problems are those transformable to matchgates by holographic reduction. 

%Arguably, 
One difficulty in working with 3-regular bipartite Holant problems is the severe limitation on the gadgets that  can be possibly constructed. One can  show that
 on either side of the bipartite problem,
 every constructible gadget defines 
a constraint function having  arity 
a multiple of 3. So in particular, one cannot directly
produce unary signatures, or binary signatures on either side.
One \emph{can} produce  ``straddled'' signatures that take some input variables
from one side and some 
%input variables 
from the other. Typically a ``degenerate'' signature
is not very useful in the proof of a dichotomy theorem. A counter-intuitive
idea from~\cite{CaiFL21} is to utilize straddled and degenerate
signatures, to ``virtually'' produce unary signatures.
This idea led to a complexity dichotomy for these bipartite
counting problems in the setting that ignores planarity.
%On the other hand, 
But the essence of Valiant's  holographic algorithm
and the study of Holant problems is to account for planar
tractability, and
we know there are problems in this class
that are \#P-hard in general but in FP
%P-time  tractable 
on planar graphs.
%, thus \#P-hardness proofs in~\cite{FanC21}
%cannot work in the planar setting.

In this paper we settle that by proving a planar complexity dichotomy.
%that includes the planar case.
A major technical challenge 
%for our proof 
%in this paper 
is how to ``virtually'' produce   unary signatures in a planar way.
%As a  crucial step
We prove
a pure graph-theoretic result that says that, except
in some trivial cases, every  
3-regular plane graph~\footnote{
A 3-regular graph is also called a \textit{cubic} graph. Properties of cubic planar graphs have been studied extensively~\cite{HoltonM88, AldredBHM00, NoyRR20, HeckmanT06, Scheim74}.
} has
a planar 3-way edge perfect matching (P3EM). 
We use it as an essential ingredient to the proof of the dichotomy.
%theorem.
This result should be of independent interest.
%~\footnote{3-regular plane graphs are delicate. E.g., the 4-color theorem is equivalent to
%every bridge-less 3-regular plane graph has a 3-edge proper coloring.}
%Our theorem
%on P3EM says that edges can be assigned to $\{0, 1\}$
%such that the sum on the boundary of every face is a multiple of 3.}.
%
%has to faces such that the number of 
%edges assigned to every face
%is a multiple of 3.}.
The proof technique to prove this  matching theorem is a combination of
algebraic and combinatorial methods.
This theorem lets us virtually
``manufacture'' and then ``absorb''
unary signatures in the \#P-hardness reduction. 
% Indeed, one can prove that
%no direct combinatorial construction can produce such constraint functions.
This allows us to carry out the needed \#P-hardness
 reductions in a planar way.

% In~\cite{FanC21}, we tackled it by interpolating degenerate straddled binary functions and use them as unaries. However, %{\em a priori} there is no guarantee that the 
% this method generally violates planarity.
 %will respect the planarity. 
% Indeed, some $f$ define 
 %there are some problems 
 %(specified by certain $f$)
% that are 
%\#P-hard problems without planarity, but
% are in FP for planar graphs. Thus, one cannot
% hope that all  \#P-hardness proofs of~\cite{FanC21}
% go through in the planar case. 
%As  a crucial step
% in this paper we prove a pure graph-theoretic theorem which states that
 %essentially 
%  every 3-regular plane graph (except in some trivial cases)
%  has 
%a planar 3-way edge perfect matching.
 %to assign each edge to one of its adjacent faces so that each face gets $0 \pmod{3}$ edges assigned. 
%This allows us to carry out the needed \#P-hardness
% reductions in a planar way.
 %show the reduction in~\cite{FanC21} can be done in a planar way.

%The class of Holant problems considered in this paper is only the beginning
%the simplest yet 
%a non-trivial 
%steps towards a classification of all planar bipartite Holant problems. Given the richness of results from this starting point, it is likely
%conceivable that
%many more interesting phenomena are yet to be discovered. 
%Together with them, the understanding of the structure of bipartite Holant problems merely started. 

\section{Preliminaries and Our Main Theorem}
A (symmetric) constraint function
(a.k.a. signature) of arity $n$
is $f = [f_0, f_1, \ldots, f_n]$, where $f_i$ denotes the function
value  on inputs of Hamming weight $i$.
E.g., the  ternary \textsc{Exact-One} function
is  $[0,1,0,0]$, and
the ternary \textsc{Equality}
function $\left(=_{3} \right)$
is  $[1,0,0,1]$.
In this paper we consider the following 
%subclass 
set of $\operatorname{Holant}$ problems, denoted by 
$\plholant{f}{(=_3)}$,
where $f$ is a ternary function. An input is a signature grid $\Omega$ consisting of a planar 3-regular bipartite graph $G = (U,V,E)$, where each vertex in $U$ is assigned 
%a ternary symmetric constraint function
%(also called a signature) 
$f= [f_0,f_1,f_2,f_3]$ with values $f_i \in \mathbb{Q}$, and each vertex in $V$ is assigned $\left(=_{3} \right)$. The Holant problem
is to compute
%the constraint function \textsc{Equality} of arity 3, denoted by 
 
%%% since we deal with symmetric functions. no need for cyclic order
%The input $\Omega$ also specifies for every signature a cyclic order starting from some edge. The problem is to compute the quantity
$$
\operatorname{Holant}\left(\Omega\right) = \sum_{\sigma: E \rightarrow \{ 0,1\}} \prod_{u \in U} f\left(\sigma |_{E(u)}\right) \prod_{v \in V} \left( =_{3} \right) \left(\sigma |_{E(v)}\right).
$$

%Equivalently, this  can be stated as a  weighted counting constraint satisfaction problem on Boolean variables
%defined on planar 3-regular bipartite graphs:
%The input is a planar 3-regular  bipartite graph $G = (U,V,E)$,
%where  every $v \in V$ is a Boolean variable and
% every $u \in U$ represents the nonnegative valued constraint function $f$.
% An edge $(u,v) \in E$ indicates that $v$ appears in
% the constraint at $u$. Being 3-regular means that
% every constraint has 3 variables and every variable
% appears in 3 constraints.
%$\CSP\left( \{f\}\right)$ where $f$ is a function $f:\{0,1\}^3 \rightarrow \mathbb{R}_{\geq 0}$ mapping Boolean variables to the non-negative (algebraic) real numbers. 

For clarity, we shall call vertices in $U$ are on the left hand side (LHS) and vertices in $V$ are on the right hand side (RHS). 
%
%%% JYC: above you already used "arity"
%%% in intro already defiend symmmtric notation.
%The number of inputs for a signature is called its {\em arity}. A signature with one, two, or three inputs is called a {\em unary, binary} or {\em ternary} signature, respectively. A Boolean signature is called {\em symmetric} if its value only depends on the Hamming weight of the inputs. We adopt the compact notation $f = [f_0,f_1,f_2,f_3]$ to represent the ternary symmetric signature $f$ where $f\left(0,0,0\right) = f_0$, $f\left(0,0,1\right)=f\left(0,1,0\right) = f\left(1,0,0\right) = f_1$, $f\left(0,1,1\right)=f\left(1,0,1\right) = f\left(1,1,0\right) = f_2$ and $f\left( 1,1,1\right) = f_3$. The \textsc{Equality} of arities 3 is $\left(=_{3} \right) = [1,0,0,1]$. We can also regard a Boolean 
%%% we never do any non-Boolean domain in this paper. so no need to say
%%% boolean. it further confused with boolean valued signatures.
We can write a  signature of arity $n$ as a vector in 
%the space 
$\mathbb{Q}^{2^n}$ indexed in lexicographical order. 
%With this point of view, 
%A symmetric signature  of arity $n$ can be specified by $n+1$ values,
%$f = [f_0, f_1, \ldots, f_n]$, where $f_i$ denotes the function
%value  on inputs of Hamming weight $i$.
A 
(symmetric) 
signature $f$ is  {\em degenerate} if there exists a unary signature $u \in \mathbb{C}^2$ such that $f = u^{\otimes n}$, the $n$th tensor power.

The main result of this paper is the following dichotomy theorem:
\begin{theorem}\label{thm:main}
\plholant{f}{(=_3)} where $f = [f_0,f_1,f_2,f_3]$ and  $f_i \in \mathbb{Q}$ {\rm (}$0 \le i \le 3${\rm )} is \numP-hard except in the following cases, for which the problem is in $\operatorname{FP}$:
(1) $f$ is degenerate;
(2) $f = [a, 0, 0, b]$, for some $a, b$;
(3) $f = [a,0,\pm a,0], [0,a,0,\pm a], [a,-a,-a,a], [a,a,-a,-a]$ for some $a$;
(4) $f = [a, b, b, a]$ or $f =[a,b,-b,-a]$, for some $a, b$;
(5) $f= [3a+b,-a-b, -a+b, 3a-b]$ for some $a, b$.
Without the
planar restriction, the 
problem 
$\operatorname{Holant}(f\, |\,(=_3))$ remains in $\operatorname{FP}$ in cases (1), (2) and (3), but \#P-hard in cases
(4) and (5).
\end{theorem}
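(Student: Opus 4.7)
The plan is to establish the theorem in two halves: first the tractability of cases (1)--(5), then \#P-hardness of everything else. For the tractable side, cases (1) and (2) are essentially product-type (a degenerate ternary is a tensor power of a unary, and $[a,0,0,b]$ evaluates as a product over connected components after matching $0$'s to $0$'s and $1$'s to $1$'s). Case (4) is classical Holant: $[a,b,b,a]$ lies in the image of matchgate signatures under a Hadamard-type holographic transformation that keeps $(=_3)$ tractable, and $[a,b,-b,-a]$ follows by an analogous transformation. Case (3) is verified case-by-case by diagonal or $Z$-transformations carrying $f$ into the affine or product-type class---which is why these remain tractable even without planarity. The really delicate tractability claim is case (5), $f=[3a+b,-a-b,-a+b,3a-b]$. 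I would show it by finding a holographic transformation that carries $f$ to a planar-tractable matchgate signature but which does \emph{not} send $(=_3)$ to a matchgate signature; the "global" ingredient is then to decompose the Holant sum according to the values carried by the transformed $(=_3)$ vertices and argue that, over a planar 3-regular bipartite graph, each piece reduces to an FKT computation on a derived planar graph.

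For the hardness side, the strategy is to reduce first to the non-planar dichotomy of \cite{CaiFL21}: for $f$ not in cases (1)--(3), the non-planar problem $\holant{f}{(=_3)}$ is already \numP-hard. The task is to upgrade this to planar \numP-hardness whenever $f$ also escapes cases (4) and (5). The essential obstacle, as emphasized in the introduction, is that every gadget constructible on a planar 3-regular bipartite instance has arity a multiple of $3$, so unary or binary signatures on a single side cannot be realized directly. Following the strategy of \cite{CaiFL21}, straddled and degenerate signatures virtually generate unary signatures; but in the planar setting these virtual unaries must be attached to boundary edges of the instance without introducing crossings.

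This is exactly where the planar 3-way edge perfect matching theorem is deployed. Given a planar 3-regular instance $G$, the P3EM produces a partition of $E(G)$ into triples each bounded by a common face, so we can planarly "wrap" each triple by a small face-local gadget that emulates $u^{\otimes 3}$ for a desired virtual unary $u$. Globally, this replaces the original signature grid by one in which the needed unary-interpolation reduction runs entirely in the plane, allowing us to reach a previously known planar \numP-hard Holant problem (such as a planar \#CSP anchor, or a subclass of the planar Holant$^*$ dichotomy). The trivial cases excluded by the P3EM theorem match exactly those small structural cases (which, one checks, fall into cases (1)--(3) or reduce to \#PM on small components) where the reduction is anyway immediate.

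The main obstacle I expect is a pair of delicate endpoints. On the tractability side, case (5) is the only class not reducible to a purely local transformation, so the FP algorithm must combine a holographic change of basis with a global counting argument, and verifying it is a one-parameter family that meets the hardness boundary transversally is non-trivial. On the hardness side, the reductions must stay planar while avoiding all five tractable islands; controlling the interpolation through P3EM-produced virtual unaries---and in particular showing that the interpolation points reach a generic dense subset---requires careful eigenvalue/rank analysis of the companion matrices induced by the degenerate-straddled signatures, and must be done uniformly over the complement of cases (1)--(5).
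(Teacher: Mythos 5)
Your high-level picture is correct on the tractability side, and you rightly identify the arity-multiple-of-$3$ obstruction and P3EM as central ingredients. But the hardness strategy as proposed has a genuine gap, and there are two special cases you would not be able to reach by your outline.

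\textbf{The ``upgrade'' strategy does not exist as a general mechanism.} You propose to start from the non-planar hardness of \cite{CaiFL21} and ``upgrade'' it to the plane. There is no black-box way to do this; planar hardness demands planar reductions, built from scratch. What the paper actually anchors on is the Kowalczyk--Cai \emph{planar} dichotomy for $\plholant{[a,1,b]}{(=_3)}$ (Theorem~\ref{previous 2-3} / Corollary~\ref{2-3}). The interpolation machinery from \cite{CaiFL21} is re-used as a \emph{technique}, not as a hardness fact: one interpolates a degenerate binary straddled signature $D = \smm{y}{xy}{1}{x}$, applies the $[1,x]$ side to each edge of the underlying $3$-regular plane graph $G'$ to produce a binary signature on one side, and then hits Corollary~\ref{2-3}. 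The role of P3EM is only to absorb the leftover $[y,1]$ danglers three at a time within faces, producing a nonzero global scalar. Your description of ``wrapping each triple by a gadget emulating $u^{\otimes 3}$'' has the mechanism inverted: the triples are not where the \emph{desired} unary is deployed; they are where the \emph{unwanted} complementary unaries are discarded. Moreover, the paper's Remark~\ref{remark8} makes a subtle but essential point: this trick works only because the instance is the edge-vertex incidence graph of a $3$-regular plane graph, with \emph{every} edge participating; if you merely had free access to $[1,x]$ you would not be able to invoke Theorem~\ref{thm:P3EM}, so the interpolated degenerate straddled signature is genuinely \emph{weaker} than a free unary. An argument that treats them as interchangeable would break.

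\textbf{Two cases escape your framework entirely.} First, the problem $\plholant{[0,1,0,0]}{(=_3)}$ (the Moore--Robson problem) cannot be reached via the straddled-signature/P3EM path: the paper's Theorem~\ref{thm:[0,1,0,0]} instead reduces from Dyer and Frieze's parsimonious planar X3C construction and needs a bespoke planar cross-over-pinned-$0$ gadget to import the $[0,1,0]$ signature in triples across crossings; the remark following that theorem explicitly notes that the P3EM argument does not apply there. Second, the one place where a ``non-planar hardness $\Rightarrow$ planar hardness'' reduction is actually deployed is Lemma~\ref{lemma:[1,a,1,a]}, and it only works because for the specific family $[1,a,1,a]$ a cross-over signature $\mathcal{C}$ can itself be planarly interpolated (Lemma~\ref{lemma:cross-over}) via the gadget $G_4$ and a Vandermonde argument; this is not available for generic $f$. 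Your proposal conflates these two quite different phenomena. Finally, on tractability of case (5), your sketch (``transform $f$ to a matchgate but not $(=_3)$'') is not quite what happens: under the Hadamard transformation both sides become matchgate-like $[0,0,2a,2b]$ and $[1,0,1,0]$, and the global argument is a degree-counting observation (each nonzero term forces exactly two $1$'s at every vertex because the bipartite graph is $3$-regular on both sides), which zeroes out the $2b$ and the weight-$0$ entry, reducing to counting perfect matchings. That degree-counting step is the ``global'' ingredient, and it needs to be stated precisely to see why this one-parameter family is tractable exactly at the hardness boundary.

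Overall: the skeleton of your plan (P3EM to enable planar interpolation, holographic plus global argument for case (5)) matches the paper, but the hardness side is missing its actual anchor (the Kowalczyk--Cai planar $[a,1,b]$ dichotomy), mishandles the P3EM absorption mechanism, and omits the two exceptional families $[0,1,0,0]$ and $[1,a,1,a]$ that need entirely separate arguments.
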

%\begin{theorem}\label{thm:main}
%\plholant{[f_0,f_1,f_2,f_3]}{(=_3)} where $f_i \geq 0$ {\rm (}$0 \le i \le %3${\rm )} is \numP-hard except in the following cases, for which the problem is %in $\operatorname{FP}$:
%(1) $[f_0,f_1,f_2,f_3]$ is degenerate;
%(2) $f_1=f_2=0$;
%(3) $[(f_0=f_2=0) \wedge (f_1=f_3)]$ or $[(f_1=f_3=0) \wedge (f_0=f_2)]$;
%(4) $(f_0=f_3) \wedge (f_1=f_2)$.
%\end{theorem}

In case (1), the signature $f$  decomposes into
three unary signatures. In case (2), $f$ is a generalized
equality. In  case (3), $f$ is in the affine class. In case (4), the Holant problem is transformable to planar \#PM with matchgates
(see more details about these tractable classes in~\cite{cai2017complexity}). In case (5), the planar P-time tractability is 
 \emph{neither} by  Valiant's holographic reduction alone,
  \emph{nor} entirely independent from it. Rather it is by a 
  combination of a holographic reduction together with
  a global argument. 

As mentioned in Section~\ref{sec:intro},
counting \texttt{Cubic-Planar-X3C} is just
\plholant{[0,1,0,0]}{(=_3)}, the counting problem of Moore and Robson~\cite{MooreR01}.
%Cubic Planar \textsc{Exact-3-Cover},
It clearly belongs to this class.
It is also equivalent to Cubic Planar Monotone 1-in-3 SAT. 
By Theorem~\ref{thm:main}, it is \#P-complete.
%settles this open problem: 

To see that case (5) is planar tractable,
we prove 
for any $a$ and $b$,
the value of \plholant{f}{(=_3)} for $f = [3a+b, -a-b, -a+b, 3a-b]$ on any planar signature grid exactly equals   the value of \plholant{[0,2a,0,0]}{[0,1,0,0]} on the same signature grid,
and thus can be computed by the FKT algorithm
for counting perfect matchings. Indeed, by a holographic transformation
using $H = \left[\begin{smallmatrix}1 & 1 \\ 1 & -1\end{smallmatrix}\right]$ we have the following sequence of equivalences:
%, we get the following equivalence by performing a holographic transformation:
\begin{eqnarray*}
   %\begin{split}
       \plholant{f}{(=_3)} & \equiv_{T} & \plholant{fH^{\otimes 3}}{(H^{-1})^{\otimes 3}(=_3)}\\  &  \equiv_{T} & \plholant{[0,0, 2a,2b]}{[1,0,1,0]}\\& \equiv_{T} & \plholant{[0,0,2a,0]}{[0,0,1,0]} \\  &\equiv_{T} & \plholant{[0,2a,0,0]}{[0,1,0,0]} 
    % \\&  \begin{cases} = 0\hfill  \text{if $a=0$} \\\equiv_{T}\holant{[0,1,0,0]}{[0,1,0,0]} \hfill  \text{else}
    %   \end{cases}
   %\end{split}
\end{eqnarray*}
where the third equivalence follows from the observation that for each nonzero
term in the Holant sum, every vertex on the LHS has at least two of three edges
assigned 1 (from $[0, 0, 2a, 2b]$), meanwhile every vertex on the RHS
has at most two of three edges assigned 1 (from $[1, 0, 1, 0]$). The graph being
bipartite and 3-regular, the number of vertices on both sides must equal, thus
every vertex has exactly two incident edges assigned 1.

% MK3
  An example of a planar tractable
  problem that belongs to case (5) is as follows.
 It can be viewed as a covering problem on 3-uniform hypergraphs of degree 3.
We say  $(X, {\cal S})$  is a 3-regular $k$-uniform set system
(or 3-regular $k$-uniform hypergraph),
%$r$-regular $k$-uniform set system,
if ${\cal S}$ consists of a family of sets $S \subset X$ each of
size $|S| =k$, and every $x \in X$ is in exactly 3 sets.
%$r$ sets.
If $k=2$ this is just an ordinary 3-regular graph
(where the $2$-subsets are ordinary edges).  We consider
%$r$-regular graph.  We consider
 3-regular $3$-uniform set systems.
 We say ${\cal S'}$ is a \emph{leafless partial cover} 
 if every $x \in \bigcup_{S \in {\cal S'}} S$ belongs to more than
 one set $S  \in {\cal S'}$. We say $x$ is \emph{lightly covered}
 if $|\{S \in {\cal S'} : x \in S\}|$ is 2,
 and \emph{heavily covered} if this number is 3.
  
  \vspace{.1in}
\noindent$\mathbf{Problem:}$  \texttt{ Weighted-Leafless-Partial-Cover}.

\noindent$\mathbf{Input:}$ A 3-regular $3$-uniform set system
 $(X, {\cal S})$.

\noindent$\mathbf{Output:}$ $\sum_{\cal S'} (-1)^l 2^h$,
where the sum is over all leafless partial covers  ${\cal S'}$,
and $l$ (resp. $h$) is the number of $x \in X$ that are
{lightly covered} (resp. {heavily covered}).
  \vspace{.1in}

%One can show that 
Expressed in the Holant framework this problem is just
\holant{f}{(=_3)}, where $f = [1, 0, -1, 2]$.
This problem belngs to case (5)  with $a=1/2$ and $b= - 1/2$.

\begin{figure}[ht]
    \centering
    \begin{tikzpicture}[scale=0.5]
        \draw[very thick] (-4,0)--(0,2);
        \draw[very thick] (-4,0)--(-2,0);
        \draw[very thick] (-4,0)--(0,-2);

        \draw[very thick] (4,0)--(0,2);
        \draw[very thick] (4,0)--(2,0);
        \draw[very thick] (4,0)--(0,-2);

        \draw[very thick] (0,1)--(0,2);
        \draw[very thick] (0,1)--(2,0);
        \draw[very thick] (0,1)--(-2,0);

        \draw[very thick] (0,-1)--(0,-2);
        \draw[very thick] (0,-1)--(2,0);
        \draw[very thick] (0,-1)--(-2,0);
        
        \filldraw [blue] (0,2) circle (4pt);
        \filldraw [blue] (0,-2) circle (4pt);
        \filldraw [blue] (2,0) circle (4pt);
        \filldraw [blue] (-2,0) circle (4pt);
        
        \filldraw [red] (-4,0) circle (4pt);
        \filldraw [red] (4,0) circle (4pt);
        \filldraw [red] (0,-1) circle (4pt);
        \filldraw [red] (0,1) circle (4pt);
    \end{tikzpicture}
    \caption{A small instance for the problem \texttt{Weighted-Leafless-Partial-Cover}}
    \label{problem example}
\end{figure}

Figure~\ref{problem example} illustrates a small instance of this problem. Blue dots represent elements $x$ and red dots represent the family of sets $S$. An element $x$ is contained in a set $S$ if and only if the blue dot for $x$ is connected to the red dot for $S$. It is not hard to see that there are exactly 6 leafless partial covers, which are $\emptyset$, any family of 3 sets (there are 4 of them), and the family of all 4 sets. Therefore, the Holant value of this instance is $1 + 4(-1)^32^1+2^4 = 9$. 
One can also verify that there are 
exactly 9 distinct perfect matchings in the graph
in Figure~\ref{problem example}.

Therefore, it is known that cases (1)--(5) are
in $\operatorname{FP}$. The main claim lies in that all other cases are \numP-hard over planar graphs.
The cases (4) and (5) capture precisely those problems
that are \#P-hard on general but in FP on planar graphs; neither case alone does that.
% When proving our main result, we shall apply the following dichotomy theorem on 2-3 planar Holant problems~\cite{KowalczykC16}:

% \begin{theorem}\label{thm:previous}
% Suppose $a,b \in \mathbb{C}$, and let $X = ab$, $Y = a^3+b^3$. Then \plholant{[a,1,b]}{(=_3)} is \numP-hard except in the following cases, for which the problem is in $\operatorname{FP}$:
% (1) $X=1$;
% (2) $X=Y=0$;
% (3) $X=-1$ and $Y=0$;
% (4) $X=-1$ and $Y^2=-4$; and
% (5) $4X^3=Y^2$.
% \end{theorem}

% We note that the case (5) $4X^3=Y^2$ in Theorem~\ref{thm:previous} is the only class of problems which are tractable over planar graphs but \numP-hard in general for \plholant{[a,1,b]}{(=_3)}. Its tractability is given by holographic transformation to the FKT algorithm.

%%%%%%%%%%%JYC 2/9/2023 I edited till here
%%%%%%%%%%%% now 11 pm
%%%%%%%%%%%
%%%%%%%%%%%

A gadget in this paper, such as those illustrated in Figure~\ref{f1} and Figure~\ref{a nice gadget},
is a planar 3-regular bipartite graph $G = (U, V, E_{\rm in}, E_{\rm out})$ with
internal edges $E_{\rm in}$ and dangling edges $E_{\rm out}$.
There can be $m$ dangling edges internally incident
to vertices from $U$ and $n$ dangling edges internally incident
to vertices from $V$. These $m+n$ dangling edges 
correspond to Boolean variables $x_1, \ldots, x_m, y_1, \ldots,  y_n$
and the gadget defines
a signature
\[f(x_1, \ldots, x_m, y_1, \ldots,  y_n)
=
\sum_{\sigma: E_{\rm in} \rightarrow \{ 0,1\}} \prod_{u \in U} f\left(\widehat{\sigma} |_{E(u)}\right) \prod_{v \in V} \left( =_{3} \right) \left(\widehat{\sigma}  |_{E(v)}\right),
\]
where $\widehat{\sigma} $ denotes the extension
of $\sigma$ by the assignment on the  dangling edges.
The variables $x_1, \ldots, x_m$ (respectively, $y_1, \ldots,  y_n$) are called 
LHS (respectively, RHS) variables and are to be connected 
externally to RHS (respectively, LHS) signatures
in $\plholant{f}{(=_3)}$.

%%% jyc the following para can be omitted in 10-page version
%To preserve the bipartite structure, we must be careful
%with any gadget construction in how each  external wire is
%to be connected.
%A LHS variable is internally connected in the gadget to a 
%signature on the LHS and must be externally  connected to a RHS signature.
%Conversely, a RHS variable is internally connected in the %gadget to a 
%signature on the RHS and must be externally  connected to a LHS signature.

Gadgets that are constructible for \plholant{f}{(=_3)} is severely limited due to  planarity and bipartiteness.
%, and a curious number-theoretic constraint. 
Suppose $g$ is the signature
%constraint
%function  
of a gadget construction with all of its variables  on the LHS.
%in~\cite{FanC21} 
Then by a simple counting argument, the arity of $g$ must be a multiple of 3.
The same is true for a  gadget construction
with all of its variables  on the RHS.
In particular, one cannot hope to produce any unary or binary signature on either side. However, being able to have unary or binary signatures at hand has been proven to be very useful in studying Holant problems.

To tackle this difficulty,
%in~\cite{CaiFL21} we considered {\em straddled} gadgets, 
{\em straddled} gadgets were introduced~\cite{CaiFL21}
which have both LHS and RHS  variables. For example, the gadget $G_1$ in Figure~\ref{f1}, after we place $f$ 
on the square vertex and  $(=_3)$ 
on  the circle vertex, 
has one variable on the LHS (the dangling edge that connects to  a square)  and one variable on the RHS (the dangling edge that connects to  a circle).
We list the values of 
a signature $f$ in a {\em signature matrix} $M_f$ where the row(s) $R$ and column(s) $C$ correspond to assignments,
\newcommand{\cupdot}{\mathbin{\mathaccent\cdot\cup}}
in lexicographic order, of input variables $X = R \cupdot C$.
 We may identify $f$ with $M_f$. 
 When two signatures  $M_f$ and $M_g$
 are composed by merging the dangling edges
 of the column variables of $f$ with the row variables of $g$, 
 the signature matrix of the resulting signature is the matrix product $M_fM_g$.
 %in such a way that respect bipartiteness,
% the signature matrices are multiplied.  
 %the
%representation of their signature matrices, the resulting signature can be computed as the matrix multiplication.
In our paper, the composition must respect the bipartiteness and planarity. Also, note that if a straddled gadget has $m$ dangling edges to be connected to RHS and $n$ dangling edges to be connected to LHS, then $m-n \equiv 0 \bmod 3$.
%We adopt the notation for the \emph{signature matrix} of
%a straddled signature that the row indices correspond to inputs for LHS variables and column indices  correspond to inputs for RHS variables. For example, the signature matrix for $G_1$ in Figure~\ref{some gadgets G_1} will be $\left( \begin{smallmatrix} f_0 &f_2 \\ f_1 &f_3\end{smallmatrix}\right)$. A binary straddled signature will be called {\em degenerate} if its signature matrix is of rank 1.

One crucial
%instrumental 
idea in~\cite{CaiFL21} is to interpolate \emph{degenerate} straddled binary signatures and use them as two unary signatures;
one of which is desired and the other is to be grouped together  to form an easily computable positive constant, which does not affect the complexity. 
%To  interpolate a signature $h$ from other given signatures means that
%there is a reduction from any instance $I$ where $h$ occurs,
%to polynomially many instances $\{I_s\}$ where $h$ does not occur,
%and from the Holant values of these polynomially many $\{I_s\}$ 
% we can compute the Holant value of $I$ in polynomial time. 
%
%
%We now recall the proof strategy in~\cite{FanC21} and motivate the study of Planar 3-way Edge Matching (P3EM). There, constrained by the limited number of gadgets that one could possibly construct in 3-regular bipartite Holant problems, we interpolate degenerate straddled binary signatures and connect them to $f$ on LHS. We then use up all other ends by connecting to $=_3$ on RHS. Since the straddled signature is degenerate, this will create an easily computable non-zero factor and yields a reduction from 2-3 regular bipartite Holant problems. 
However, the ``grouped together'' process destroys the planar structure and thus the reduction  fails  for planar graphs. However,
we can make it work  for planar graphs  if we can group
these leftover unaries three at a time within each face. 
This is where planar 3-way edge matching (P3EM) comes in.
Our theorem on P3EM will allow us to do that.
%assign for each degenerate straddled binary signature one of the two adjacent faces so that every face 
%gets assigned a multiple of 3 times. Starting from a 2-3 regular bipartite plane graph $G'$, if we consider it as the edge-vertex incidence graph of a 3-regular plane graph $G$, then the previous requirement is equivalent to assign each edge in $G$ to one of its two adjacent faces so that every face gets $0 \pmod{3}$ edge(s) assigned.

More formally, let $G = (V, E)$ be an undirected plane graph, i.e., a planar graph
with a given planar embedding. We allow $G$ to be a multi-graph, i.e.,
parallel edges  and self-loops are allowed. A planar 
3-way edge matching (P3EM) is a partition of $E$ into a collection $M$
of 3-edge subsets $E = \bigcup_{t \in M} t$ such that we can
add one vertex $v_t$ for each $t \in M$ and connect $v_t$ to
the three edges of $E$ in $t$ so that the resulting graph is
still a plane graph. In Section~\ref{section:P3EM}, we prove that a P3EM always exists for any plane 3-regular graph (except for some trivial cases) and, moreover, can be constructed in polynomial time. An often-used technique in dealing with plane graphs is first taking a spanning tree of the dual graph and picking a root node, e.g., the node associated to the outer face. Starting from a leaf, one argues that some invariant property can be ``propagated'' through the tree until finally reaching the root. This technique is used in~\cite{MooreR01} as well as in the proof of previous dichotomies concerning planarity~\cite{CaiF17, CaiFGW22}. However, this technique does not work in this case. New techniques have to be invented.
The proof of our P3EM theorem is a mixture of algebra and combinatorics, and it should be of independent interest.

\section{Planar 3-way Edge Matching (P3EM)}\label{section:P3EM}
We begin with the following lemma.

\begin{lemma}\label{lem:P3EM-equivalce}
A 3-regular plane graph $G$ has a P3EM iff
there is an assignment that assigns 
each edge to an adjacent face
so that the number of edges assigned to each face is $0 \bmod 3$.
\end{lemma}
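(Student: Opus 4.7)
The plan is to establish a natural correspondence between triples in a P3EM and the groups of edges assigned to each face.

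For the forward direction, suppose $G$ has a P3EM with collection $M$ of triples. Consider any triple $t = \{e_1, e_2, e_3\} \in M$. In the planar embedding after adding the new vertex $v_t$ with its three incident edges to $e_1, e_2, e_3$, since subdividing an edge preserves the face structure, $v_t$ must lie inside some face $F_t$ of $G$, and each $e_i$ must bound $F_t$. In other words, the three edges of each triple share a common face, which I will call $F_t$. Now define the assignment $\alpha$ by setting $\alpha(e_i) = F_t$ for each $e_i \in t$. Since $M$ partitions $E$, $\alpha$ is well-defined on all of $E$ and assigns each edge to an adjacent face. For any face $F$, the preimage $\alpha^{-1}(F)$ is the disjoint union of the triples $t$ with $F_t = F$, so $|\alpha^{-1}(F)|$ is a multiple of $3$.

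For the backward direction, suppose we have an assignment $\alpha : E \to \text{Faces}(G)$ such that each face $F$ receives $|\alpha^{-1}(F)| = 3k_F$ edges for some non-negative integer $k_F$. For each face $F$, walk along the boundary of $F$ in the planar embedding; this walk induces a cyclic sequence of \emph{flags} (edge-sides incident to $F$). For each edge $e$ with $\alpha(e) = F$, select one flag of $e$ lying on $F$ (either of the two flags if $e$ is a bridge, since both lie on $F$; otherwise the unique such flag). This gives $3 k_F$ selected flags in a cyclic order along the boundary of $F$. Partition the selected flags into $k_F$ consecutive triples. For each such triple $\tau$, place a new vertex $v_\tau$ inside a small disk in $F$ close to the three consecutive flags, and draw three curves inside $F$ from $v_\tau$ to each of the flags of $\tau$; because the flags are consecutive in the cyclic boundary order, these curves can be drawn pairwise disjoint and disjoint from all other such groups, preserving planarity. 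Doing this independently for each face yields a partition of $E$ into triples together with a valid planar embedding of the augmented graph, which is exactly a P3EM.

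The forward direction is essentially forced by planarity, while the backward direction amounts to the observation that any cyclic sequence of length divisible by $3$ admits a non-crossing partition into triples (e.g., consecutive triples). The main bookkeeping obstacle is handling non-simple faces arising from bridges or cut vertices, where the boundary of $F$ is a closed walk rather than a simple cycle and a single edge may appear twice along it; working with flags rather than edges handles this cleanly, because the assignment $\alpha$ still selects one flag per assigned edge and consecutive triples of flags can still be joined to an interior vertex without crossings. No recursion on a dual spanning tree is needed here; this is a purely local face-by-face argument.
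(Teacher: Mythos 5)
Your forward direction matches the paper's, and your backward direction for \emph{connected} $G$ also matches the paper's key idea (collect consecutive triples along the boundary walk of each face); your flag formalism is a clean way to make the bridge/non-simple-face case precise, which the paper elides. However, your claim that this is a ``purely local face-by-face argument'' glosses over a real gap: when $G$ is disconnected, a face $F$ need not be a disk, and its boundary is not a single closed walk but a disjoint union of several closed walks, one for each component of $G$ bordering $F$. In that case there is no single cyclic order of the flags incident to $F$, so the phrase ``walk along the boundary of $F$'' and the subsequent ``consecutive triples'' partition do not apply. It is not even automatic that the assigned flags on \emph{each} boundary component of $F$ number $\equiv 0 \pmod 3$ — the hypothesis only gives that for $F$ as a whole — so without an extra argument one might be forced into triples spanning two boundary components, and one then has to verify that those can be drawn without crossings, which your write-up does not do.

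The paper handles exactly this by a separate inductive step for disconnected $G$: it cuts the sphere along a circle $S$ in a face $F$ separating $G$ into $G_1 \cup G_2$, uses $3$-regularity to get $|E(G_i)| \equiv 0 \pmod 3$, then combines this with the $0 \bmod 3$ counts on the faces interior to each $D_i$ to conclude that the number of $G_i$-edges assigned to $F$ is itself $\equiv 0 \pmod 3$; induction on components then gives P3EMs whose $v_t$ points land in the right regions. You would either need to import this counting observation (so that each boundary component of $F$ has a multiple of $3$ assigned flags, after which your local construction applies per boundary walk), or explicitly argue the non-crossing drawing for triples that straddle two boundary walks of an annular face. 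Either way, the disconnected case requires one more idea than your proof currently provides.
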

\begin{proof}
If $E = \bigcup_{t \in M} t$ 
is a P3EM, then for  every  3-edge subset $t \in M$,
the point $v_t$ belongs to a face 
adjacent to all three edges in $t$. This gives the assignment of $E$.

Conversely, suppose there is such an assignment $\sigma$ for $G$,
and we first assume $G$ is a connected plane graph.
%there is an assignment of $E$ that assigns each $e \in E$
%to exactly one adjacent face of $e$
%such that within every face $F$, there are $k_F \equiv 0 \bmod 3$ 
%many edges assigned to $F$, 
Then a partition of $E$ into 3-edge subsets
can be obtained by collecting  consecutive triples from edges that are assigned toward any face $F$ along a cyclic traversal of the boundary of $F$. 
This produces  a P3EM for $G$. 

Now suppose
$G$ is disconnected and we consider $G$ 
as given on the sphere $S^2$. There is a simple closed curve $S$
(homeomorphic to a circle $S^1$) disjoint from $G$
separating $S^2$ into two discs $D_1$ and $D_2$, and
separating  $G$  into two nonempty
disjoint plane graphs $G_1 \cup G_2$, with $G_i \subset D_i$ ($i=1,2$).
$S$ is contained in  some face $F$. For   $i=1,2$,
every face other than $F$ in $D_i$ is assigned
by $\sigma$ edges  from $E(G_i)$ only, and the number of which is $0 \bmod 3$. 
Since $G_i$ is 3-regular, we have $|E(G_i)| \equiv 0 \bmod 3$.
Hence the number of edges from $E(G_i)$ assigned to $F$ is also $0 \bmod 3$.
Thus the restriction of $\sigma$ to $E(G_i)$ is an edge assignment  for $G_i$ that satisfies the stipulation
in the lemma statement. Formally, for $G_1$ we can remove $G_2$,  then $F$ becomes
an extended face $\widehat{F}$ containing $D_2$, and
we get an assignment $\sigma_1$ from $E(G_1)$
to the set of faces of $G$ in $D_1$ together with  $\widehat{F}$.
For  $G_1$ we can contract $D_2$ to a single point, and $\widehat{F}$ becomes essentially the intersection of $F$ with $D_1$. 
Similarly we have an assignment $\sigma_2$ for $G_2$.

 By induction, we have a P3EM for both $G_1$ and $G_2$, made up of triples of edges of $G_1$ 
and $G_2$ separately.  
Due to the contraction of $D_2$ for $G_1$ the triples $t$ assigned to $F$
from $G_1$ correspond to points $v_t$ inside $D_1$. The same statement is true for $G_2$. This removes any potential
interference with planarity when putting the two P3EMs together to form a P3EM for $G$.
%
%
%$G= G' \cup G_c$,
% where $G_c$ is a connected component.
% By considering the given plane graph as on the sphere and choosing a suitable point to perform a stereographic projection, we may assume the
% given embedding is such that
%$G'$ is a plane graph placed in
%the outer face $F_c$ of  $G_c$. Then the edges assigned to each  face
%of $G_c$ other than $F_c$ are exclusively from $G_c$,
%and their numbers   being all $0 \bmod 3$
%imply that the  number assigned to $F_c$ from $G_c$
%is also $0 \bmod 3$, since $G_c$ is 3-regular.
%Hence the above argument applies to $G_c$,
%producing a P3EM for $G_c$.
%By induction, there is a P3EM for $G'$.
%In particular, the  number of edges
%assigned to $F_c$ from $G'$ 
%is also $0 \bmod 3$. Now  we can place
%one vertex in each face, including the shared face $F_c$, 
%for each 3-edge subset
%from both P3EMs of $G'$ and $G_c$,
% and form a P3EM for $G$ preserving planarity.
\end{proof}

Thus to prove the existence of a P3EM of $G$ we will 
prove the existence  of such an assignment. It also
follows that $G$ has a P3EM iff each connected 
component of $G$ does.

Plane graphs $G_i= (V_i, E_i)$, $(i=1,2)$, are  planarly isomorphic if there exists an 1-1 correspondence  of $V_1$ and $V_2$ such that
it induces a 1-1 correspondence  of the edges and faces
by incidence.
Clearly, having a P3EM is a property preserved by planar isomorphism.

\begin{theorem}\label{thm:P3EM}
Every 3-regular plane graph, except for those containing a connected component
$K_4$ or the multi-graph  $M_{2,3}$ on 2 vertices with 3 parallel edges, admits a planar 
3-way edge matching, and one can be found in polynomial time. 
%(P3EM).
\end{theorem}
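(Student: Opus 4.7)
The plan is to apply the equivalence from the preceding lemma and produce, for every 3-regular plane graph $G$ not containing $K_4$ or $M_{2,3}$ as a component, an assignment $\sigma\colon E\to F$ of each edge to one of its two incident faces such that $|\sigma^{-1}(f)|\equiv 0 \pmod 3$ for every face $f$; the lemma also reduces us to the case that $G$ is connected. I would then proceed by induction on $|V(G)|$, with the base cases of very small cubic plane graphs verified by direct inspection. This is precisely where $K_4$ and $M_{2,3}$ emerge as the only obstructions: $K_4$ is self-dual with all degrees equal to $3$, forcing every in-degree in $\{0,3\}$ which is incompatible with any edge crossing a connected triangulation, and the dual of $M_{2,3}$ is the triangle $K_3$ whose maximum possible in-degree is $2$.

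For the inductive step I would adopt the dual perspective, viewing the assignment as an orientation of the edges of the plane triangulation $G^*$ whose in-degree at each vertex is divisible by $3$. This furnishes the \emph{algebraic} half of the argument: the sought orientation is a $\{0,1\}$-vector $x \in \{0,1\}^{E}$ satisfying a linear system over $\mathbb{F}_3$ whose coefficient matrix is the signed face--edge incidence matrix of $G$. The only global obstruction over $\mathbb{F}_3$ is that the target vector has coordinate sum $0$, which holds because $|E(G)| = 3|V(G)|/2$ is always divisible by $3$ when $G$ is 3-regular (since $|V(G)|$ is even). Hence an $\mathbb{F}_3$-solution always exists; the real difficulty is to lift such a solution to an honest $\{0,1\}$-valued orientation.

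The \emph{combinatorial} half comes from Euler's formula $|V|-|E|+|F|=2$ together with $3|V|=2|E|$, which yields $\sum_f (6-d_f) = 12$ and forces a face of degree at most~$5$ to exist. Around such a small face I would look for a reducible configuration---for instance, a triangular face whose three edges can be jointly assigned to it (then contracting the triangle to a single vertex to preserve 3-regularity), a digon that forces both of its edges onto the opposite face (then collapsing the digon), or a bridge that splits $G$ into smaller cubic plane pieces. In each case, the inductively produced P3EM of the reduced graph is combined with the explicit local assignment to yield a P3EM of $G$; polynomial-time constructibility follows because each reduction step and each local lift is polynomial.

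The hard part, I expect, is selecting the reductions so they always apply away from the two base cases and, crucially, do not create $K_4$ or $M_{2,3}$ components in the reduced graph. As the authors themselves stress, the standard ``propagate an invariant along a spanning tree of the dual'' technique is unavailable here, because the mod-$3$ constraint at a face cannot be locally discharged one face at a time---residues can only cancel in groups of three, which is a genuinely global effect. Consequently the reductions must be chosen in concert with the $\mathbb{F}_3$-solvability analysis, and a careful case analysis of configurations around small faces, around bridges, and around the exceptional components is likely to be unavoidable.
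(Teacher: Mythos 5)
Your high-level plan matches the paper's: reduce to the face-assignment formulation via the preceding lemma, induct on $|V(G)|$, verify small base cases by hand, and reduce around a small face guaranteed by Euler's formula. The reasoning why $K_4$ and $M_{2,3}$ fail is also essentially correct. However, there is a genuine gap: you have identified where the difficulty lies without resolving it, and the part you leave open is precisely where all of the work in the paper's proof sits.

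Concretely, two things are missing. First, the $\mathbb{F}_3$ linear-algebra observation, while correct (the signed face--edge incidence matrix of a connected plane cubic graph has corank one over $\mathbb{F}_3$, and the single consistency condition $|E|\equiv 0\pmod 3$ is automatic), does no work toward the theorem: the system's solutions live in $\mathbb{F}_3^E$ but the theorem requires a $\{0,1\}^E$ solution, and there is no general mechanism for rounding a $\{0,1,2\}$-solution to a $\{0,1\}$-solution. You acknowledge this, but the proposal offers nothing to bridge it, so the algebraic half of your argument is a dead end as stated. Second, and more importantly, your list of reducible configurations (triangular faces, digons, bridges) is too short: after handling self-loops, parallel edges, triangles, bridges, squares, and chords, one is forced (by exactly the Euler computation you give) into the case of a \emph{pentagon} face, and this is the genuinely hard case. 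The paper spends most of its proof there: one must establish that the five outer neighbors $b_0,\dots,b_4$ of the pentagon are distinct (ruling out $b_i=b_j$ via a separate pair of auxiliary graphs $G_1,G_2$), that the pentagon together with the $b_i$'s and the five boundary paths between them have a very specific structure (via the $a_i$-R path argument), and then that the induction can be completed by replacing the pentagon with a triangle and solving a system $(\Sigma)$ of linear equations mod 3 \emph{subject to the constraint that all variables are Boolean} --- a constraint satisfaction problem that is resolved by an explicit, nontrivial case analysis (four geometric cases up to a reflection symmetry). None of this is touched in your outline, and it is not a routine ``careful case analysis''; the Boolean constraint is exactly what makes the $\mathbb{F}_3$ solvability argument insufficient, and the mixture of algebra (the system $(\Sigma)$) and combinatorics (the geometric case-by-case lift) is the heart of the proof.

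Two smaller points: you need to be careful that the reduction steps for bridges, chords, and pentagons each split the graph into pieces or replace it by a smaller graph that is provably not $K_4$ or $M_{2,3}$ (the paper checks this explicitly at each step), and the chord reduction (which you omit) is needed to rule out the possibility that the pentagon's external neighbors $b_i$ coincide with pentagon vertices or create degenerate configurations. Without these intermediate reductions, the pentagon analysis would face many more exceptional subcases.
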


We note that P3EM indeed does not exist for the two exceptional graphs. Also, 
up to planar isomorphism there is only one
plane embedding for these two graphs, as well as
all graphs depicted in Figure~\ref{base case}, 
which will serve as our induction base cases.
In Figure~\ref{base case}, edges of the same color form a  triple. 

\begin{figure}[H]
  \centering
  \begin{subfigure}[b]{0.3\textwidth}
  \centering
  \begin{tikzpicture}[scale=0.5]
  \filldraw [black] (-1,0) circle (5pt);
  \filldraw [black] (1,0) circle (5pt);
  \draw[red] (-1,0) -- (1,0);
  \draw[red] (-1.5,0) circle (0.5);
  \draw[red] (1.5,0) circle (0.5);
  \filldraw [black] (0,-1) circle (0.01pt);
  \end{tikzpicture}
  \captionsetup{justification=centering}
  \caption{}
  \label{base case 3}
  \end{subfigure}
  \begin{subfigure}[b]{0.3\textwidth}
  \centering
  \begin{tikzpicture}[scale=0.5]
  \filldraw [black] (-2,0) circle (5pt);
  \filldraw [black] (2,0) circle (5pt);
  \filldraw[black] (0,1) circle (5pt);
  \filldraw (0,2) circle (5pt);
  \draw[red] (-2,0) -- (2,0);
  \draw[red] (-2,0) .. controls (-1.5,0.75) and (-0.75,1) .. (0,1);
  \draw[red] (2,0) .. controls (1.5,0.75) and (0.75,1) .. (0,1);
  \draw[blue] (-2,0) .. controls (-1,-1) and (1,-1) .. (2,0);
  \draw[blue] (0,1) -- (0,2);
  \draw[blue] (0,2.5) circle (0.5);
  \end{tikzpicture}
  \captionsetup{justification=centering}
  \caption{}
  \label{base case 4}
  \end{subfigure}
  \begin{subfigure}[b]{0.3\textwidth}
  \centering
  \begin{tikzpicture}[scale=0.5]
  \filldraw [black] (-3,0) circle (5pt);
  \filldraw [black] (3,0) circle (5pt);
  \filldraw[black] (0,1) circle (5pt);
  \filldraw (0,0) circle (5pt);
  \filldraw (0,3) circle (5pt);
  \filldraw (0,4.5) circle (5pt);
  \draw[red] (-3,0) -- (3,0);
  \draw[red] (3,0) -- (0,4.5);
  \draw[blue] (0,1.5) circle (0.5);
  \draw[blue] (0,0) -- (0,1);
  \draw[blue] (3,0) -- (0,3);
  \draw[orange] (-3,0) -- (0,3);)
  \draw[orange] (0,3) -- (0,4.5);
  \draw[orange] (-3,0) -- (0,4.5);
  \end{tikzpicture}
  \captionsetup{justification=centering}
  \caption{}
  \label{base case 5}
  \end{subfigure}
    \begin{subfigure}[b]{0.3\textwidth}
  \centering
  \begin{tikzpicture}[scale=0.5]
  \filldraw [black] (-3,0) circle (5pt);
  \filldraw [black] (-1,0) circle (5pt);
  \filldraw [black] (1,0) circle (5pt);
  \filldraw [black] (3,0) circle (5pt);
  \draw[red] (-3,0) -- (-1,0);
  \draw[red] (-1,0) .. controls (-0.5,0.75) and (0.5,0.75) .. (1,0);
  \draw[blue] (-1,0) .. controls (-0.5,-0.75) and (0.5,-0.75) .. (1,0);
  \draw[blue] (1,0) -- (3,0);
  \draw[red] (-3,0) .. controls (-1,2) and (1,2) .. (3,0);
  \draw[blue] (-3,0) .. controls (-1,-2) and (1,-2) .. (3,0);
  \end{tikzpicture}
  \captionsetup{justification=centering}
  \caption{}
  \label{base case 1}
  \end{subfigure}
  \begin{subfigure}[b]{0.3\textwidth}
  \centering
  \begin{tikzpicture}[scale=0.5]
  \filldraw [black] (0,0) circle (5pt);
  \filldraw [black] (-3,-1) circle (5pt);
  \filldraw [black] (3,-1) circle (5pt);
  \filldraw [black] (0,2) circle (5pt);
  \filldraw [black] (0,3) circle (5pt);
  \filldraw [black] (0,4) circle (5pt);
  \draw[red] (-3,-1) -- (0,4);
  \draw[red] (0,4) -- (0,3);
  \draw[red] (0,2) .. controls (-0.75,2) and (-0.75,3) .. (0,3);
  \draw[blue] (0,2) .. controls (0.75,2) and (0.75,3) .. (0,3);
  \draw[blue] (0,0) -- (0,2);
  \draw[blue] (0,4) -- (3,-1);
  \draw[orange] (-3,-1) -- (3,-1);
  \draw[orange] (-3,-1) -- (0,0);
  \draw[orange] (3,-1) -- (0,0);
  \end{tikzpicture}
  \captionsetup{justification=centering}
  \caption{}
  \label{base case 2}
  \end{subfigure}
  \begin{subfigure}[b]{0.3\textwidth}
  \centering
  \begin{tikzpicture}[scale=0.5]
  \filldraw [black] (0,1) circle (5pt);
  \filldraw [black] (-1,-1) circle (5pt);
  \filldraw [black] (1,-1) circle (5pt);
  \filldraw [black] (0,2.5) circle (5pt);
  \filldraw [black] (-3,-2) circle (5pt);
  \filldraw [black] (3,-2) circle (5pt);
  \draw[red] (0,1) -- (-1,-1);
  \draw[orange] (-1,-1) -- (1,-1);
  \draw[blue] (1,-1) -- (0,1);
  \draw[red] (0,1) -- (0,2.5);
  \draw[orange] (-1,-1) -- (-3,-2);
  \draw[blue] (1,-1) -- (3,-2);
  \draw[red] (-3,-2) -- (0,2.5);
  \draw[blue] (3,-2) -- (0,2.5);
  \draw[orange] (-3,-2) -- (3,-2);
  \end{tikzpicture}
  \captionsetup{justification=centering}
  \caption{}
  \label{base case 6}
  \end{subfigure}
  \begin{subfigure}[b]{0.3\textwidth}
  \centering
  \begin{tikzpicture}[scale=0.5]
  \filldraw [black] (0,1) circle (5pt);
  \filldraw [black] (0,-1) circle (5pt);
  \filldraw [black] (1,0) circle (5pt);
  \filldraw [black] (-1,0) circle (5pt);
  \filldraw [black] (3,0) circle (5pt);
  \filldraw [black] (-3,0) circle (5pt);
  \draw[red] (0,1) -- (0,-1);
  \draw[blue] (0,1) -- (1,0);
  \draw[orange] (1,0) -- (0,-1);
  \draw[red] (0,-1) -- (-1,0);
  \draw[red] (-1,0) -- (0,1);
  \draw[blue] (1,0) -- (3,0);
  \draw[orange] (-1,0) -- (-3,0);
  \draw[blue] (-3,0) .. controls (-1.5,2.5) and (1.5,2.5) .. (3,0);
  \draw[orange] (-3,0) .. controls (-1.5,-2.5) and (1.5,-2.5) .. (3,0);
  \end{tikzpicture}
  \captionsetup{justification=centering}
  \caption{}
  \label{base case 7}
  \end{subfigure}
  \begin{subfigure}[b]{0.3\textwidth}
  \centering
  \begin{tikzpicture}[scale=0.5]
  \filldraw [black] (0,1) circle (5pt);
  \filldraw [black] (0,-1) circle (5pt);
  \filldraw [black] (1,0) circle (5pt);
  \filldraw [black] (-1,0) circle (5pt);
  \filldraw [black] (3,0) circle (5pt);
  \filldraw [black] (-3,0) circle (5pt);
  \filldraw (0,2) circle (5pt);
  \filldraw (0,3.5) circle (5pt);
  \draw[red] (0,1) -- (0,-1);
  \draw[blue] (0,1) -- (1,0);
  \draw[orange] (1,0) -- (0,-1);
  \draw[red] (0,-1) -- (-1,0);
  \draw[red] (-1,0) -- (0,1);
  \draw[blue] (1,0) -- (3,0);
  \draw[orange] (-1,0) -- (-3,0);
  \draw[green] (-3,0) -- (0,2);
  \draw[blue] (3,0) -- (0,2);
  \draw[green] (0,2) -- (0,3.5);
  \draw[green] (-3,0) -- (0,3.5);
  \draw[orange] (3,0) -- (0,3.5);
  \end{tikzpicture}
  \captionsetup{justification=centering}
  \caption{}
  \label{base case 8}
  \end{subfigure}
  \captionsetup{justification=centering}
  \caption{Base cases in the proof of Theorem~\ref{thm:P3EM}}
  \label{base case}
\end{figure}
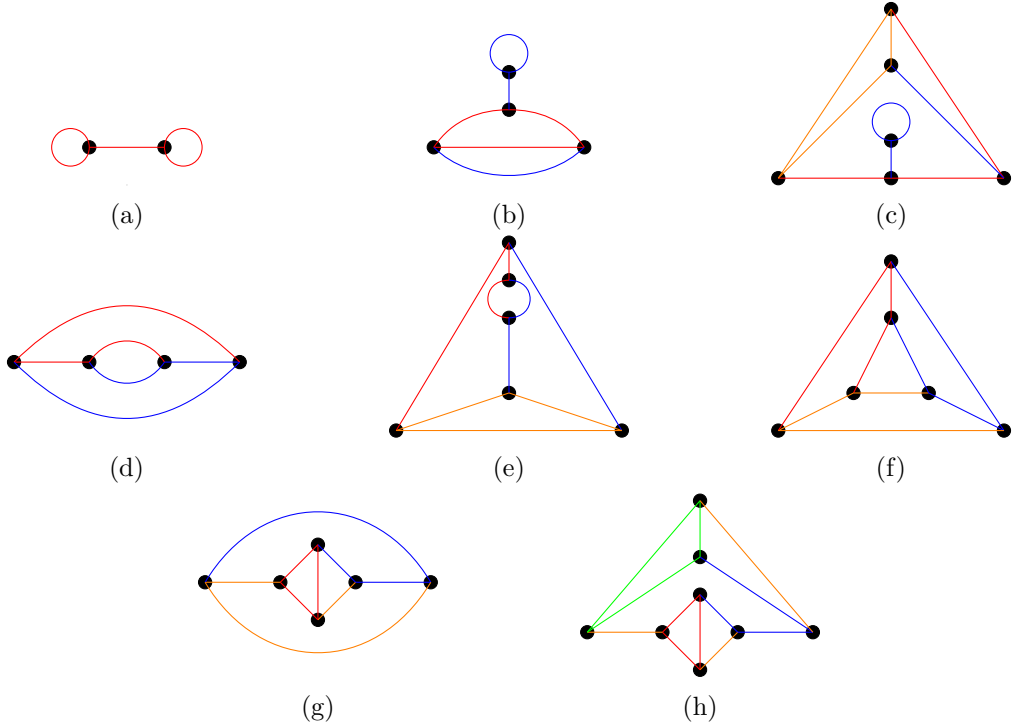

\begin{proof}
By Lemma~\ref{lem:P3EM-equivalce}, it suffices to prove the case when $G$ is connected, as putting together P3EMs for each connected component gives a P3EM for $G$.

We  prove Theorem~\ref{thm:P3EM} by induction.
Our induction hypothesis is as follows: if $|V(G)| \le k$ for some integer $k$ and it is not one of the two exceptions, then it admits a P3EM. 
Given a larger graph $G$, 
we try to reduce it to a smaller graph $G'$ such that we obtain
a P3EM for $G$ from that of $G'$;
whenever the reduction step produces one of
the two exceptions, we will give a P3EM directly to the original graph.

\begin{figure}[ht]
  \centering
  \begin{subfigure}[b]{0.4\textwidth}
  \centering
  \begin{tikzpicture}[scale=0.5]
  \filldraw [black] (0,0) circle (5pt);
  \node[below] at (0,0) {$A$};
  \filldraw [black] (1,0) circle (5pt);
  \node[below] at (1,0) {$B$};
  \filldraw [black] (2,1) circle (5pt);
  \node[above right] at (2,1) {$C$};
  \filldraw [black] (2,-1) circle (5pt);
  \node[below right] at (2,-1) {$D$};
  \draw (-0.5,0) circle (0.5);
  \draw (0,0) -- (1,0);
  \draw (1,0) -- (2,1);
  \draw (1,0) -- (2,-1);
  \draw (2,1) -- (2,2);
  \draw (2,1) -- (3,1);
  \draw (2,-1) -- (3,-1);
  \draw (2,-1) -- (2,-2);
  \draw[thick, dotted] (2,2) -- (2,3);
  \draw[thick, dotted] (3,1) -- (4,1);
  \draw[thick, dotted] (3,-1) -- (4,-1);
  \draw[thick, dotted] (2,-2) -- (2,-3);
  \node[left] at (-0.75,0) {$e_1$};
  \node[above] at (0.5,0) {$e_2$};
  \node[below right] at (1.25,1) {$e_4$};
  \node[above right] at (1.25,-1) {$e_3$};
  \end{tikzpicture}
  \captionsetup{justification=centering}
  \caption{self loop in the original graph}
  \end{subfigure}
  \begin{subfigure}[b]{0.4\textwidth}
  \centering
  \begin{tikzpicture}[scale=0.5]
  \filldraw [black] (0,1) circle (5pt);
  \node[left] at (0,1) {$C$};
  \filldraw [black] (0,-1) circle (5pt);
  \node[left] at (0,-1) {$D$};
  \draw (0,1) -- (0,-1);
  \draw (0,1) -- (1,1);
  \draw (0,1) -- (0,2);
  \draw (0,-1) -- (0,-2);
  \draw (0,-1) -- (1,-1);
  \draw[thick, dotted] (0,2) -- (0,3);
  \draw[thick, dotted] (1,1) -- (2,1);
  \draw[thick, dotted] (0,-2) -- (0,-3);
  \draw[thick, dotted] (1,-1) -- (2,-1);
  \node[right] at (0,0) {$e^*$};
  \end{tikzpicture}
  \captionsetup{justification=centering}
  \caption{resulting graph}
  \end{subfigure}
  \captionsetup{justification=centering}
  \caption{Transforming self-loops}
  \label{self loop}
\end{figure}
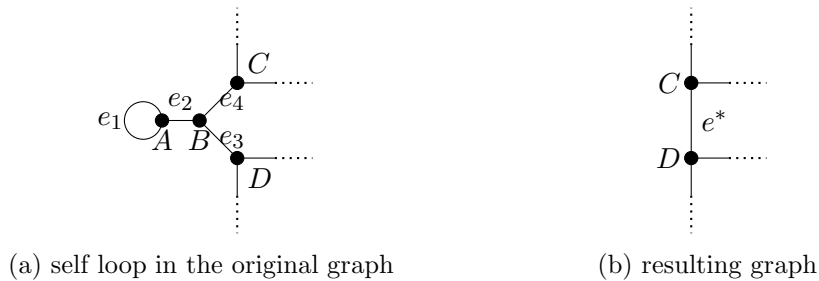

We first show it suffices to consider simple planar 3-regular graphs.

 If $G$ has a self-loop, then locally it has a fragment depicted in Figure~\ref{self loop}, unless it is 
 planarly isomorphic to the base case Figure~\ref{base case 3}, which we
 directly give a P3EM. Now perform the transformation depicted in Figure~\ref{self loop}. If the resulting graph is one of the two exceptions, then the original graph is planarly isomorphic to the base cases Figure~\ref{base case 4} or Figure~\ref{base case 5}, for which we give their P3EMs directly. 
 %%%Note: there is no difference in which face the loop is placed , as on S^2, for both two exception graphs, all faces are equivalent.
 Otherwise, by induction hypothesis, there exists a P3EM $M$ for the resulting graph. If the edge $e^*$ in Figure~\ref{self loop}b is mapped rightwards ({\it resp.} leftwards) in $M$, then we obtain a P3EM for the original graph by simulating $e_4$ as $e^*$ to be mapped rightwards ({\it resp.} leftwards) and connecting $e_1$, $e_2$ and $e_3$.
Note that the self-loop transformation is valid regardless whether there is a self-loop at the vertices $C$ or $D$.

\begin{figure}[ht]
  \centering
  \begin{subfigure}[b]{0.4\textwidth}
  \centering
  \begin{tikzpicture}[scale=0.5]
  \draw[thick, dotted] (-4.5,1.5) -- (-4,1);
  \draw[thick, dotted] (-4.5,-1.5) -- (-4,-1);
  \draw (-4,1) -- (-3,0);
  \draw (-4,-1) -- (-3,0);
  \filldraw [black] (-3,0) circle (5pt);
  \node[below] at (-3,0) {$A$};
  \draw (-3,0) -- (-1,0);
  \node[above] at (-2,0) {$e_1$};
  \filldraw [black] (-1,0) circle (5pt);
  \node[below] at (-1,0) {$B$};
  \draw (-1,0) .. controls (-0.5,0.75) and (0.5,0.75) .. (1,0);
  \node[above] at (0,0.5) {$e_2$};
  \draw (-1,0) .. controls (-0.5,-0.75) and (0.5,-0.75) .. (1,0);
  \node[below] at (0,-0.5) {$e_3$};
  \filldraw [black] (1,0) circle (5pt);
  \node[below] at (1,0) {$C$};
  \draw (1,0) -- (3,0);
  \node[above] at (2,0) {$e_4$};
  \filldraw [black] (3,0) circle (5pt);
  \node[below] at (3,0) {$D$};
  \draw (3,0) -- (4,1);
  \draw (3,0) -- (4,-1);
  \draw[thick, dotted] (4,1) -- (4.5,1.5);
  \draw[thick, dotted] (4,-1) -- (4.5,-1.5);
  \end{tikzpicture}
  \captionsetup{justification=centering}
  \caption{double edges in the original graph}
  \end{subfigure}
  \begin{subfigure}[b]{0.4\textwidth}
  \centering
  \begin{tikzpicture}[scale=0.5]
  \draw[thick, dotted] (-2.5,1.5) -- (-2,1);
  \draw[thick, dotted] (-2.5,-1.5) -- (-2,-1);
  \draw (-2,1) -- (-1,0);
  \draw (-2,-1) -- (-1,0);
  \filldraw [black] (-1,0) circle (5pt);
  \node[below] at (-1,0) {$A$};
  \draw (-1,0) -- (1,0);
  \node[above] at (0,0) {$e^*$};
  \filldraw [black] (1,0) circle (5pt);
  \node[below] at (1,0) {$D$};
  \draw (1,0) -- (2,1);
  \draw (1,0) -- (2,-1);
  \draw[thick, dotted] (2,1) -- (2.5,1.5);
  \draw[thick, dotted] (2,-1) -- (2.5,-1.5);
  \end{tikzpicture}
  \captionsetup{justification=centering}
  \caption{resulting graph}
  \end{subfigure}
  \captionsetup{justification=centering}
  \caption{Transforming double edges}
  \label{double edge}
\end{figure}
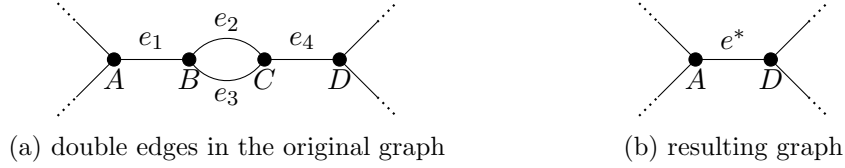

If $G$ has parallel edges between two vertices, say $B$ and $C$,
since  $G$ is 3-regular and 
$G$ is not planarly isomorphic to  $M_{2,3}$,
there must be exactly two edges between them.
If $B$ and $C$ have a common neighbor, say $A$, we may delete $B, C$ and their
incident edges, and add a self-loop at $A$. The resulting
graph is not one of the exceptions, by induction, it has a P3EM. One can then easily
obtain a P3EM for $G$.  
% form triple AB, AC, e_3 as in fig 3a, imagine Ab, AC from A below. then e2 simulates the loop at A
Now suppose the third edges from $B$ and $C$ are $e_1= \{A,B\}$ and $e_4 =\{C, D\}$
with $A \ne D$, as depicted in Figure~\ref{double edge}.
We likewise perform a transformation which ``deletes" $B$ and $C$  with double edges, 
and merge $e_1$ and $e_4$ to a single edge $e^* = \{A, D\}$.  If the resulting graph is one of the two exceptions, the original graph is isomorphic to the base cases Figure~\ref{base case 1} or Figure~\ref{base case 2} which we give their P3EMs directly. Otherwise, by induction hypothesis, there exists a P3EM $M$ for the resulting graph. If the edge $e^*$ is mapped upward ({\it resp.} downward) in $M$ (Figure~\ref{double edge}b), then we obtain a P3EM for the original graph by simulating $e_2$ ({\it resp.} $e_3$) as $e^*$ and connecting $e_3$ ({\it resp.} $e_2$), $e_1$ and $e_4$. 

Below we assume  $G$ is simple, i.e.,\ without parallel edges or self-loops.

\begin{figure}[ht]
  \centering
  \begin{subfigure}[b]{0.4\textwidth}
  \centering
  \begin{tikzpicture}[scale=0.5]
  \filldraw [black] (0,1) circle (5pt);
  \node[right] at (0,1) {$A$};
  \filldraw [black] (-1,-1) circle (5pt);
  \node[below] at (-1,-1) {$B$};
  \filldraw [black] (1,-1) circle (5pt);
  \node[below] at (1,-1) {$C$};
  \filldraw [black] (0,2.5) circle (5pt);
  \node[right] at (0,2.5) {$D$};
  \filldraw [black] (-2,-2) circle (5pt);
  \node[below left] at (-2,-2) {$E$};
  \filldraw [black] (2,-2) circle (5pt);
  \node[below right] at (2,-2) {$F$};
  \draw (0,1) -- (-1,-1);
  \draw (-1,-1) -- (1,-1);
  \draw (1,-1) -- (0,1);
  \draw (0,1) -- (0,2.5);
  \draw (-1,-1) -- (-2,-2);
  \draw (1,-1) -- (2,-2);
  \draw[thick, dotted] (-2,-2) -- (-3,-2);
  \draw[thick, dotted] (-2,-2) -- (-2,-3);
  \draw[thick, dotted] (2,-2) -- (3,-2);
  \draw[thick, dotted] (2,-2) -- (2,-3);
  \draw[thick, dotted] (0,2.5) -- (-0.5, 3.5);
  \draw[thick, dotted] (0,2.5) -- (0.5, 3.5);
  \node[left] at (0.25,1.75) {$e_1$};
  \node[above left] at (-1.25,-2) {$e_2$};
  \node[above right] at (1.25, -2) {$e_3$};
  \end{tikzpicture}
  \captionsetup{justification=centering}
  \caption{triangle in the original graph}
  \label{triangle 1}
  \end{subfigure}
  \begin{subfigure}[b]{0.4\textwidth}
  \centering
  \begin{tikzpicture}[scale=0.5]
  \filldraw [black] (0,0) circle (5pt);
  \filldraw [black] (0,1.5) circle (5pt);
  \node[right] at (0,1.5) {$D$};
  \filldraw [black] (-1,-1) circle (5pt);
  \node[below left] at (-1,-1) {$E$};
  \filldraw [black] (1,-1) circle (5pt);
  \node[below right] at (1,-1) {$F$};
  \draw (0,0) -- (0,1.5);
  \draw (0,0) -- (-1,-1);
  \draw (0,0) -- (1,-1);
  \draw[thick, dotted] (0,1.5) -- (-0.5,2.5);
  \draw[thick, dotted] (0,1.5) -- (0.5,2.5);
  \draw[thick, dotted] (-1,-1) -- (-2,-1);
  \draw[thick, dotted] (-1,-1) -- (-1,-2);
  \draw[thick, dotted] (1,-1) -- (2,-1);
  \draw[thick, dotted] (1,-1) -- (1,-2);
  \node[left] at (0.25, 0.75) {$e_1$};
  \node[above left] at (-0.25,-0.75) {$e_2$};
  \node[above right] at (0.25,-0.75) {$e_3$};
  \end{tikzpicture}
  \captionsetup{justification=centering}
  \caption{resulting graph}
  \label{triangle 2}
  \end{subfigure}
  \\
  \begin{subfigure}[b]{0.4\textwidth}
  \centering
  \begin{tikzpicture}[scale=0.5]
  \filldraw [black] (0,1) circle (5pt);
  \node[above] at (0,1) {$A$};
  \filldraw [black] (0,-1) circle (5pt);
  \node[below] at (0,-1) {$B$};
  \filldraw [black] (1,0) circle (5pt);
  \node[below] at (1,0) {$C$};
  \filldraw [black] (-1,0) circle (5pt);
  \node[below] at (-1,0) {$D$};
  \filldraw [black] (2,0) circle (5pt);
  \node[below] at (2,0) {$F$};
  \filldraw [black] (-2,0) circle (5pt);
  \node[below] at (-2,0) {$E'$};
  \draw (0,1) -- (0,-1);
  \draw (0,1) -- (1,0);
  \draw (1,0) -- (0,-1);
  \draw (0,-1) -- (-1,0);
  \draw (-1,0) -- (0,1);
  \draw (1,0) -- (2,0);
  \draw (-1,0) -- (-2,0);
  \draw[thick, dotted] (2,0) -- (2.75,0.75);
  \draw[thick, dotted] (2,0) -- (2.75,-0.75);
  \draw[thick, dotted] (-2,0) -- (-2.75, 0.75);
  \draw[thick, dotted] (-2,0) -- (-2.75, -0.75);
  \end{tikzpicture}
  \captionsetup{justification=centering}
  \caption{triangle in the original graph ($D=E$)}
  \label{triangle 3}
  \end{subfigure}
  \begin{subfigure}[b]{0.4\textwidth}
  \centering
  \begin{tikzpicture}[scale=0.5]
  \filldraw [black] (1,0) circle (5pt);
  \node[below] at (1,0) {$F$};
  \filldraw [black] (-1,0) circle (5pt);
  \node[below] at (-1,0) {$E'$};
  \draw (-1,0) -- (1,0);
  \draw[thick, dotted] (1,0) -- (2,1);
  \draw[thick, dotted] (1,0) -- (2,-1);
  \draw[thick, dotted] (-1,0) -- (-2,1);
  \draw[thick, dotted] (-1,0) -- (-2,-1);
  \end{tikzpicture}
  \captionsetup{justification=centering}
  \caption{resulting graph ($D=E$)}
  \label{triangle 4}
  \end{subfigure}
  \captionsetup{justification=centering}
  \caption{Transforming triangles}
  \label{triangle}
\end{figure}
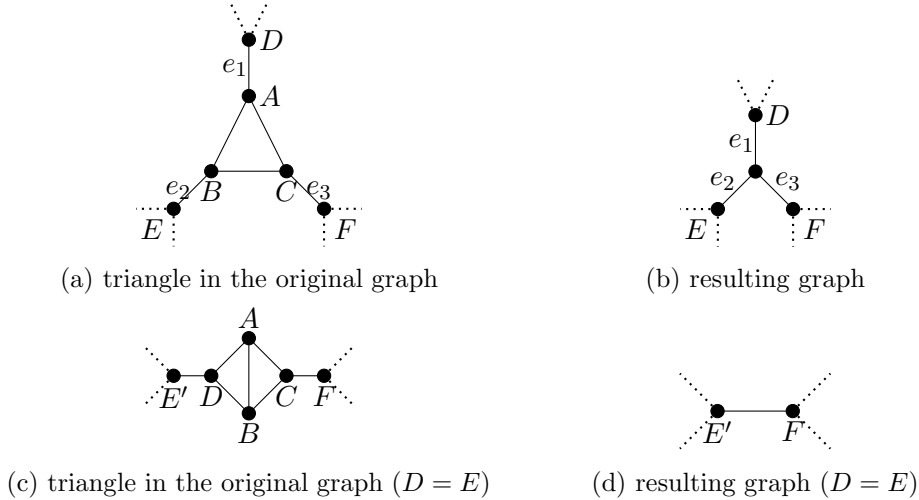

Next we consider the case when $G$ contains 
a triangle face as depicted in  Figure~\ref{triangle 1}. Since it is a simple graph, all three edges $e_1, e_2, e_3$ are distinct, and
$D, E, F \not \in \{A, B, C\}$. If the vertices $D, E, F$ are all distinct,  then we perform the transformation from Figure~\ref{triangle 1} to Figure~\ref{triangle 2}. By induction hypothesis, the resulting graph admits a P3EM unless it is $K_4$ (in this case, the resulting graph cannot be $M_{2,3}$ since it has more than two vertices). If the resulting graph is $K_4$, then the original graph is (or planarly isomorphic to)
the base case Figure~\ref{base case 6} for which we give a P3EM directly. If the resulting graph is not $K_4$ and hence admits a P3EM $M$ by our induction hypothesis, then the original graph can simulate $M$ by connecting the edges of the triangle face internally. We now consider the case when the vertices $D,E,F$ are not all distinct. Since the original graph is not $K_4$,  the vertices $D,E,F$ are not all the same vertex. Without loss of generality we assume $D=E\neq F$. See Figure~\ref{triangle 3} for an illustration. 
$D$ has an incident vertex $E' \ne A, B$.  Suppose $E' \ne F$. Then we perform the transformation illustrated in Figure~\ref{triangle 4}. Similarly as above, if the resulting graph is not one of the exceptions, then we can easily simulate the P3EM (which is given by the induction) in the resulting graph.
%%%internally form triple AB BC CA.  use AD to simulate E'F. if E'F goes up, then below in the Fig 4c form triple E'D, DB, CF.
If the resulting graph Figure~\ref{triangle 4} is planarly isomorphic to the exceptions $M_{2,3}$ or $K_4$, then the original graph is planarly isomorphic to the base cases Figure~\ref{base case 7} or Figure~\ref{base case 8}, which we give a P3EM directly. Finally suppose $E'=F$, then we transform the original graph by deleting  the vertices  $A, B, C, D$ 
and their incident edges, and form a self-loop at $E'= F$. The resulting graph has fewer vertices
(and has a self-loop and so it is not $M_{2,3}$ or $K_4$), and so
by the induction hypothesis 
 it admits a P3EM. It is easy to verify that a P3EM in the resulting graph, as before, can be simulated in the original graph. 
 %%% any P3EM in G' assigns loop outward.
 %%% assign ABC. use AD simulate the loop. internally form E'D, DB, E'C
In the following we may  assume $G$ is simple without triangle faces.

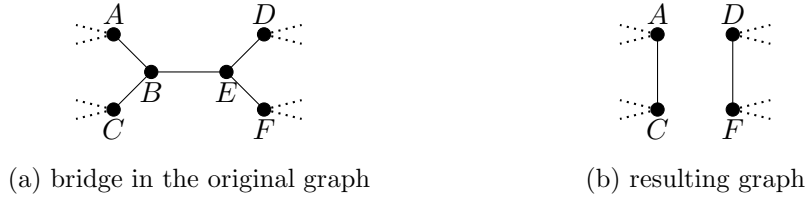
\begin{figure}[ht]
  \centering
  \begin{subfigure}[b]{0.4\textwidth}
  \centering
  \begin{tikzpicture}[scale=0.5]
  \filldraw [black] (-1,0) circle (5pt);
  \node[below] at (-1,0) {$B$};
  \filldraw [black] (-2,1) circle (5pt);
  \node[above] at (-2,1) {$A$};
  \filldraw [black] (-2,-1) circle (5pt);
  \node[below] at (-2,-1) {$C$};
  \filldraw [black] (1,0) circle (5pt);
  \node[below] at (1,0) {$E$};
  \filldraw [black] (2,1) circle (5pt);
  \node[above] at (2,1) {$D$};
  \filldraw [black] (2,-1) circle (5pt);
  \node[below] at (2,-1) {$F$};
  \draw (-1,0) -- (1,0);
  \draw (-1,0) -- (-2,1);
  \draw (-1,0) -- (-2,-1);
  \draw (1,0) -- (2,1);
  \draw (1,0) -- (2,-1);
  \draw[thick, dotted] (-2,1) -- (-3,1.25);
  \draw[thick, dotted] (-2,1) -- (-3,0.75);
  \draw[thick, dotted] (-2,-1) -- (-3, -1.25);
  \draw[thick, dotted] (-2,-1) -- (-3,-0.75);
  \draw[thick, dotted] (2,1) -- (3,1.25);
  \draw[thick, dotted] (2,1) -- (3,0.75);
  \draw[thick, dotted] (2,-1) -- (3,-0.75);
  \draw[thick, dotted] (2,-1) -- (3,-1.25);
  \end{tikzpicture}
  \captionsetup{justification=centering}
  \caption{bridge in the original graph}
  \end{subfigure}
  \begin{subfigure}[b]{0.4\textwidth}
  \centering
  \begin{tikzpicture}[scale=0.5]
  \filldraw [black] (-1,1) circle (5pt);
  \node[above] at (-1,1) {$A$};
  \filldraw [black] (-1,-1) circle (5pt);
  \node[below] at (-1,-1) {$C$};
  \filldraw [black] (1,1) circle (5pt);
  \node[above] at (1,1) {$D$};
  \filldraw [black] (1,-1) circle (5pt);
  \node[below] at (1,-1) {$F$};
  \draw (-1,-1) -- (-1,1);
  \draw (1,1) -- (1,-1);
  \draw[thick, dotted] (-1,1) -- (-2, 1.25);
  \draw[thick, dotted] (-1,1) -- (-2,0.75);
  \draw[thick, dotted] (-1,-1) -- (-2, -0.75);
  \draw[thick, dotted] (-1,-1) -- (-2, -1.25);
  \draw[thick, dotted] (1,1) -- (2,1.25);
  \draw[thick, dotted] (1,1) -- (2,0.75);
  \draw[thick, dotted] (1,-1) -- (2,-0.75);
  \draw[thick, dotted] (1,-1) -- (2,-1.25);
  \end{tikzpicture}
  \captionsetup{justification=centering}
  \caption{resulting graph}
  \end{subfigure}
  \captionsetup{justification=centering}
  \caption{Transforming a bridge}
  \label{bridge}
\end{figure}

Next we consider the case when the graph contains a  bridge, i.e.,
an edge whose removal disconnects the graph (see Figure~\ref{bridge}a). 
%We show that if $G$ has a bridge then it admits a P3EM.
This means that the same face $f$  is on both sides of the bridge
$\{B,E\}$ in $G$. 
Perform the transformation illustrated in Figure~\ref{bridge}. The resulting graph has two disconnected components which are not isomorphic to any exception case. Indeed, neither is isomorphic to $M_{2,3}$ since $G$ has no
parallel edges, and if it were
$K_4$ then the original graph $G$ contains a triangle face. 
Thus by induction there are P3EMs, $M_1$ and $M_2$, for the two components respectively. We will use the  edges $\{B,C\}$ and $\{E,F\}$ 
to simulate $\{A,C\}$ and $\{D,F\}$ respectively, and  match
the three edges $\{A,B\}, \{B,E\}$ and $\{E,D\}$ directly. Then we obtain a
P3EM for $G$ from $M_1$ and $M_2$. 
Note that
both $\{B,C\}$ and $\{E,F\}$ are on the face $f$.
Since  $\{B, E\}$ is a bridge, regardless of how  $\{A,C\}$ and $\{D,F\}$ are matched respectively by $M_1$ and $M_2$,
we can substitute $\{B,C\}$ and $\{E,F\}$ for them respectively.
When viewed in a spherical
embedding, we may assume both  $\{A,C\}$ and $\{D,F\}$ are on the outer face
for the two disconnected components. 
 Then  in $G$ the substitution of   $\{B,C\}$ for $\{A,C\}$,
 and  $\{E,F\}$ 
for $\{D,F\}$, gives 
 a total number of edges assigned to the face $f$ in $G$
to be the sum of the corresponding numbers assigned by $M_1$ and $M_2$, and thus this total number is 
$\equiv 0 \pmod{3} $. Below we assume the graph has no bridges.

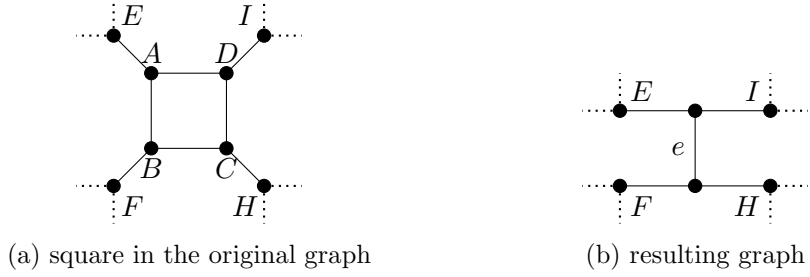
\begin{figure}[ht]
  \centering
  \begin{subfigure}[b]{0.4\textwidth}
  \centering
  \begin{tikzpicture}[scale=0.5]
  \filldraw [black] (-1,1) circle (5pt);
  \node[above] at (-1,1) {$A$};
  \filldraw [black] (-1,-1) circle (5pt);
  \node[below] at (-1,-1) {$B$};
  \filldraw [black] (1,-1) circle (5pt);
  \node[below] at (1,-1) {$C$};  
  \filldraw [black] (1,1) circle (5pt);
  \node[above] at (1,1) {$D$};
  \filldraw [black] (-2,2) circle (5pt);
  \node[above] at (-1.5,2) {$E$};
  \filldraw [black] (-2,-2) circle (5pt);
  \node[below] at (-1.5,-2) {$F$};
  \filldraw [black] (2,-2) circle (5pt);
  \node[below] at (1.5,-2) {$H$};
  \filldraw [black] (2,2) circle (5pt);
  \node[above] at (1.5,2) {$I$};
  \draw (-1,1) -- (-1,-1);
  \draw (-1,-1) -- (1,-1);
  \draw (1,-1) -- (1,1);
  \draw (1,1) -- (-1,1);
  \draw (-1,1) -- (-2,2);
  \draw (-1,-1) -- (-2,-2);
  \draw (1,-1) -- (2,-2);
  \draw (1,1) -- (2,2);
  \draw[thick, dotted] (-2,2) -- (-3,2);
  \draw[thick, dotted] (-2,2) -- (-2,3);
  \draw[thick, dotted] (-2,-2) -- (-3,-2);
  \draw[thick, dotted] (-2,-2) -- (-2,-3);
  \draw[thick, dotted] (2,-2) -- (3,-2);
  \draw[thick, dotted] (2,-2) -- (2,-3);
  \draw[thick, dotted] (2,2) -- (3,2);
  \draw[thick, dotted] (2,2) -- (2,3);
  \end{tikzpicture}
  \captionsetup{justification=centering}
  \caption{square in the original graph}
  \end{subfigure}
  \begin{subfigure}[b]{0.4\textwidth}
  \centering
  \begin{tikzpicture}[scale=0.5]
  \filldraw [black] (0,1) circle (5pt);
  \filldraw [black] (0,-1) circle (5pt);
  \filldraw [black] (2,1) circle (5pt);
  \node[above left] at (2,1) {$I$};
  \filldraw [black] (2,-1) circle (5pt);
  \node[below left] at (2,-1) {$H$};
  \filldraw [black] (-2,1) circle (5pt);
  \node[above right] at (-2,1) {$E$};
  \filldraw [black] (-2,-1) circle (5pt);
  \node[below right] at (-2,-1) {$F$};
  \draw (-2,1) -- (2,1);
  \draw (-2,-1) -- (2,-1);
  \draw (0,1) -- (0,-1);
  \node[left] at (0,0) {$e$};
  \draw[thick, dotted] (-2,1) -- (-3,1);
  \draw[thick, dotted] (-2,1) -- (-2,2);
  \draw[thick, dotted] (-2,-1) -- (-3,-1);
  \draw[thick, dotted] (-2,-1) -- (-2,-2);
  \draw[thick, dotted] (2,1) -- (3,1);
  \draw[thick, dotted] (2,1) -- (2,2);
  \draw[thick, dotted] (2,-1) -- (3,-1);
  \draw[thick, dotted] (2,-1) -- (2,-2);
  \end{tikzpicture}
  \captionsetup{justification=centering}
  \caption{resulting graph}
  \end{subfigure}
  \captionsetup{justification=centering}
  \caption{Transforming a square}
  \label{square}
\end{figure}

We show next that if $G$ has a square face, then it admits a P3EM. 
See Figure~\ref{square} for an illustration. Since it is simple,
$E$ is distinct from $B, D$. Also $E \ne C$, for otherwise
$\{B, F\}$ or $\{D, I\}$ would be a bridge.
%%% suppose E=C. then in fact both would be bridges: doeen't matter which way 
% the edge AC winds around. We traverse Ab, BC, with square on Left, then CA
%% gets a cycle ABC. the part of G containing the equare is sepaeate from
% the part og G connected to B via BF.
% similarly, travers AD, DC, CA with sq on its right, gets cycle ADC , shows
% DI is bridge. 
By the same reason, none of the vertices $E, F, H, I$ can be from $\{A, B, C, D\}$.
If $E = H$ we again would have a bridge.
%%% this time we can only guarantee one bridge: E=H has one more edge going off. if goes outward, ie not in the face formed by the cycle ADCE, then DI is a bridge as ADCE with sq on R shows (all A, C, E can't have any more incident edges)
% if that edge from E=H goes inward, ie inside the part bounded by the cycle ADCE,
%then BF is a bridge, as cycle ABCE with sq on Left shows. 
Also $E \ne F, I$, because $G$ has no triangle face.
It follows that all $A, B, C, D, E, F, H, I$ are distinct.
Now we perform the transformation in Figure~\ref{square}, replacing the
square $ABCD$ by an edge $e$.  It is clearly not one of the exception graphs 
(it has  at least 6 vertices).
By induction hypothesis, the resulting graph admits a P3EM $M$. If $e$ in the resulting graph is mapped leftwards,
then  we can simulate $M$ in $G$ by mapping $\{A,B\}$ leftwards, and 
match $\{A,D\}$, $\{D,C\}$ and $\{C,B\}$ inside the square.
Similarly, if $e$  is mapped rightwards, then we use $\{D, C\}$ in its place,
and match $\{A,D\}$, $\{A, B\}$ and $\{B, C\}$ inside the square.
This gives a P3EM for $G$. Below we assume $G$ has no square faces.

\begin{figure}[ht]
  \centering
  \begin{subfigure}[b]{0.4\textwidth}
  \centering
  \begin{tikzpicture}[scale=0.5]
  \filldraw [black] (0,1.5) circle (5pt);
  \node[above] at (0,1.5) {$A$};
  \filldraw [black] (0,-1.5) circle (5pt);
  \node[below] at (0,-1.5) {$B$};
  %\filldraw [black] (-1,2) circle (5pt);
  %\node[above] at (-1,2) {$C$};
  \filldraw [black] (-1.5,1.5) circle (5pt);
  \node[above] at (-1.5,1.5) {$C$};
  %\filldraw [black] (-1,-2) circle (5pt);
  %\node[below] at (-1,-2) {$D$};
  \filldraw [black] (-1.5,-1.5) circle (5pt);
  \node[below] at (-1.5,-1.5) {$D$};
  %\filldraw [black] (1,2) circle (5pt);
  %\node[above] at (1,2) {$E$};
  \filldraw [black] (1.5,1.5) circle (5pt);
  \node[above] at (1.5,1.5) {$E$};
  %\filldraw [black] (1,-2) circle (5pt);
  %\node[below] at (1,-2) {$F$};
  \filldraw [black] (1.5,-1.5) circle (5pt);
  \node[below] at (1.5,-1.5) {$F$};
  \draw (0,1.5) -- (0,-1.5);
  \draw (0,1.5) -- (-1.5,1.5);
  \draw (0,1.5) -- (1.5,1.5);
  \draw (0,-1.5) -- (-1.5,-1.5);
  \draw (0,-1.5) -- (1.5, -1.5);
  \draw[thick, dotted] (-1.5,1.5) .. controls (-3,1) and (-3,-1) .. (-1.5,-1.5);
  \draw[thick, dotted] (1.5,1.5) .. controls (3,1) and (3,-1) .. (1.5,-1.5);
  \node at (-1.25,0) {$1$};
  \node at (1.25,0) {$2$};
  \end{tikzpicture}
  \captionsetup{justification=centering}
  \caption{chord in the original graph}
  \label{chord_1}
  \end{subfigure}
  \begin{subfigure}[b]{0.4\textwidth}
  \centering
  \begin{tikzpicture}[scale=0.5]
  \filldraw [black] (0,0) circle (5pt);
  \node[left] at (0,-0.5) {$E'$};
  \filldraw [black] (0,1.5) circle (5pt);
  \node[above] at (0,1.5) {$A$};
  \filldraw [black] (0,-1.5) circle (5pt);
  \node[below] at (0,-1.5) {$B$};
  \filldraw [black] (-1.5,1.5) circle (5pt);
  \node[above] at (-1.5,1.5) {$C$};
  \filldraw [black] (-1.5,-1.5) circle (5pt);
  \node[below] at (-1.5,-1.5) {$D$};
  \filldraw [black] (1.5,0) circle (5pt);
  \node[below] at (1.5,0) {$F'$};
  \filldraw [black] (3,-1.5) circle (5pt);
  \node[below] at (3,-1.5) {$F$};
  \filldraw [black] (3,1.5) circle (5pt);
  \node[above] at (3,1.5) {$E$};
  \draw (0,0) -- (0,1.5);
  \draw (0,0) -- (0,-1.5);
  \draw (0,1.5) -- (-1.5,1.5);
  \draw (0,-1.5) -- (-1.5,-1.5);
  \draw (0,-1.5) -- (1.5,0);
  \draw (0,0) -- (1.5, 0);
  \draw (0,1.5) -- (1.5,0);
  \draw[thick, dotted] (-1.5,1.5) .. controls (-3,0.75) and (-3,-0.75) .. (-1.5,-1.5);
  \draw (3,1.5) -- (3,-1.5);
  \draw[thick, dotted] (3,1.5) .. controls (4, 0.75) and (4,-0.75) .. (3,-1.5);
  \end{tikzpicture}
  \captionsetup{justification=centering}
  \caption{resulting graph}
  \end{subfigure}
  \captionsetup{justification=centering}
  \caption{Transforming a chord}
  \label{chord}
\end{figure}
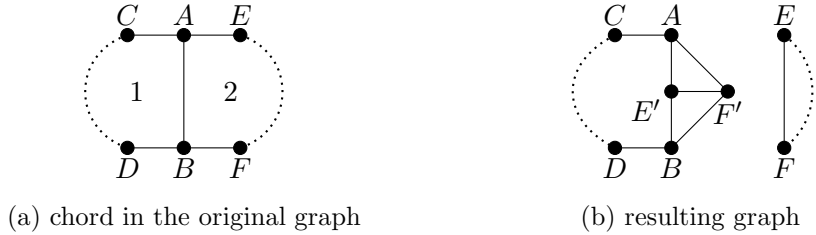

We now consider the case when the graph contains a chord. Let $\mathcal{C}$ be the boundary of the external face of the plane graph. Since we can now assume $G$ is bridgeless, $\mathcal{C}$ is a simple cycle. We say it contains a chord  if there exist two vertices on $\mathcal{C}$ that are joined by an edge that is not in $\mathcal{C}$. See Figure~\ref{chord}a for an illustration. 
Since  $\mathcal{C}$  is the outer boundary,
any chord must connect inside of $\mathcal{C}$.
Let $\{A, B\}$ be a chord, and let $C, E$ and $D, F$ be their neighbors on $\mathcal{C}$.
We note that there is no edge connecting $\{C,D\}$ or $\{E,F\}$ since  $G$ has no square
face. In Figure~\ref{chord}a 
we mark the part of $G$ to the left, respectively to the right, of (but including) the edge $\{A,B\}$ 
as region 1, respectively  region 2. 
By planarity, the only edges connecting 
 regions 1 and  2 are those incident to $A$ or $B$. 
%and to the right of 
Denote the numbers of edges in regions 1 and 2 by $E_1$ and $E_2$
(both including $\{A,B\}$). Then $E_1 + E_2  \equiv 1 \pmod{3}$. Perform the transformation illustrated in Figure~\ref{chord} and we obtain a resulting graph that is disconnected and its two components are 3-regular plane graphs. Since in the original graph, $E$ and $F$ are not adjacent, there exists another vertex distinct from $E$ and $F$ in region 2. Thus, the left side component in the resulting graph has at least one vertex fewer than $G$, and by induction hypothesis it admits a P3EM $M'_1$. Similarly, the right side component also admits a P3EM $M'_2$ (it cannot be $M_{2,3}$ since $G$ is simple,
 nor  $K_4$ since that will imply $G$ has a triangle face). 
The edge $\{E',F'\}$ is 
assigned either inside the triangle $AE'F'$ or inside $BE'F'$.
In the former case, 
the edges $\{B, E'\}$ and $\{B, F'\}$ must be assigned outside the triangle $BE'F'$.
In the latter case,  the edges $\{A, E'\}$ and $\{A, F'\}$ are assigned outside the triangle $AE'F'$. In either case, exactly one of the edges $\{A, E'\}$ or $\{B, E'\}$ is assigned leftwards and one of the edges $\{A, F'\}$ or $\{B, F'\}$ is assigned outside. Thus, there are $N_1 \equiv 2 \pmod{3}$ edges in the path  $(A,C,\dots, D, B)$ along the cycle  $\mathcal{C}$ assigned outside. Similarly, let $N_2$ denote the number of edges assigned outside along the external face in $M'_2$, then $N_2 \equiv 0 \pmod{3}$.
We now construct a 3DEM in $G$. In region 1, the
edge $\{A,B\}$ will be assigned leftwards (taking its place as either
$\{A, E'\}$ or $\{B, E'\}$ which was assigned leftwards in $M'_1$).
All other edges will be assigned
in the same way as in $M'_1$. 
 In region 2 we assign all edges, other than $\{A,B\}$, as follows.
 The edge $\{B,F\}$ will be assigned outside (taking its place as either
 $\{A, F'\}$ or $\{B, F'\}$ which was assigned outside in $M'_1$);
 $\{A, E\}$ will be assigned as $\{E, F\}$ in $M'_2$;
all other edges 
will be assigned the same way as in $M'_2$. By doing so, all internal faces
are assigned  $0 \pmod{3}$ edges, as in the case of $M'_1$ and $M'_2$. For the external face, note that there are also $N_1 + 1 + N_2 \equiv 0 \pmod{3}$ edges assigned to it in total. Thus our construction gives a valid P3EM in $G$.

To summarize, we can assume now that the  3-regular plane graph $G$ is  simple, without triangle and square faces, bridgeless and  chordless. We now show that the graph must have a pentagon face. Let $v,e,f$ denote the number of vertices, edges and faces (including the external one) of $G$, respectively. Since $G$ is 3-regular, we have $3v=2e$. Suppose the minimum number of edges around any face is $n$, then $2e \geq nf$. 
By Euler's formula, we have $2 = v-e+f \le (2/n - 1/3) e$, and thus
$n < 6$. Since $G$ is simple and
without triangle and square faces, we have $n=5$.

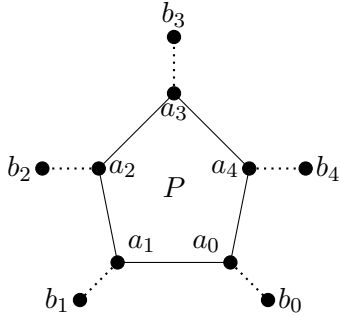
\begin{figure}[ht]
  \centering
%  \begin{subfigure}[b]{0.4\textwidth}
%  \centering
  \begin{tikzpicture}[scale=0.5]
  \node at (0,0.5) {$P$};
  \filldraw [black] (1.5,-1.5) circle (5pt);
  \node[above left] at (1.5,-1.5) {$a_0$};
  \filldraw [black] (-1.5,-1.5) circle (5pt);
  \node[above right] at (-1.5,-1.5) {$a_1$};
  \filldraw [black] (-2,1) circle (5pt);
  \node[right] at (-2,1) {$a_2$};
  \filldraw [black] (2,1) circle (5pt);
  \node[left] at (2,1) {$a_4$};
  \filldraw [black] (0,3) circle (5pt);
  \node[below] at (0,3) {$a_3$};
  \filldraw [black] (2.5,-2.5) circle (5pt);
  \node[right] at (2.5,-2.5) {$b_0$};
  \filldraw [black] (-2.5,-2.5) circle (5pt);
  \node[left] at (-2.5,-2.5) {$b_1$};
  \filldraw [black] (-3.5,1) circle (5pt);
  \node[left] at (-3.5,1) {$b_2$};
  \filldraw [black] (3.5,1) circle (5pt);
  \node[right] at (3.5,1) {$b_4$};
  \filldraw [black] (0,4.5) circle (5pt);
  \node[above] at (0,4.5) {$b_3$};
  \draw (1.5,-1.5) -- (-1.5,-1.5);
  \draw (-1.5, -1.5) -- (-2,1);
  \draw (-2,1) -- (0,3);
  \draw (0,3) -- (2,1);
  \draw (2,1) -- (1.5,-1.5);
  \draw[thick, dotted] (1.5,-1.5) -- (2.5,-2.5);
  \draw[thick, dotted] (-1.5,-1.5) -- (-2.5,-2.5);
  \draw[thick, dotted] (0,3) -- (0,4.5);
  \draw[thick, dotted] (-2,1) -- (-3.5,1);
  \draw[thick, dotted] (2,1) -- (3.5,1);
  \end{tikzpicture}  
  \captionsetup{justification=centering}
   \caption{Pentagon in the original graph}
   \label{pentagon_path_1}
\end{figure}

Now fix a pentagon face $P$ in the graph
with vertices $\{a_0, a_1, \ldots, a_4\}$. See Figure~\ref{pentagon_path_1} for an illustration. Since $G$ is 3-regular, simple and bridgeless,
there is a neighbor
$b_i$   of $a_i$ distinct from $\{a_0, a_1, \ldots, a_4\}$, for every $0\le i \le 4$.
For example,  $b_0 \ne a_1$ by simplicity.
If $b_0=a_2$, i.e., if $\{a_0,a_2\}$ were an edge, then
there would be a bridge $\{a_1, b_1\}$.
Indeed, the edge $\{a_0,a_2\}$ must lie outside of the pentagon face $P$.
If one traverses the edges $\{a_2,a_1\}, \{a_1,a_0\}$ with $P$ to its left,
then follows with the edge $\{a_0,a_2\}$, one gets a cycle which separates the part
of $G$ that contains $P$ from the part of $G$ that connects to $a_1$ via the edge $\{a_1,b_1\}$ (note that all three neighbors of each vertex in $\{a_0,a_1,a_2\}$ are accounted for and thus no other adjacent edge exists to the right of this cycle). So,
deleting $\{a_1,b_1\}$  disconnects $G$, and thus
$\{a_1,b_1\}$  is a bridge. Thus, $b_0 \not =a_2$ as $G$ is bridgeless. See Figure~\ref{fig:b0neqa2} for an illustration.
By symmetry, $b_0 \ne a_3, a_4$ as well.
%and if $b_0$ were $a_2$ 
%there would be a bridge $\{a_1, b_1\}$.
By the same reason, 
$b_i \not \in \{a_0, a_1, \ldots, a_4\}$,
 for all $0\le i \le 4$.

 \begin{figure}[ht]
  \centering
  \begin{tikzpicture}[scale=0.5]
  \node at (0,0.5) {$P$};
  \filldraw [black] (1.5,-1.5) circle (5pt);
  \node[above left] at (1.5,-1.5) {$a_0$};
  \filldraw [black] (-1.5,-1.5) circle (5pt);
  \node[above right] at (-1.5,-1.5) {$a_1$};
  \filldraw [black] (-2,1) circle (5pt);
  \node[right] at (-2,1) {$a_2$};
  \filldraw [black] (2,1) circle (5pt);
  \node[left] at (2,1) {$a_4$};
  \filldraw [black] (0,3) circle (5pt);
  \node[below] at (0,3) {$a_3$};
  %\filldraw [black] (2.5,-2.5) circle (5pt);
  %\node[right] at (2.5,-2.5) {$b_0$};
  \filldraw [black] (-2.5,-2.5) circle (5pt);
  \node[above] at (-2.5,-2.5) {$b_1$};
  %\filldraw [black] (-3.5,1) circle (5pt);
  %\node[left] at (-3.5,1) {$b_2$};
  %\filldraw [black] (3.5,1) circle (5pt);
  %\node[right] at (3.5,1) {$b_4$};
  %\filldraw [black] (0,4.5) circle (5pt);
  %\node[above] at (0,4.5) {$b_3$};
  \draw (1.5,-1.5) -- (-1.5,-1.5);
  \draw (-1.5, -1.5) -- (-2,1);
  \draw (-2,1) -- (0,3);
  \draw (0,3) -- (2,1);
  \draw (2,1) -- (1.5,-1.5);
  %\draw[thick, dotted] (1.5,-1.5) -- (2.5,-2.5);
  \draw (-1.5,-1.5) -- (-2.5,-2.5);
  %\draw (0,3) -- (0,4.5);
  %\draw[thick, dotted] (-2,1) -- (-3.5,1);
  %\draw (2,1) -- (3.5,1);
  \draw[thick, dotted] (-2.5,-2.5) -- (-2.5,-3.5);
  \draw[thick, dotted] (-2.5,-2.5) -- (-3.5,-2.5);
  \draw (-2,1) .. controls (0,8) and (5,1) .. (1.5,-1.5);
  \end{tikzpicture}
  \captionsetup{justification=centering}
   \caption{Why $b_0 \neq a_2$}
   \label{fig:b0neqa2}
\end{figure}
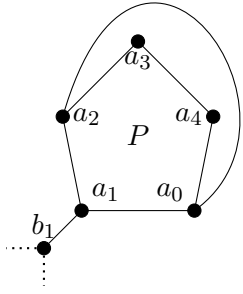

Furthermore, since $G$ has no triangle or square
faces
 and is bridgeless, we claim that without loss of generality
$b_i$ are all distinct ($ 0 \le i\le 4$). To see that, we first note that
 $b_0 \ne b_1, b_4$, for otherwise
there would be a triangle face.
Next we deal with the case  $b_0 = b_2$ or $b_0 = b_3$.
By symmetry suppose $b_0 = b_2$. There is
a third adjacent vertex $b$ of $b_0$, other than $a_0, a_2$. Consider the
cycle $C = (a_2, a_1, a_0, b_0=b_2, a_2)$ with
$P$ to its left, which defines two simply connected
regions in the spherical embedding of $G$. Call the region that contains $P$
the  interior region. If the edge $\{b_0, b\}$
is in the interior region, then $\{a_1, b_1\}$
is a bridge, by the same proof for $b_0 \ne a_2$. So we may assume $\{b_0, b\}$
lies in the exterior region of  the cycle $C$. (See Figure~\ref{trouble}.)

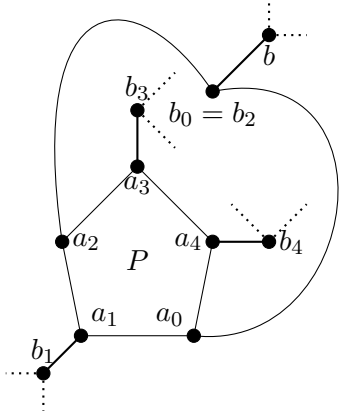
\begin{figure}[ht]
  \centering
%  \begin{subfigure}[b]{0.4\textwidth}
%  \centering
  \begin{tikzpicture}[scale=0.5]
  \node at (0,0.5) {$P$};
  \filldraw [black] (1.5,-1.5) circle (5pt);
  \node[above left] at (1.5,-1.5) {$a_0$};
  \filldraw [black] (-1.5,-1.5) circle (5pt);
  \node[above right] at (-1.5,-1.5) {$a_1$};
  \filldraw [black] (-2,1) circle (5pt);
  \node[right] at (-2,1) {$a_2$};
  \filldraw [black] (2,1) circle (5pt);
  \node[left] at (2,1) {$a_4$};
  \filldraw [black] (0,3) circle (5pt);
  \node[below] at (0,3) {$a_3$};
  \filldraw [black] (-2.5,-2.5) circle (5pt);
  \node[above] at (-2.5,-2.5) {$b_1$};
  \filldraw [black] (3.5,1) circle (5pt);
  \node[right] at (3.5,1) {$b_4$};
  \filldraw [black] (0,4.5) circle (5pt);
  \node[above] at (0,4.5) {$b_3$};
  \filldraw [black] (2,5) circle (5pt);
  \node[below] at (2,5) {$b_0=b_2$};
  \filldraw [black] (3.5,6.5) circle (5pt);
  \node[below] at (3.5,6.5) {$b$};
  \draw (1.5,-1.5) -- (-1.5,-1.5);
  \draw (-1.5, -1.5) -- (-2,1);
  \draw (-2,1) -- (0,3);
  \draw (0,3) -- (2,1);
  \draw (2,1) -- (1.5,-1.5);
  \draw[thick] (-1.5,-1.5) -- (-2.5,-2.5);
  \draw[thick] (0,3) -- (0,4.5);
  \draw[thick] (2,1) -- (3.5,1);
  \draw (-2,1) .. controls (-3,8) and (0,8) .. (2,5);
  \draw (1.5,-1.5) .. controls (6,-2) and (7,6) .. (2,5);
  
  \draw[thick] (2,5) -- (3.5,6.5);
  \draw[thick, dotted] (0,4.5) -- (1,5.5);
  \draw[thick, dotted] (0,4.5) -- (1,3.5);
  \draw[thick, dotted] (3.5,1) -- (2.5,2);
  \draw[thick, dotted] (3.5,1) -- (4.5,2);
  \draw[thick, dotted] (-2.5,-2.5) -- (-3.5,-2.5);
  \draw[thick, dotted] (-2.5,-2.5) -- (-2.5,-3.5);
  \draw[thick, dotted] (3.5,6.5) -- (3.5,7.5);
  \draw[thick, dotted] (3.5,6.5) -- (4.5,6.5);
  \end{tikzpicture}  
  \captionsetup{justification=centering}
   \caption{$\{b_0, b\}$
lies in the exterior region of  the cycle $C$}
   \label{trouble}
\end{figure}

Clearly $b \ne b_1$, for otherwise there is a  
square face.
%in fact two: a1a0, a0b0, b0b, ba1, 
%and a1a2, a2b0, b0b, ba1
Now there is a face $\Delta$ bounded by the cycle that
contains the edge $\{a_1, a_0\}$ on the opposite
side of $P$. The bounding cycle contains the
path $(b_1, a_1, a_0, b_0, b)$, followed by a path  
$\pi = (b=x_0, x_1, \ldots, x_\ell=b_1)$ of $\ell \ge 1$
edges
from $b$ back to $b_1$. Here the first edge $\{b, x_1\}$
is  the right branch we take when we go from
$b_0$ to $b$,
and the last edge $\{x_{\ell-1}, b_1\}$ 
is the left branch we take if we go from
$a_1$ to $b_1$.
Similarly there is another face $\Delta'$ bounded by the cycle that
contains the edge $\{a_1, a_2\}$ on the opposite
side of $P$. The bounding cycle contains the
path $b_1, a_1, a_2, b_0, b$ followed by 
a path $\pi'$ of $\ell' \ge 1$
edges from $b$ back to $b_1$. (See Figure~\ref{G-two-bounding-faces}.) 

 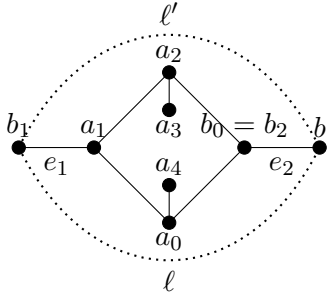
\begin{figure}[ht]
  \centering
  \begin{tikzpicture}[scale=0.5]
  \filldraw [black] (0,2) circle (5pt);
  \node [above] at (0,2) {$a_2$};
  \filldraw [black] (0,-2) circle (5pt);
  \node [below] at (0,-2) {$a_0$};
  \filldraw [black] (-2,0) circle (5pt);
  \node [above] at (-2,0) {$a_1$};
  \filldraw [black] (2,0) circle (5pt);
  \node [above] at (2,0) {$b_0=b_2$};
  \filldraw [black] (4,0) circle (5pt);
  \node [above] at (4,0) {$b$};
  \filldraw [black] (-4,0) circle (5pt);
  \node [above] at (-4,0) {$b_1$};

  \node[below] at (-3,0) {$e_1$};
  \node[below] at (3,0) {$e_2$};

  \filldraw [black] (0,1) circle (5pt);
  \node [below] at (0,1) {$a_3$};

  \filldraw [black] (0,-1) circle (5pt);
  \node [above] at (0,-1) {$a_4$};

  \draw (0,2) -- (2,0);
  \draw (2,0) -- (0,-2);
  \draw (0,-2) -- (-2,0);
  \draw (-2,0) -- (0,2);
  \draw(-2,0) -- (-4,0);
  \draw (2,0) -- (4,0);

  \draw (0,-1) -- (0,-2);
  \draw (0,1) -- (0,2);

  \draw[thick, dotted] (-4,0) .. controls (-2,4) and (2,4) ..(4,0);
  \draw[thick, dotted] (-4,0) .. controls (-2,-4) and (2,-4) .. (4,0);

  \node [above] at (0,3) {$\ell'$};
  \node [below] at (0,-3) {$\ell$};

  \end{tikzpicture}
  \captionsetup{justification=centering}
   \caption{When $b_0 = b_2$}
   \label{G-two-bounding-faces}
\end{figure}

We will now define two auxiliary graphs $G_1$ and $G_2$.
$G_1$ consists of the cycle $C$ and its interior 
region, augmented by a single edge $e^* = \{a_1, b_0\}$.
$G_2$ consists of everything in $G$ properly
exterior to the cycle $C$ (i.e.,
not containing $C$ and its interior)
with the two edges $e_1 =
\{a_1, b_1\}$ and $e_2 = \{b, b_0\}$ replaced by 
one new edge $e_{12} = \{a_1, b_0\}$. (See Figure~\ref{G1G2}). 
$G_1$ is not one of the exceptional graphs since it contains a pentagon.
If $G_2$ were $M_{2,3}$ then the two paths $\pi$ and $\pi'$
denoted by the dotted lines
with labels $\ell$ and $\ell'$  both consist of a single edge
and are present in $G$, contradicting $G$ being simple.
If $G_2$ were $K_4$ then there are four triangle faces (on the spherical embedding),
two of which must be present in $G$,  contradicting $G$ having
no triangle faces.
So, by induction, both $G_1$ and $G_2$
have a P3EM. Note that $G_1$ contains two
triangle faces separated by the edge $e^*$.
Any P3EM of $G_1$ assigns $e^*$ to one of these two
triangle faces which implies that all three edges
of this triangle face must be assigned to this face.
Thus, in Figure~\ref{G1}
% can you make ~\ref{G-two-bounding-faces}-2 for G2 sug-ref ?
the four edges $\{a_1, a_2\}, \{a_2, b_0\}, \{a_1, a_0\}, \{a_0, b_0\}$ must be assigned all up
or all down, according to whether $e^*$
is assigned down or up, respectively. 
In $G_2$, the edge $e_{12}$ is assigned either up or down
to the two adjacent faces. If $e_{12}$ is assigned up, then
along the path $\pi$ (resp. $\pi'$) of $\ell$ (resp. $\ell'$) 
edges there are
$0 \pmod 3$ (resp. $2 \pmod 3$) edges assigned toward the face
that $e_{12}$ is on its boundary.
If $e_{12}$ is assigned  down, then the opposite happens, i.e.,
$2 \pmod 3$ (resp. $0 \pmod 3$ ) edges of $\pi$ (resp. $\pi'$) 
 are assigned toward the  face that $e_{12}$ is on its boundary.

We now define an edge assignment  that will be a P3EM for $G$.
Every edge in $G$ other than $e_1$ and $e_2$ belongs to
exactly one of $G_1$ or $G_2$. We assign these edges according to
the assignment in  $G_1$ or $G_2$ respectively.
This satisfies the requirement of P3EM for every face other than
$\Delta$ and $\Delta'$ in $G$. For the
assignment on  $e_1$ and $e_2$, there are
four cases according to how $e^*$ in $G_1$ and $e_{12}$ in $G_2$
are assigned.
The first case is both $e^*$ in $G_1$  and $e_{12}$ in $G_2$ are
assign up, and we assign $e_1$ down and $e_2$ up in $G$.
Then,  there are a total of
3 edges $e_1 = \{b_1, a_1\}$,
$\{a_1, a_0\}$ and $\{a_0, b_0\}$ and $0 \pmod 3$  edges of $\pi$
assigned toward $\Delta$.
Also  there are a total of
1 edge $e_2 = \{b_0, b\}$,
and $2 \pmod 3$  edges of $\pi'$
assigned toward 
$\Delta'$.
The  second case is when $e^*$ and $e_{12}$  are assigned
respectively  up and down, and we assign $e_1$ and $e_2$ both
down in $G$. 
Then,  there are a total of
4 edges $e_1 = \{b_1, a_1\}$,
$\{a_1, a_0\}$, $\{a_0, b_0\}$  and $e_2 = (b_0, b)$,
and $2 \pmod 3$  edges of $\pi$
assigned toward $\Delta$, making it $0 \pmod 3$ altogether.
Also  there are no edge among these four
and  $0 \pmod 3$  edges of $\pi'$
assigned toward 
$\Delta'$. The other two cases are similar.
% really  symmetric. but i don't want to say that, because inside square
% and outside \pi \pi', things are actually not excatly symmeric
% but as far as the argument for the two Delta faces , really symmetric.
We have proved that a P3EM exists for $G$.

Hence we may assume that $b_0, \ldots, b_4$ are all distinct.

% Hi Professor, could you please look at my e-mail about revising figure 10? Thanks!
% OK 
% which do you want me to look?

\begin{figure}[ht]
  \centering
  \begin{subfigure}[b]{0.4\textwidth}
  \centering
  \begin{tikzpicture}[scale=0.5]
  \filldraw [black] (0,2) circle (5pt);
  \node [above] at (0,2) {$a_2$};
  \filldraw [black] (0,-2) circle (5pt);
  \node [below] at (0,-2) {$a_0$};
  \filldraw [black] (-2,0) circle (5pt);
  \node [above] at (-2,0) {$a_1$};
  \filldraw [black] (2,0) circle (5pt);
  \node [right] at (2,0) {$b_0=b_2$};

  \draw (0,2) -- (2,0);
  \draw (2,0) -- (0,-2);
  \draw (0,-2) -- (-2,0);
  \draw (-2,0) -- (0,2);
  \draw (-2,0) .. controls (-1,-4) and (1,-4) .. (2,0);

  \node [below] at (0,-3) {$e^*$};
  \end{tikzpicture}
  \captionsetup{justification=centering}
  \caption{$G_1$}
  \label{G1}
  \end{subfigure}
  \begin{subfigure}[b]{0.4\textwidth}
  \centering
  \begin{tikzpicture}[scale=0.5]
  \filldraw [black] (4,0) circle (5pt);
  \node [above] at (4,0) {$b$};
  \filldraw [black] (-4,0) circle (5pt);
  \node [above] at (-4,0) {$b_1$};
  \draw(-4,0) -- (4,0);
  \draw[thick, dotted] (-4,0) .. controls (-2,4) and (2,4) ..(4,0);
  \draw[thick, dotted] (-4,0) .. controls (-2,-4) and (2,-4) .. (4,0);
  \node [above] at (0,0) {$e_{12}$};
  \node [above] at (0,3) {$l'$};
  \node [below] at (0,-3) {$l$};
  \end{tikzpicture}
  \captionsetup{justification=centering}
  \caption{$G_2$}
  \label{G2}
  \end{subfigure}
  \captionsetup{justification=centering}
  \caption{$G_1$ and $G_2$}
  \label{G1G2}
\end{figure}
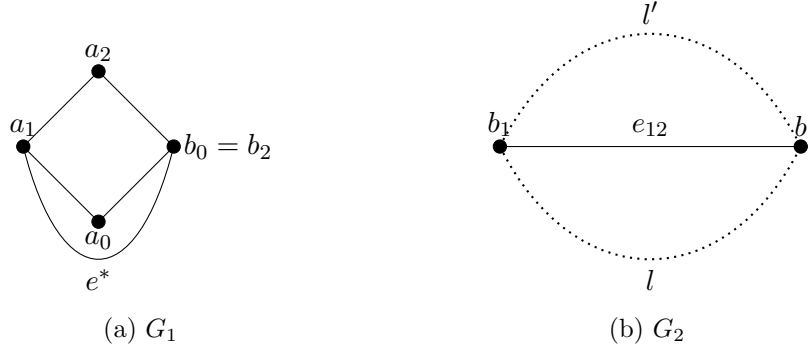

We claim that  there is a simple path connecting $b_i$ and $b_{i+1}$ for each $0 \leq i \leq 4$ (where $b_5 = b_0$), and furthermore
the cycle $(a_i, b_i, \ldots, b_{i+1}, a_{i+1}, a_i)$ using this path
is the  boundary of a face.
Define an $a_i$-R path as follows: start from $a_i$ and take the first edge  $\{a_i, b_i\}$, and then at every new vertex (of degree 3) choose
the right branch for the next vertex, until we encounter 
a previously visited vertex on this walk, or  one of $\{a_0, a_1 , \ldots, a_4\}$, then stop. For notational simplicity we 
consider the case for $i=0$;  all other cases are the same. Suppose the $a_0$-R path is $\{x_0, x_1, ..., x_k, ..., x_m\}$, where $x_0 = a_0$ and $ x_1 = b_0$.
First we claim $x_m  \neq a_0$.  If it were, then the step before would have been $a_1, a_4$ or $b_1 =x_0$, but then  the $a_0$-R path should have stopped at $x_{m-1}$. Next we claim that $x_m \in \{a_1,a_2,a_3,a_4\}$.
Indeed, if $x_m \not \in  \{a_1,a_2,a_3,a_4\}$, then it is a 
previously visited vertex $x_k$ on this walk, with $k \ge 1$.
Then $x_{k-1}$ exists.
Moreover,  $(x_k, x_{k+1}, \dots, x_m)$ is a cycle which is formed by
always taking the right branch at the next vertex. The last edge $(x_{m-1}, x_m)$ (which is $(x_{m-1}, x_k)$)
must be the left branch edge when coming from the direction $(x_{k-1}, x_k)$,
thus the traversal of the cycle  $(x_k, x_{k+1}, \dots, x_m)$ is counterclockwise.
%%% b/c two edges incident to x_k had been taken.
Thus 
 the edge $\{x_{k-1}, x_k\}$ is a bridge, a contradiction.
 See Figure~\ref{fig:simple paths between b_i's} for an illustration.

 \begin{figure}[ht]
  \centering
  \begin{tikzpicture}
  \filldraw [black] (0,0) circle (1pt);
  \node[left] at (0,0) {$a_0 = x_0$};
  \filldraw [black] (0,1) circle (1pt);
  \node[left] at (0,1) {$b_0 = x_1$};
  \draw (0,0) -- (0,1);

  \filldraw [black] (0.75,1.75) circle (1pt);
  \node[left] at (0.75,1.75) {$x_2$};
  \draw (0,1) -- (0.75,1.75);

  \draw (0,1) -- (-0.75,1.75);
  \draw (0.75,1.75) -- (0.5,2.25);
  \draw[thick, dotted] (0.75,1.75) -- (3,3);
  \draw[very thick] (3,3) -- (4,4);
  \draw (3,3) -- (2,3.5);
  \filldraw [black] (3,3) circle (1pt);
  \node[below] at (3,3) {$x_{k-1}$};
  \node[above left] at (4,4) {$x_k=x_m$};
  \draw (4,4) -- (4.5,3.5);
  \draw (4,4) -- (4.5,4.5);
  \filldraw [black] (4,4) circle (1pt);
  \filldraw [black] (4.5,3.5) circle (1pt);
  \filldraw [black] (4.5,4.5) circle (1pt);
  \node[above] at (4.5,4.5) {$x_{m-1}$};
  \node[below] at (4.5,3.5) {$x_{k+1}$};
  \draw[thick, dotted] (4.5,4.5) .. controls (6,5) and (6,3) .. (4.5,3.5);
  \draw (4.5,4.5) -- (5,4.25);
  \draw (4.5,3.5) -- (5,3.75);
  \draw (5.5,3.7) -- (5,3.85);
  \draw (5.5,4.3) -- (5,4);
  \filldraw [black] (5.5,3.7) circle (1pt);
  \filldraw [black] (5.5,4.3) circle (1pt);
  \end{tikzpicture}
  \captionsetup{justification=centering}
   \caption{Simple paths between $b_i$'s}
   \label{fig:simple paths between b_i's}
\end{figure}
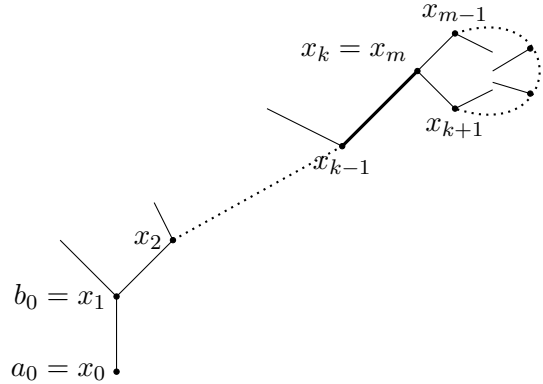
 
 Next we claim that $x_m = a_1$, and $x_{m-1} = b_1$, (see  Figure~\ref{pentagon_path_1}).
 We prove this by eliminating the possibilities
 $x_m \in \{a_2, a_3, a_4\}$. Suppose  $x_m = a_4$. It follows from the definition 
 %As this is the first time stopping . by definition of stopping
 of the $a_0$-R path that
 $x_{m-1} \not \in \{a_0, a_1, \ldots, a_4\}$,
 being the step before $x_m$, and then the only way to reach $x_m = a_4$
is $x_{m-1} = b_4$. Since this  $a_0$-R path always  takes the
right branch, viewing the plane graph on a spherical embedding we can 
consider the face to the right of this  $a_0$-R path
as the outer face and then the edge $\{a_0, a_4\}$ is a chord.
However, by our assumption $G$ is chordless. 
Now suppose $x_m = a_3$. Then consider the $a_2$-R path.
By planarity and the fact that one single face
borders the right hand side of the 
{$a_0$-R}
path which ends in $a_3$,
the $a_2$-R path cannot end in $a_4$, and therefore
it must end in $a_1$. However, considering the indices mod 5 this is exactly the same
situation with the  $a_0$-R path ending in $a_4$, another contradiction.
Finally,  if the    $a_0$-R path ends in  $x_m = a_2$, then the
 $a_1$-R path would violate planarity, or produce a bridge. 
We conclude that  $x_m = a_1$. And then 
 it follows that $x_{m-1} = b_1$, and we have
 a face with boundary $(a_0, b_0, \ldots, b_1, a_1, a_0)$ 
 from this $a_0$-R path.
 The same is true for all $a_i$-R paths. 

 In other words, we now have a pentagon face $P$ depicted as in Figure~\ref{fig:final pentagon}.

 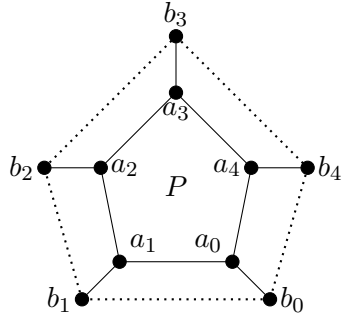
\begin{figure}[ht]
  \centering
%  \begin{subfigure}[b]{0.4\textwidth}
%  \centering
  \begin{tikzpicture}[scale=0.5]
  \node at (0,0.5) {$P$};
  \filldraw [black] (1.5,-1.5) circle (5pt);
  \node[above left] at (1.5,-1.5) {$a_0$};
  \filldraw [black] (-1.5,-1.5) circle (5pt);
  \node[above right] at (-1.5,-1.5) {$a_1$};
  \filldraw [black] (-2,1) circle (5pt);
  \node[right] at (-2,1) {$a_2$};
  \filldraw [black] (2,1) circle (5pt);
  \node[left] at (2,1) {$a_4$};
  \filldraw [black] (0,3) circle (5pt);
  \node[below] at (0,3) {$a_3$};
  \filldraw [black] (2.5,-2.5) circle (5pt);
  \node[right] at (2.5,-2.5) {$b_0$};
  \filldraw [black] (-2.5,-2.5) circle (5pt);
  \node[left] at (-2.5,-2.5) {$b_1$};
  \filldraw [black] (-3.5,1) circle (5pt);
  \node[left] at (-3.5,1) {$b_2$};
  \filldraw [black] (3.5,1) circle (5pt);
  \node[right] at (3.5,1) {$b_4$};
  \filldraw [black] (0,4.5) circle (5pt);
  \node[above] at (0,4.5) {$b_3$};
  \draw (1.5,-1.5) -- (-1.5,-1.5);
  \draw (-1.5, -1.5) -- (-2,1);
  \draw (-2,1) -- (0,3);
  \draw (0,3) -- (2,1);
  \draw (2,1) -- (1.5,-1.5);
  \draw (1.5,-1.5) -- (2.5,-2.5);
  \draw (-1.5,-1.5) -- (-2.5,-2.5);
  \draw (0,3) -- (0,4.5);
  \draw (-2,1) -- (-3.5,1);
  \draw (2,1) -- (3.5,1);
  \draw[thick, dotted] (2.5,-2.5) -- (-2.5,-2.5);
  \draw[thick, dotted] (-2.5,-2.5) -- (-3.5,1);
  \draw[thick, dotted] (-3.5,1) -- (0,4.5);
  \draw[thick, dotted] (0,4.5) -- (3.5,1);
  \draw[thick, dotted] (3.5,1) -- (2.5,-2.5);
  \end{tikzpicture}  
  \captionsetup{justification=centering}
   \caption{Pentagon in the original graph}
   \label{fig:final pentagon}
\end{figure}

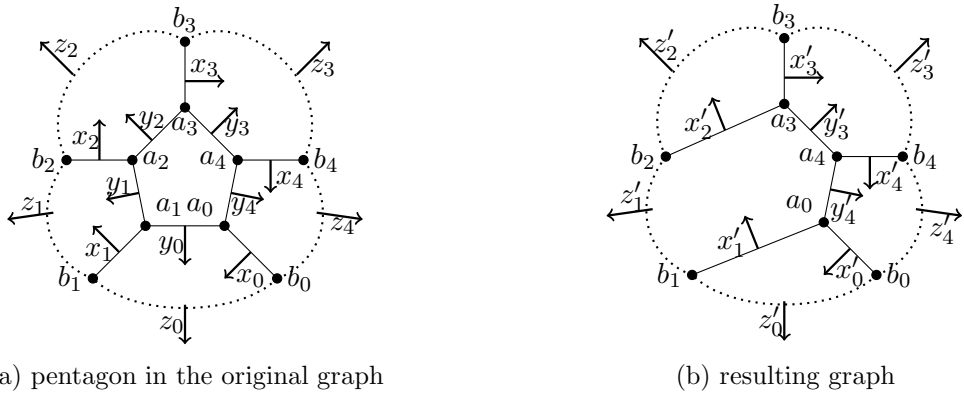
\begin{figure}[ht]
  \centering
  \begin{subfigure}[b]{0.5\textwidth}
  \centering
  \begin{tikzpicture}[scale=0.35]
  \filldraw [black] (1.5,-1.5) circle (5pt);
  \node[above left] at (1.5,-1.5) {$a_0$};
  \filldraw [black] (-1.5,-1.5) circle (5pt);
  \node[above right] at (-1.5,-1.5) {$a_1$};
  \filldraw [black] (-2,1) circle (5pt);
  \node[right] at (-2,1) {$a_2$};
  \filldraw [black] (2,1) circle (5pt);
  \node[left] at (2,1) {$a_4$};
  \filldraw [black] (0,3) circle (5pt);
  \node[below] at (0,3) {$a_3$};
  \filldraw [black] (3.5,-3.5) circle (5pt);
  \node[right] at (3.5,-3.5) {$b_0$};
  \filldraw [black] (-3.5,-3.5) circle (5pt);
  \node[left] at (-3.5,-3.5) {$b_1$};
  \filldraw [black] (-4.5,1) circle (5pt);
  \node[left] at (-4.5,1) {$b_2$};
  \filldraw [black] (4.5,1) circle (5pt);
  \node[right] at (4.5,1) {$b_4$};
  \filldraw [black] (0,5.5) circle (5pt);
  \node[above] at (0,5.5) {$b_3$};
  \draw (1.5,-1.5) -- (-1.5,-1.5);
  \draw (-1.5, -1.5) -- (-2,1);
  \draw (-2,1) -- (0,3);
  \draw (0,3) -- (2,1);
  \draw (2,1) -- (1.5,-1.5);
  \draw (1.5,-1.5) -- (3.5,-3.5);
  \draw (-1.5,-1.5) -- (-3.5,-3.5);
  \draw (0,3) -- (0,5.5);
  \draw (-2,1) -- (-4.5,1);
  \draw (2,1) -- (4.5,1);
  \draw[thick, dotted] (3.5,-3.5) .. controls (2,-5) and (-2,-5) .. (-3.5,-3.5);
  \draw[thick, dotted] (-3.5,-3.5) .. controls (-5,-3) and (-6,-1).. (-4.5,1);
  \draw[thick, dotted] (-4.5,1) .. controls (-6,4) and (-2,7) .. (0,5.5);
  \draw[thick, dotted] (4.5,1) .. controls (6,4) and (2,7) .. (0,5.5);
  \draw[thick, dotted] (3.5,-3.5) .. controls (5,-3) and (6,-1) .. (4.5,1); 
  \draw[->, line width=0.3mm] (2.5, -2.5) -- (1.5,-3.5);
  \draw[->, line width=0.3mm] (-2.5,-2.5) -- (-3.5,-1.5);
  \draw[->, line width=0.3mm] (-3.25,1) -- (-3.25,2.55);
  \draw[->, line width=0.3mm] (0,4) -- (1.5,4);
  \draw[->, line width=0.3mm] (3.25,1) -- (3.25,-0.25);
  \node at (2.5,-3.5) {$x_0$};
  \node at (-3.25,-2.5) {$x_1$};
  \node at (-3.75,1.75) {$x_2$};
  \node at (0.75,4.5) {$x_3$};
  \node at (4,0.25) {$x_4$};
  \draw[->, line width=0.3mm] (0,-1.5) -- (0,-3);
  \draw[->, line width=0.3mm] (-1.75,-0.25) -- (-3,-0.5);
  \draw[->, line width=0.3mm] (1.75,-0.25) -- (3,-0.5);
  \draw[->, line width=0.3mm] (-1.25,1.75) -- (-2.25,2.75);
  \draw[->, line width=0.3mm] (1,2) -- (2,3);
  \node at (-0.5,-2.25) {$y_0$};
  \node at (-2.5,0) {$y_1$};
  \node at (-1.25,2.5) {$y_2$};
  \node at (2,2.25) {$y_3$};
  \node at (2.25,-0.75) {$y_4$};
  \draw[->, line width=0.3mm] (0,-4.5) -- (0,-6);
  \draw[->, line width=0.3mm] (-5,-1) -- (-6.75,-1.25);
  \draw[->, line width=0.3mm] (5,-1) -- (6.75,-1.25);
  \draw[->, line width=0.3mm] (-4.25,4.25) -- (-5.5,5.5);
  \draw[->, line width=0.3mm] (4.25,4.25) -- (5.5,5.5);
  \node at (-0.5,-5.25) {$z_0$};
  \node at (-5.75,-0.75) {$z_1$};
  \node at (6,-1.5) {$z_4$};
  \node at (-4.5,5.25) {$z_2$};
  \node at (5.25,4.5) {$z_3$};
  \end{tikzpicture}
  \captionsetup{justification=centering}
  \caption{pentagon in the original graph}
  \label{pentagon_1}
  \end{subfigure}
  \begin{subfigure}[b]{0.45\textwidth}
  \centering
  \begin{tikzpicture}[scale=0.35]
  \filldraw [black] (1.5,-1.5) circle (5pt);
  \node[above left] at (1.5,-1.5) {$a_0$};
  \filldraw [black] (2,1) circle (5pt);
  \node[left] at (2,1) {$a_4$};
  \filldraw [black] (0,3) circle (5pt);
  \node[below] at (0,3) {$a_3$};
  \filldraw [black] (3.5,-3.5) circle (5pt);
  \node[right] at (3.5,-3.5) {$b_0$};
  \filldraw [black] (-3.5,-3.5) circle (5pt);
  \node[left] at (-3.5,-3.5) {$b_1$};
  \filldraw [black] (-4.5,1) circle (5pt);
  \node[left] at (-4.5,1) {$b_2$};
  \filldraw [black] (4.5,1) circle (5pt);
  \node[right] at (4.5,1) {$b_4$};
  \filldraw [black] (0,5.5) circle (5pt);
  \node[above] at (0,5.5) {$b_3$};
  \draw (0,3) -- (2,1);
  \draw (2,1) -- (1.5,-1.5);
  \draw (1.5,-1.5) -- (3.5,-3.5);
  \draw (0,3) -- (0,5.5);
  \draw (2,1) -- (4.5,1);
  \draw[thick, dotted] (3.5,-3.5) .. controls (2,-5) and (-2,-5) .. (-3.5,-3.5);
  \draw[thick, dotted] (-3.5,-3.5) .. controls (-5,-3) and (-6,-1).. (-4.5,1);
  \draw[thick, dotted] (-4.5,1) .. controls (-6,4) and (-2,7) .. (0,5.5);
  \draw[thick, dotted] (4.5,1) .. controls (6,4) and (2,7) .. (0,5.5);
  \draw[thick, dotted] (3.5,-3.5) .. controls (5,-3) and (6,-1) .. (4.5,1); 
  \draw (-3.5,-3.5) -- (1.5,-1.5); % b_1 -- a_0
  \draw (-4.5,1) -- (0,3); % b_2 -- a_3
  \draw[->, line width=0.3mm] (2.5, -2.5) -- (1.5,-3.5);
  \draw[->, line width=0.3mm] (0,4) -- (1.5,4);
  \draw[->, line width=0.3mm] (3.25,1) -- (3.25,-0.25);
  \draw[->, line width=0.3mm] (-1,-2.5) -- (-1.5,-1.25); % x'_1
  \draw[->, line width=0.3mm] (-2.25,2) -- (-2.75,3.25); % x'_2
  \node at (2.5,-3.35) {$x'_0$};
  \node at (-2,-2.25) {$x'_1$};
  \node at (-3.25,2.25) {$x'_2$};
  \node at (0.75,4.65) {$x'_3$};
  \node at (4,0.25) {$x'_4$};
  \draw[->, line width=0.3mm] (1.75,-0.25) -- (3,-0.5);
  \draw[->, line width=0.3mm] (1,2) -- (2,3);
  \node at (2,2.25) {$y'_3$};
  \node at (2.25,-1) {$y'_4$};
  \draw[->, line width=0.3mm] (0,-4.5) -- (0,-6);
  \draw[->, line width=0.3mm] (-5,-1) -- (-6.75,-1.25);
  \draw[->, line width=0.3mm] (5,-1) -- (6.75,-1.25);
  \draw[->, line width=0.3mm] (-4.25,4.25) -- (-5.5,5.5);
  \draw[->, line width=0.3mm] (4.25,4.25) -- (5.5,5.5);
  \node at (-0.5,-5.25) {$z'_0$};
  \node at (-5.75,-0.5) {$z'_1$};
  \node at (6,-1.75) {$z'_4$};
  \node at (-4.5,5.25) {$z'_2$};
  \node at (5.25,4.5) {$z'_3$};
  \end{tikzpicture}
  \captionsetup{justification=centering}
  \caption{resulting graph}
  \label{pentagon_2}
  \end{subfigure}
  \captionsetup{justification=centering}
  \caption{Transforming a pentagon}
  \label{pentagon}
\end{figure}
We now perform the transformation as illustrated in Figure~\ref{pentagon}. 
The transformed graph ({\bf b}) on the right is not $M_{2,3}$ or $K_4$
by vertex count. By induction there is a P3EM $M'$ on the transformed graph.
We use  Boolean variables $x'_i$ ($0 \leq i \leq 4$), 
and $y'_3$, $y'_4$ to denote the assignment on those 7 edges in Figure~\ref{pentagon}b, 
such that the variable is  1 
 if the  corresponding  edge is assigned to the face indicated by its arrow, and is 0 if 
 it is assigned to the face on the other side.
 We also use
nonnegative integer
 variables $z'_i$ ($0 \leq i \leq 4$) to denote the number  of edges
 assigned to the  side indicated along the simple path $b_i$ to $b_{i+1}$.
Now we define a P3EM $M$ on $G$ using $M'$ as follows.
 All edges in $G$ that are not incident to $a_0, a_1, a_2, a_3$ or $a_4$ 
 will retain their assignment as in $M'$. These include all edges
 on the path $b_i$ to $b_{i+1}$ (and all edges beyond these simple paths that are not
 depicted in Figure~\ref{pentagon}a.)
 In particular, if $z_i$ ($0 \leq i \leq 4$) is the number  of edges
 assigned to the  side indicated along the simple path $b_i$ to $b_{i+1}$ in $G$,
 then $z_i = z'_i$.
For the 10 edges incident to at least one of $a_0, a_1, a_2, a_3$ or $a_4$ in Figure~\ref{pentagon}a we will use  Boolean variables $x_i$ and $y_i$ ($0 \leq i \leq 4$) to denote
the  assignment of $M$ on $G$, such that the variable is  1 
 if the  corresponding  edge is assigned to the face indicated by its arrow, and is 0 
 otherwise.
 
 A moment's reflection will convince the reader that
 $M$ is a P3EM on $G$ iff we can assign Boolean 0-1 variables 
 $x_i$ and $y_i$ ($0 \leq i \leq 4$) that satisfy the following equation system $(\Sigma)$,
 where $\overline{x} = 1- x \in \{0, 1\}$ denotes the negation of
 the  Boolean variable  $x$.
{\small
\begin{empheq}[left=(\Sigma)~~~\empheqlbrace]{align}
  &x_0 + y_0 + \overline{x_1} \equiv x'_0 + \overline{x'_1} \pmod{3} \notag
  \\
  &x_2 + y_2 + \overline{x_3} \equiv x'_2 + \overline{x'_3} \pmod{3} \notag
  \\
  &x_3 + y_3 + \overline{x_4} \equiv x'_3 + y'_3 + \overline{x'_4} \pmod{3} \notag
  \\
  &x_4 + y_4 + \overline{x_0} \equiv x'_4 + y'_4 + \overline{x'_0} \pmod{3} \notag
  \\
  &x_1 + y_1 + \overline{x_2} \equiv x'_1 + \overline{x'_2} + \overline{y'_3} + \overline{y'_4} \pmod{3} \notag
  \\
  &\sum\limits_{i=0}^{4} \overline{y_i} \equiv 0 \pmod{3} \notag
\end{empheq}
}

We note that, while this equation system consists of all linear equations
mod 3, it is not an ordinary linear equation system over $\mathbb{Z}_3$;
the complicating factor is that all the variables must take Boolean values in $\{0, 1\}$.
Somewhat miraculously, we show that for any Boolean values of
$x'_i$ ($0 \leq i \leq 4$) 
and $y'_3$, $y'_4$, we  can always solve the equation system $(\Sigma)$
for the Boolean variables 
 $x_i$ and $y_i$ ($0 \leq i \leq 4$). 
 
 If $(y'_3, y'_4) \neq (0,0)$, then we set $x_i = x'_i$ for $0 \leq i \leq 4$, $y_3 = y'_3$, $y_4 = y'_4$, $y_0=y_2 =0$,  and
$y_1 = \overline{y'_3} + \overline{y'_4} \in \{ 0, 1\}$. (Note that
 $(y'_3, y'_4) \neq (0,0)$ is used to obtain
$\overline{y'_3} + \overline{y'_4} \in \{ 0, 1\}$.)
One can check that this assignment
solves $(\Sigma)$.  

Now suppose $(y'_3, y'_4) = (0,0)$. The system of equations now becomes
{\small
\begin{empheq}[left=(\Sigma')~~~\empheqlbrace]{align}
  &x_0 + y_0 + \overline{x_1} \equiv x'_0 + \overline{x'_1} \pmod{3} \notag
  \\
  &x_2 + y_2 + \overline{x_3} \equiv x'_2 + \overline{x'_3} \pmod{3} \notag
  \\
  &x_3 + y_3 + \overline{x_4} \equiv x'_3 + \overline{x'_4} \pmod{3} \notag
  \\
  &x_4 + y_4 + \overline{x_0} \equiv x'_4 + \overline{x'_0} \pmod{3} \notag
  \\
  &x_1 + y_1 + \overline{x_2} \equiv x'_1 + \overline{x'_2} + 2 \pmod{3} \notag
  \\
  &\sum\limits_{i=0}^{4}  \overline{y_i} \equiv 0 \pmod{3} \notag
\end{empheq}
}

If $x'_1 = 0$, then we set $x_1 = 1$, and $x_i = x'_i$ for $0 \leq i \leq 4$, $i \neq 1$,
and set $y_2 = y_3 = y_4 = 0$, $y_0 = y_1 = 1$.
One can check that this assignment
solves  $(\Sigma')$. If $(x'_1, x'_2) = (1,1)$, then we set
$x_2 = 0$, and $x_i = x'_i$ for $0 \leq i \leq 4$,  $i \neq 2$, and set
$y_0 = y_3 = y_4 = 0$, $y_1 = y_2 = 1$. This  solves $(\Sigma')$. Thus it remains to consider the case when $(x'_1, x'_2) = (1,0)$. 
There remain eight cases, each corresponding to an assignment  
$(x'_0, x'_3, x'_4) \in \{0, 1\}^3$.

At this point we have $(y'_3, y'_4) = (0,0)$ in addition to $(x'_1, x'_2) = (1,0)$, so 
now we are  in a situation in Figure~\ref{pentagon}b where $x'_1, x'_2, y'_3$ and $y'_4$ are all pointing into the face bounded by the cycle $(b_1, a_0, a_4, a_3, b_2, \ldots, b_1)$. By 
a reflection along the $\{a_4,b_4\}$-axis, we only need to consider four cases,
with $x'_4 =0$.  These four cases are explicitly given in
Figure~\ref{geometry}, where we use a double arrow to indicate
an actual assignment of the corresponding edge into the face indicated.
For example, the following figure deals with the case  $(x'_0, x'_3 , x'_4) =(0, 0, 0)$.
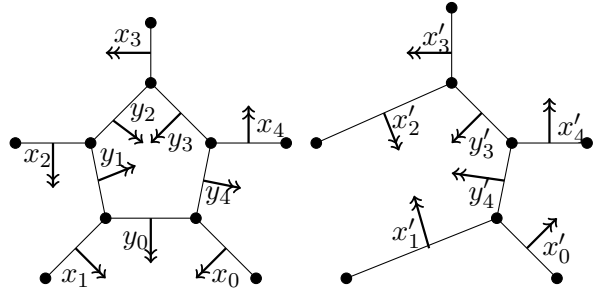
\begin{figure}[H]
  \centering
  \begin{tikzpicture}[scale=0.4]
  \filldraw [black] (1.5,-1.5) circle (5pt);
%  \node[above left] at (1.5,-1.5) {$a_0$};
  \filldraw [black] (-1.5,-1.5) circle (5pt);
%  \node[above right] at (-1.5,-1.5) {$a_1$};
  \filldraw [black] (-2,1) circle (5pt);
%  \node[right] at (-2,1) {$a_2$};
  \filldraw [black] (2,1) circle (5pt);
%  \node[left] at (2,1) {$a_4$};
  \filldraw [black] (0,3) circle (5pt);
%  \node[below] at (0,3) {$a_3$};
  \filldraw [black] (3.5,-3.5) circle (5pt);
%  \node[right] at (3.5,-3.5) {$b_0$};
  \filldraw [black] (-3.5,-3.5) circle (5pt);
%  \node[left] at (-3.5,-3.5) {$b_1$};
  \filldraw [black] (-4.5,1) circle (5pt);
%  \node[left] at (-4.5,1) {$b_2$};
  \filldraw [black] (4.5,1) circle (5pt);
%  \node[right] at (4.5,1) {$b_4$};
  \filldraw [black] (0,5) circle (5pt);
%  \node[above] at (0,5.5) {$b_3$};
  \draw (1.5,-1.5) -- (-1.5,-1.5);
  \draw (-1.5, -1.5) -- (-2,1);
  \draw (-2,1) -- (0,3);
  \draw (0,3) -- (2,1);
  \draw (2,1) -- (1.5,-1.5);
  \draw (1.5,-1.5) -- (3.5,-3.5);
  \draw (-1.5,-1.5) -- (-3.5,-3.5);
  \draw (0,3) -- (0,5);
  \draw (-2,1) -- (-4.5,1);
  \draw (2,1) -- (4.5,1);
  \draw[->>, line width=0.3mm] (2.5, -2.5) -- (1.5,-3.5);
  \draw[->>, line width=0.3mm] (-2.5,-2.5) -- (-1.5,-3.5);
  \draw[->>, line width=0.3mm] (-3.25,1) -- (-3.25,-0.5);
  \draw[->>, line width=0.3mm] (0,4) -- (-1.5,4);
  \draw[->>, line width=0.3mm] (3.25,1) -- (3.25,2.25);
%  \node at (2.5,-3.5) {$x_0$};
%  \node at (-3.25,-2.5) {$x_1$};
%  \node at (-3.75,1.75) {$x_2$};
%  \node at (0.75,4.5) {$x_3$};
%  \node at (4,0.25) {$x_4$};
  \draw[->>, line width=0.3mm] (0,-1.5) -- (0,-3);
  \draw[->>, line width=0.3mm] (-1.75,-0.25) -- (-0.5,0.25);
  \draw[->>, line width=0.3mm] (1.75,-0.25) -- (3,-0.5);
  \draw[->>, line width=0.3mm] (-1.25,1.75) -- (-0.25,1);
  \draw[->>, line width=0.3mm] (1,2) -- (0,1);
  \node at (2.5,-3.5) {$x_0$};
  % \node at (-3.25,-2.5) {$x_1$};
  \node at (-2.5,-3.5) {$x_1$};
  % \node at (-3.75,1.75) {$x_2$};
  \node at (-3.75,0.5) {$x_2$};
  % \node at (0.75,4.5) {$x_3$};
  \node at (-0.75,4.5) {$x_3$};
  % \node at (4,0.25) {$x_4$};
  \node at (4,1.5) {$x_4$};
  \node at (-0.5,-2.25) {$y_0$};
  % \node at (-2.5,0) {$y_1$};
  \node at (-1.25,0.5) {$y_1$};
  % \node at (-1.25,2.5) {$y_2$};
  \node at (-0.25,2) {$y_2$};
  % \node at (2,2.25) {$y_3$};
  \node at (1,1) {$y_3$};
  \node at (2.25,-0.75) {$y_4$};
%  \node at (-0.5,-2.25) {$y_0$};
%  \node at (-2.5,0) {$y_1$};
%  \node at (-1.25,2.5) {$y_2$};
%  \node at (2,2.25) {$y_3$};
%  \node at (2.25,-0.75) {$y_4$};
  \filldraw [black] (11.5,-1.5) circle (5pt);
%  \node[above left] at (1.5,-1.5) {$a_0$};
  \filldraw [black] (12,1) circle (5pt);
%  \node[left] at (2,1) {$a_4$};
  \filldraw [black] (10,3) circle (5pt);
%  \node[below] at (0,3) {$a_3$};
  \filldraw [black] (13.5,-3.5) circle (5pt);
%  \node[right] at (3.5,-3.5) {$b_0$};
  \filldraw [black] (6.5,-3.5) circle (5pt);
%  \node[left] at (-3.5,-3.5) {$b_1$};
  \filldraw [black] (5.5,1) circle (5pt);
%  \node[left] at (-4.5,1) {$b_2$};
  \filldraw [black] (14.5,1) circle (5pt);
%  \node[right] at (4.5,1) {$b_4$};
  \filldraw [black] (10,5.5) circle (5pt);
%  \node[above] at (0,5.5) {$b_3$};
  \draw (10,3) -- (12,1);
  \draw (12,1) -- (11.5,-1.5);
  \draw (11.5,-1.5) -- (13.5,-3.5);
  \draw (10,3) -- (10,5.5);
  \draw (12,1) -- (14.5,1);
  \draw (6.5,-3.5) -- (11.5,-1.5); % b_1 -- a_0
  \draw (5.5,1) -- (10,3); % b_2 -- a_3
  \draw[->>, line width=0.3mm] (12.5, -2.5) -- (13.5,-1.5);
  \draw[->>, line width=0.3mm] (10,4) -- (8.5,4);
  \draw[->>, line width=0.3mm] (13.25,1) -- (13.25,2.5);
  \draw[->>, line width=0.3mm] (9.25,-2.5) -- (8.75,-0.75); % x'_1
  \draw[->>, line width=0.3mm] (7.75,2) -- (8.25,0.75); % x'_2
%  \node at (2.5,-3.35) {$x'_0$};
%  \node at (-2,-2.25) {$x'_1$};
%  \node at (-3.25,2.25) {$x'_2$};
%  \node at (0.75,4.65) {$x'_3$};
%  \node at (4,0.25) {$x'_4$};
  \draw[->>, line width=0.3mm] (11.75,-0.25) -- (10,0);
  \draw[->>, line width=0.3mm] (11,2) -- (10,1);
%  \node at (2,2.25) {$y'_3$};
%  \node at (2.25,-1) {$y'_4$};
  \node at (13.5,-2.5) {$x'_0$};
  \node at (8.5,-2) {$x'_1$};
  \node at (8.5,1.75) {$x'_2$};
  \node at (9.5,4.5) {$x'_3$};
  % \node at (4,0.25) {$x_4$};
  \node at (14,1.5) {$x'_4$};
  % \node at (2,2.25) {$y_3$};
  \node at (11,1) {$y'_3$};
  \node at (11,-0.65) {$y'_4$};
  \end{tikzpicture}
  \captionsetup{justification=centering}
  \caption{$(x'_0,x'_3,x'_4) = (0,0,0)$}
\label{fig:geometric inside proof}
\end{figure}

For other cases, see Figure~\ref{geometry}. 

Finally, we note that  the proof is constructive.
When smaller graphs are defined for induction purposes,
the size of the smaller graph strictly decreases
and in the case when two smaller graphs are needed (as in the case
dealing with a chord or getting distinct $b_i$'s) the sum of
sizes of the smaller graphs is approximately that of the original graph.
Tracing through the proof it can be easily verified that 
a planar 3-way edge
matching can be found in polynomial time.
This completes the proof of Theorem~\ref{thm:P3EM}.
\end{proof}
\section{Dichotomy Theorem}

In this section we start the proof of Theorem~\ref{thm:main}.
When $\neg (f_0=f_3=0)$, (i.e., it is not the case that both $f_0=0$ and $f_1=0$),  by dividing a nonzero constant and possibly flipping
0 and 1 without changing the complexity of Holant, 
we can normalize the signature $[f_0,f_1,f_2,f_3]$ to be $[1,a,b,c]$. 
We first deal with a special case where $a=b$ and $c=1$.

\begin{lemma}\label{lemma:[1,a,a,1]}
\plholant{[1,a,a,1]}{(=_3)} is in $\operatorname{FP}$.
\end{lemma}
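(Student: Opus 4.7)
The plan is to exhibit a single holographic transformation that converts both $[1,a,a,1]$ and $(=_3)$ into arity-$3$ symmetric signatures supported only on even Hamming weight---i.e., into matchgate-realizable signatures---so that Valiant's holographic reduction together with the FKT algorithm yields an FP procedure.

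First I would apply the Hadamard-like matrix $H = \smm{1}{1}{1}{-1}$, the same matrix already used in Section~\ref{sec:intro} for case~(5) of Theorem~\ref{thm:main}. The identity $(H^{-1})^{\otimes 3}(=_3) = \tfrac{1}{4}[1,0,1,0]$ is computed there. A direct entrywise calculation, using the palindromic symmetry $f_i = f_{3-i}$ of $f = [1,a,a,1]$ to see that the two odd-weight entries cancel, should give
$$[1,a,a,1]\,H^{\otimes 3} \;=\; [\,2+6a,\; 0,\; 2-2a,\; 0\,].$$
Absorbing the global scalar $\tfrac{1}{4}$ into the Holant value, these two computations produce
$$\plholant{[1,a,a,1]}{(=_3)} \;\equiv_{T}\; \plholant{[2+6a,\,0,\,2-2a,\,0]}{[1,0,1,0]}.$$

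Second, I would observe that every odd-weight entry vanishes on both sides of the right-hand Holant, so both signatures lie in the (symmetric, arity-$3$) matchgate class: every signature of the form $[p,0,q,0]$ is realized by a small planar matchgate, a standard tractable subclass (see~\cite{cai2017complexity}). The transformed problem is therefore a planar matchgate Holant, and can be evaluated in polynomial time via the FKT algorithm on the composed matchgrid.

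The only loose ends are the two degenerate specializations $a=1$ and $a=-1/3$, at which the transformed LHS signature collapses further to $[8,0,0,0]$ or $[0,0,8/3,0]$ respectively; both are still matchgate signatures, so the FKT reduction goes through unchanged. I do not anticipate any genuine obstacle here: the entire argument rests on the single observation that one and the same matrix $H$ simultaneously aligns the parity supports of $[1,a,a,1]$ and $(=_3)$, after which matchgate tractability is off the shelf.
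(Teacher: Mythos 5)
Your proof is correct and takes essentially the same approach as the paper: apply the holographic transformation by $H = \smm{1}{1}{1}{-1}$, compute $[1,a,a,1]H^{\otimes 3} = [2+6a,0,2-2a,0]$ and $(H^{\otimes 3})^{-1}(=_3) = \tfrac{1}{4}[1,0,1,0]$, observe that both are matchgate signatures, and invoke FKT.
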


% \begin{proof}
% Perform the holographic transformation by the Hadamard matrix $H= \left(
% \begin{smallmatrix}
%     1 & 1 \\  1 & -1
% \end{smallmatrix} \right)$, we get $$[1,a,a,1] H^{\otimes 3} = [2+6a,0,2-2a,0],$$ and $$(H^{\otimes 3})^{-1} [1,0,0,1] = \frac{1}{4}[1,0,1,0],$$
% which is $[1,0, 1, 0]$ after a constant normalization. Both transformed signatures are matchgate signatures~\cite{CaiL11} and thus the problem $$\plholant{[1,a,a,1]}{(=_3)} \equiv \plholant{[2+6a,0,2-2a,0]}{[1,0,1,0]}$$ can be solved in polynomial time by the FKT algorithm.
% \end{proof}

\begin{proof}
Perform the holographic transformation by the Hadamard matrix $H= \left[
\begin{smallmatrix}
    1 & 1 \\  1 & -1
\end{smallmatrix} \right]$, to $[1,a,a,1]$ on the left 
and $(=_3) = [1,0,0,1]$ on the right in the bipartite setting,  we get $$[1,a,a,1] H^{\otimes 3} = [2+6a,0,2-2a,0],~~~\mbox{and}~~~(H^{\otimes 3})^{-1} [1,0,0,1] = \tfrac{1}{4}[1,0,1,0].$$
%which is $[1,0, 1, 0]$ after a constant normalization.
Both transformed signatures are matchgate signatures~\cite{CaiL11} and thus the problem %$$\plholant{[1,a,a,1]}{(=_3)} \equiv \plholant{[2+6a,0,2-2a,0]}{[1,0,1,0]}$$ 
can be solved in polynomial time by the FKT algorithm.
\end{proof}

%We now deal with 
Another special case where $b=1$ and $a=c$  will be needed later.
%(see full version for details).

\begin{lemma}\label{lemma:[1,a,1,a]}
\plholant{[1,a,1,a]}{(=_3)} is \numP-hard unless $a=0$ or $\pm 1$, in which cases it is in $\operatorname{FP}$.
\end{lemma}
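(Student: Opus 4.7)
The plan is to dispatch the three tractable values of $a$ directly, then establish \#P-hardness for every other $a$ via a Hadamard-based holographic transformation combined with a planarized reduction using virtual unary signatures.

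For tractability: at $a=0$, $f = [1,0,1,0]$ lies in the affine class (case (3) of Theorem~\ref{thm:main}); at $a=1$, $f = [1,1,1,1] = (1,1)^{\otimes 3}$ is degenerate; at $a=-1$, $f = [1,-1,1,-1] = (1,-1)^{\otimes 3}$ is again degenerate. In all three cases the Holant sum factors across vertices and is polynomial-time computable (the $a = 0$ case being analogous to Lemma~\ref{lemma:[1,a,a,1]}, via a Hadamard holographic transformation into matchgate-tractable form).

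For hardness when $a \notin \{0, \pm 1\}$, my first step is the holographic transformation by $H = \smm{1}{1}{1}{-1}$. A direct calculation yields $[1,a,1,a]\,H^{\otimes 3} = [4(1+a),\,0,\,0,\,4(1-a)]$ and $(H^{-1})^{\otimes 3}(=_3) = \tfrac{1}{4}[1,0,1,0]$, so after absorbing nonzero scalars
\[
\plholant{[1,a,1,a]}{(=_3)} \;\equiv_T\; \plholant{[1+a,\,0,\,0,\,1-a]}{[1,0,1,0]}.
\]
In the transformed problem, every nonzero term collapses each LHS vertex's three incident edges to a common Boolean value $y_u$ (weighted by $1+a$ if $y_u=0$ and $1-a$ if $y_u=1$), while each RHS vertex imposes an XOR-to-zero parity constraint on its three incident edges. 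Next I would reduce from a \#P-hard Holant problem whose reduction can be planarized. Following the strategy of~\cite{CaiFL21}, I would produce virtual unary signatures on the LHS by interpolating straddled degenerate binary gadgets, each of which yields the desired unary on one side plus a leftover unary on the other; the leftover unaries are then absorbed in triples per face using Theorem~\ref{thm:P3EM}, after which the reduction from a non-planar hardness instance for $[1,a,1,a]$ goes through.

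The main obstacle is precisely this planarization step: in the 3-regular bipartite setting, any same-side gadget has arity a multiple of $3$, so unary signatures cannot be produced directly and must be obtained virtually. Theorem~\ref{thm:P3EM} bypasses this obstruction by guaranteeing that a planar 3-way edge matching exists (except in the two trivial exceptional graphs), so the leftover unaries can be grouped triple-by-triple within each face and absorbed into a scalar constant without disturbing planarity. The remaining details are routine: checking that the gadget calculations proceed and that the interpolation/composition remains non-degenerate for every $a \notin \{0, \pm 1\}$, which uses that $1\pm a \ne 0$ and that the effective weight $\tfrac{1-a}{1+a} \notin \{0, 1\}$ throughout this range.
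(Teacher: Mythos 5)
Your sketch does not close; there are two genuine gaps.

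\textbf{The P3EM / virtual-unary machinery does not apply to $[1,a,1,a]$.} That pipeline (Lemma~\ref{3.1.1}, Lemma~\ref{3.1.2}, Lemma~\ref{y11x}) starts from the binary straddled gadget $G_1 = \left[\begin{smallmatrix}1 & b \\ a & c\end{smallmatrix}\right]$ and explicitly requires $c \neq ab$ so that $G_1$ is non-singular and can be iterated to interpolate a degenerate straddled signature. For $f = [1,a,1,a]$ we have $b=1$, $c=a$, hence $c = ab$ and $G_1 = \left[\begin{smallmatrix}1 & 1 \\ a & a\end{smallmatrix}\right]$ is already rank one. Lemma~\ref{3.1.1} and the ``$G_1$ works'' condition (Definition~\ref{d1}, Proposition~\ref{g1-rou}) all collapse at this point. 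Moreover even if one uses the degenerate $G_1$ directly as $D$ (with $x=1$, $y=1/a$), the first reduction of Lemma~\ref{3.1.2} produces the degenerate binary $f^\flat = (1+a)[1,1,1]$, which yields nothing, and the second only reduces to $\plholant{[1,a,1,a]}{[1/a,0,1]}$, for which hardness is itself unestablished. So the ``planarization step'' is not the obstacle you identify; the obstacle is that the very gadget the P3EM pipeline begins with is degenerate for this family of signatures, and that is precisely why the paper treats $[1,a,1,a]$ as a standalone case.

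\textbf{The Hadamard step is correct but disconnected.} Your computation $[1,a,1,a]H^{\otimes 3} = 4[1+a,0,0,1-a]$ and $(H^{-1})^{\otimes 3}[1,0,0,1] = \tfrac{1}{4}[1,0,1,0]$ checks out, but you never use the transformed problem; the argument immediately reverts to speaking about $[1,a,1,a]$. In the paper the Hadamard transformation serves the \emph{other} special case $[1,a,a,1]$ (Lemma~\ref{lemma:[1,a,a,1]}), where both transformed signatures land in the matchgate class. Here the transformed pair $([1+a,0,0,1-a]\,|\,[1,0,1,0])$ is neither a matchgate problem nor affine for $a \notin\{0,\pm1\}$, and you give no further reduction from it.

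\textbf{What the paper actually does.} It invokes the non-planar hardness from~\cite{CaiFL21}, then reduces $\holant{[1,a,1,a]}{(=_3)}$ to $\plholant{[1,a,1,a]}{(=_3)}$ by interpolating the planar cross-over signature $\mathcal{C}$ (Lemma~\ref{lemma:cross-over}). The interpolation uses the arity-$4$ gadget $G_4$ whose signature matrix, after normalization, has the same zero pattern as $\mathcal{C}$ with one free parameter $z$; the chain $\Gamma_{2s+1}$ of $2s+1$ copies keeps that shape with parameter $x_s$, and a recurrence argument (split by sign of $a$) shows the $x_s$ are pairwise distinct, which feeds a Vandermonde system to recover the target Holant value. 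This is a different key idea from the P3EM argument, and it is the one this lemma needs.
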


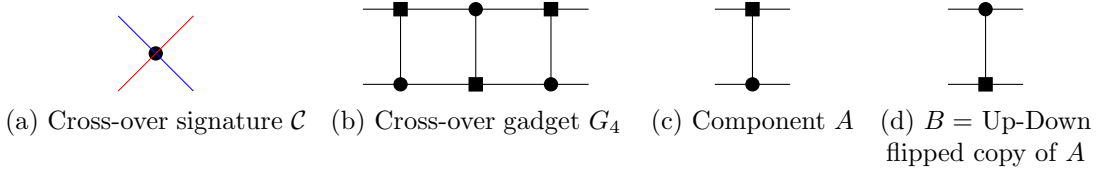
\begin{figure}[ht]
  \centering
  \begin{subfigure}[t]{0.25\textwidth}
  \centering
  \begin{tikzpicture}[scale=0.5]
  \filldraw (0,0) circle (5pt);
  \draw[red] (0,0) -- (-1,-1);
  \draw[blue] (0,0) -- (-1,1);
  \draw[red] (0,0) -- (1,1);
  \draw[blue] (0,0) -- (1,-1);
  \end{tikzpicture}
  \captionsetup{justification=centering}
  \caption{Cross-over signature $\mathcal{C}$}
  \label{cross-over signature}
  \end{subfigure}
  \begin{subfigure}[t]{0.25\textwidth}
  \centering
  \begin{tikzpicture}[scale=0.5]
  \draw[black, very thick, fill=black] (-2.15,0.85) rectangle (-1.85,1.15);
  \filldraw (-2,-1) circle (5pt);
  \draw[black, very thick, fill=black] (-0.15,-1.15) rectangle (0.15,-0.85);
  \filldraw (0,1) circle (5pt);
  \draw[black, very thick, fill=black] (1.85,0.85) rectangle (2.15,1.15);
  \filldraw (2,-1) circle (5pt);
  \draw (-3,1) -- (3,1);
  \draw (-2,1) -- (-2,-1);
  \draw (0,1) -- (0,-1);
  \draw (2,1) -- (2,-1);
  \draw (-3,-1) -- (3,-1);
  \end{tikzpicture}
  \captionsetup{justification=centering}
  \caption{Cross-over gadget $G_4$}
  \label{G4}
  \end{subfigure}
  \begin{subfigure}[t]{0.18\textwidth}
  \centering
  \begin{tikzpicture}[scale=0.5]
  \draw[black, very thick, fill=black] (-0.15,0.85) rectangle (0.15,1.15);
  \filldraw (0,-1) circle (5pt);
  \draw (0,-1) -- (0,1);
  \draw (1,1) -- (-1,1);
  \draw (-1,-1) -- (1,-1);
  \end{tikzpicture}
  \captionsetup{justification=centering}
  \caption{Component $A$}
  \label{G4A}
  \end{subfigure}
  \begin{subfigure}[t]{0.18\textwidth}
  \centering
  \begin{tikzpicture}[scale=0.5]
  \draw[black, very thick, fill=black] (-0.15,-1.15) rectangle (0.15,-0.85);
  \filldraw (0,1) circle (5pt);
  \draw (0,-1) -- (0,1);
  \draw (1,1) -- (-1,1);
  \draw (-1,-1) -- (1,-1);
  \end{tikzpicture}
  \captionsetup{justification=centering}
  \caption{$B = $ Up-Down flipped copy of $A$}
  \label{G4B}
  \end{subfigure}
  \captionsetup{justification=centering}
  \caption{Interpolate the cross-over signature $\mathcal{C}$}
\end{figure}

To prove Lemma~\ref{lemma:[1,a,1,a]}, let us define the cross-over  signature $\mathcal{C}$ of arity 4,  illustrated in Figure~\ref{cross-over signature}. It is 0-1 valued, and it takes value 1
iff the two  red dangling edges are equal and 
%(after being merged) represent an equality, as do 
the two blue dangling edges are equal; furthermore, it is a straddled signature where the two top dangling edges are to be connected to RHS externally and the two bottom dangling edges are to be connected to LHS externally. 
%%%%%%%%JYC omit the next segment in the para for the short version
In Figure~\ref{cross-over signature} only one vertex is present
pictorially.
However, when this signature is actually
implemented  or interpolated by some construction, the internal vertices that the two top  dangling edges are incident to
are LHS vertices; and
the  internal vertices that the  two bottom dangling edges 
are incident to
are RHS vertices.
%the same is true for the two bottom   dangling edges.
%We may consider the two top (resp.\ bottom) dangling edges are connected internally to LHS
%(resp.\ RHS) vertices.  (In Figure~\ref{cross-over signature} only one vertex is present; however when $\mathcal{C}$
%is implemented or interpolated, the vertices that the two top  dangling edges are incident to
%are distinct vertices; the same is true for the two bottom   dangling edges.)
%%% I need to say this, because later when i talk about having an edge intersecting many edges 
%% and use many copies of C, this is easier to say.

For later convenience, we will write the  signature matrix for $\mathcal{C}$
with
rows  (resp.\ columns)   indexed by $(b_1, b_2) \in \{0, 1\}^2$
corresponding to the dangling edges on the leftside (resp. rightside)
as it appears in Figure~\ref{cross-over signature} (not the LHS, RHS
designation according to the bipartiteness), 
with $b_1$ for the top  edge.
The signature matrix is $\mathcal{C} = \left( \begin{smallmatrix}
    1 & 0 & 0 & 0 \\
    0 & 0 & 1 & 0 \\
    0 & 1 & 0 & 0 \\
    0 & 0 & 0 & 1 
\end{smallmatrix}\right)$. Note that the signature matrix of $\mathcal{C}$ is invariant under a cyclic rotation by 90$^\circ$ of the graph in Figure~\ref{cross-over signature}. The importance of this cross-over signature is conveyed in the following lemma.
%which has been used implicitly among previous works.

We say a signature $f$ can be  planarly constructed or interpolated
if there is a polynomial time construction of planar gadgets or a sequence
of planar gadgets with external dangling edges conforming
to that of $f$ with respect to its bipartite specification,   such that 
the construction implements or interpolates $f$.
\begin{lemma}\label{lemma:cross-over}
For any signature sets $\mathcal{F}, \mathcal{G}$, if the cross-over signature $\mathcal{C}$ can be planarly constructed or interpolated, then $\holant{\mathcal{F}}{\mathcal{G}} \leq_{T} \plholant{\mathcal{F}}{\mathcal{G}}$.
\end{lemma}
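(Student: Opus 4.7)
The plan is to convert any non-planar bipartite instance of $\holant{\mathcal{F}}{\mathcal{G}}$ into an equivalent planar instance by substituting the cross-over gadget at every edge crossing of a plane drawing. First, I would fix an arbitrary drawing of the bipartite graph $G = (U,V,E)$ underlying the given signature grid $\Omega$ in the plane in which only simple transverse crossings occur (achievable by standard general-position perturbation); the number of crossings is polynomial in the size of $G$.

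The structural enabler is an observation about bipartiteness around each crossing. Labeling the four corners of a crossing by $U$ or $V$ according to which side the external endpoint of the corresponding half-edge lies on, and noting that opposite corners are the two ends of a common edge, each opposite pair must contain one $U$-label and one $V$-label. Hence the cyclic type around a crossing is never the alternating $UVUV$, and is always a rotation of $UUVV$. Since the signature matrix of $\mathcal{C}$ is invariant under a $90^\circ$ rotation, the cross-over gadget can be oriented at each crossing so that its two top (RHS-bound) dangling edges attach to the $V$-corners and its two bottom (LHS-bound) dangling edges attach to the $U$-corners.

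When $\mathcal{C}$ is exactly realizable by a planar gadget, the reduction is immediate: replace each crossing by a properly rotated copy of this gadget to obtain a planar bipartite signature grid $\Omega'$. By the permutation-matrix structure of $\mathcal{C}$, summing over internal assignments at each gadget forces the two corners belonging to a common original edge to carry equal values; chaining this equality through all crossings traversed by an edge shows that Holant-contributing assignments on $\Omega'$ correspond bijectively and weight-preservingly to assignments on $\Omega$, giving $\operatorname{Holant}(\Omega') = \operatorname{Holant}(\Omega)$.

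The expected main obstacle is the interpolation case. When $\mathcal{C}$ is only polynomially interpolated by a family of planar gadgets, I would follow the standard Valiant-style scheme: parameterize the gadget family by a scalar $s$ with controllable eigenvalue behavior on the $\mathcal{C}$-eigenbasis, replace every crossing uniformly by the $s$-parameterized gadget, and observe that the resulting planar Holant value is a polynomial in $s$ of degree bounded by the number of crossings; the quantity $\operatorname{Holant}(\Omega)$ then appears as a specific coefficient, recoverable by Lagrange interpolation from polynomially many planar evaluations. Because $\mathcal{C}$ is a $4 \times 4$ permutation matrix with a clean spectrum, this interpolation step is routine once the bipartite compatibility above is secured.
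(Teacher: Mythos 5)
Your proof is correct and takes essentially the same route as the paper: the $U/V$ corner labeling is an equivalent reformulation of the paper's enter/leave R/L labeling, the key observation that opposite corners carry distinct labels (so the cyclic type is a rotation of $UUVV$, never $UVUV$) matches the paper's argument verbatim in spirit, and the rotation-invariance of $\mathcal{C}$ is invoked identically. The final paragraph on interpolation is more detail than the lemma requires, since that machinery is already packaged into the hypothesis that $\mathcal{C}$ "can be planarly constructed or interpolated" (and is carried out separately in the paper's Lemma~7), but it is not incorrect.
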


\begin{proof}
Given any input signature grid of the problem $\holant{\mathcal{F}}{\mathcal{G}}$, we place it on the plane with possible edges intersecting each other at non-vertices. 
We may assume at most two edges intersect at any point, and the number of such intersections is  polynomially bounded.  
We replace each such intersection  by a copy of $\mathcal{C}$ as follows. 
Note that every
 edge connects a LHS
vertex with a RHS vertex.
Suppose an edge $e = \{u, v\}$ intersects consecutively $k \ge 1$  edges at non-vertices $x_1, \ldots, x_k$, 
where $u$ and $v$ are
from LHS and RHS, respectively. As we traverse from $u$ to $v$, for each $1 \le i \le k$ we label  R and L respectively
just as we enter and leave $x_i$. Labeling in this way for every edge having
such intersections, we find that locally at each intersection point,
cyclically two consecutive edges are labeled R and the other two consecutive edges are labeled L.
This is because at each local intersection point each pair of
incident edges that are \emph{not} cyclically consecutive are always labeled with distinct R $\ne$ L.
A moment reflection shows that the signature
$\mathcal{C}$ can always be used 
with a suitable 
 rotation at each intersection, while
respecting the bipartite structure.
%cyclically two consecutive input wires are connected to LHS and the other
%two consecutive input wires are connected to RHS, which is exactly the case for each crossing in the original signature grid due to the bipartite structure.
We thus obtain an input of the problem $\plholant{\mathcal{F}}{\mathcal{G}}$ with Holant value unchanged.
\end{proof}

% \begin{proof}
% Given any input signature grid of the problem $\holant{\mathcal{F}}{\mathcal{G}}$, we place it on the plane with possible edges intersecting each other at non-vertices. 
% We may assume at most two edges intersect at any point, and the number of such intersections is  polynomially bounded.  
% We replace each such intersection  by a copy of $\mathcal{C}$ as follows. 
% Note that every
%  edge connects a LHS
% vertex with a RHS vertex.
% Suppose an edge $e = \{u, v\}$ intersects consecutively $k \ge 1$  edges at non-vertices $x_1, \ldots, x_k$, 
% where $u$ and $v$ are
% from LHS and RHS, respectively. As we traverse from $u$ to $v$, for each $1 \le i \le k$ we label  R and L respectively
% just as we enter and leave $x_i$. Labeling in this way for every edge having
% such intersections, we find that locally at each intersection point,
% cyclically two consecutive edges are labeled R and the other two consecutive edges are labeled L.
% This is because at each local intersection point each pair of
% incident edges that are \emph{not} cyclically consecutive are always labeled with distinct R $\ne$ L.
% A moment reflection shows that the signature
% $\mathcal{C}$ can always be used 
% with a suitable 
%  rotation at each intersection, while
% respecting the bipartite structure.
% We thus obtain an input of the problem $\plholant{\mathcal{F}}{\mathcal{G}}$ with Holant value unchanged.
% \end{proof}

%We are now ready to prove Lemma~\ref{lemma:[1,a,1,a]}.

We are now ready to prove Lemma~\ref{lemma:[1,a,1,a]}.

\begin{proof}[Proof of Lemma~\ref{lemma:[1,a,1,a]}]
When $a=0$ or $\pm1$, the problem is in the affine class or degenerate, respectively, and thus in $\operatorname{FP}$ (see~\cite{cai2017complexity} for details of the algorithms). Assume $a \neq 0 $ and $a \neq \pm 1$.
The problem $\holant{[1,a,1,a]}{(=_3)}$ without the planar restriction
is shown to be \numP-hard in \cite{CaiFL21}. By Lemma~\ref{lemma:cross-over}, it suffices to show we can interpolate the cross-over signature $\mathcal{C}$. 

Consider
the gadget $G_4$ in Figure~\ref{G4} where we place 
the signature $[1,a,1,a]$ at the square vertices
and $=_3$ at the circle vertices. Note that $G_4$ is a straddled signature
with the two dangling edges at the top (reps. bottom) to be connected externally
to the RHS (reps.  LHS), just like the cross-over signature $\mathcal{C}$. After normalization 
by a constant $a+a^2 \ne 0$, the signature matrix of $G_4$  is  $\left(\begin{smallmatrix}
    z & 1 & 1 & 1 \\
    1 & 1 & z & 1 \\
    1 & z & 1 & 1 \\
    1 & 1 & 1 & z
\end{smallmatrix}\right)$,
where $z = \frac{1+a^3}{a+a^2} = a + a^{-1} - 1$.
As $a \ne \pm 1$, 
if $a >0$  then
$z =(a^{1/2} - a^{-1/2})^2 +1 > 1$, and if 
$a<0$  then $z = - (|a| + |a|^{-1}) - 1= -(|a|^{1/2} - |a|^{-1/2})^2 -3 
<-3$.
Here the rows  (resp. columns)  are  indexed by $(b_1, b_2) \in \{0, 1\}^2$ in lexicographic order
corresponding to the dangling edges on the leftside (resp. rightside,
as it appears in Figure~\ref{G4}), 
with $b_1$ for the top  edge.
This can be  verified by first computing the signatures for $A$ and $B$ in
Figures~\ref{G4A} and~\ref{G4B},
$A = \left( \begin{smallmatrix}
    1 & 0 & a & 0 \\
    0 & a & 0 & 1 \\
    a & 0 & 1 & 0 \\
    0 & 1 & 0 & a
\end{smallmatrix}
\right)$,
and  $B =\left(\begin{smallmatrix}
    1 & a & 0 & 0 \\
    a & 1 & 0 & 0 \\
    0 & 0 & a & 1 \\
    0 & 0 & 1 & a
\end{smallmatrix}
\right)$ is obtained from $A$ by exchanging both  middle two rows and 
 middle two columns. Then we have $G_4 = A \cdot B \cdot A$, as a matrix product.
Notice that the ``shape" of $G_4$ looks just like $\mathcal{C}$ if we replace 1 by 0, and $z$ by 1. We will exploit this remarkable coincidence
in our proof below.
Note also that the signature matrix of $G_4$ is invariant under cyclic rotations of the gadget.

Now we define a sequence of gadgets $\Gamma_{2s+1}$ of linear size,
which is a sequential composition of
$2s+1$ sub-gadgets, where for odd index $i=1, 3, \ldots, 2s+1$
we use  $G_4$, and for even index $i=2, 4, \ldots, 2s$
we use a  $180^\circ$-rotated copy of  $G_4$, and 
we merge the rightside two edges
of the $i$th sub-gadget with the leftside two edges
of the $(i+1)$th sub-gadget.
This  sequential composition  satisfies
the bipartite restriction. As the
rotated copy of  $G_4$ has the 
same signature matrix as that of $G_4$,
the signature matrix of $\Gamma_{2s+1}$  is $G_4^{2s+1}$,
the $(2s+1)$th power of $G_4$.
We can show
that it has the form (after normalization) 
$\Gamma_{2s+1} = \left(\begin{smallmatrix}
    x_s & 1 & 1& 1 \\
    1 & 1 & x_s & 1 \\
    1 & x_s & 1 & 1\\
    1 & 1 & 1 & x_s
\end{smallmatrix}\right)$,
where $\{x_s\}_{s \ge 0}$ are defined by a recurrence, with $x_0 = z$ and \[x_{s+1} = \frac{6+6z+3x_s+z^2\cdot x_s}{7+4z+z^2+2x_s+2z\cdot x_s}. \]
%The partial derivative of $x_{s+1}$ with respect to $x_s$
%is $\frac{\partial x_{s+1}}{\partial x_{s}} = \frac{(z-1)^2(z+3)^2}{(7+4z+z^2+2x_s+4z\cdot x_s)^2} >0$. Therefore $x_{s+1}$ is strictly increasing with respect to $x_{s}$.
%It follows that $x_s$'s are pairwise distinct.

We are going to show that $x_s$'s are pairwise distinct. 
First suppose $a >0$. We have $x_{s+1} - 1
=\frac{(z-1)^2(x_s -1)}{7+4z+z^2+2x_s+2z\cdot x_s}$,
which
shows inductively that $x_s > 1$
 for all $s \in \mathbb{N}$, as the denominator is clearly positive. 
 Next, $\frac{x_{s+1} - 1}{x_s -1} = \frac{(z-1)^2}{(z+2)^2 + 3+2x_s(z+1)} <1$, as $x_s, z>1$.
 %Next, $x_{s+1}-x_s<0$ iff $6+6z < x_s\cdot (4+4z+2x_s+2z x_s)$, which is true since $x_s > 1$ for all $s \in \mathbb{N}$.
 %It follows that the sequence $x_s$'s is strictly decreasing
 It follows that $x_{s+1}  < x_s$
 and hence pairwise distinct.
 Now suppose $a <0$. Inductively assume $x_s < -3$, which is true
 at $x_0 = z < -3$.
 We have $x_{s+1} + 3
 =\frac{(z+3)^2 (x_s +3)}{7+4z+z^2+2x_s+2z\cdot x_s}$.
 The denominator $7+4z+z^2+2x_s+2z\cdot x_s
 =(z+2)^2 +3 + 2 (z+1) x_s >0$, as $z <-3$ and inductively also $x_s <-3$.
 Then we  have $x_{s+1} + 3 <0$ since $x_s +3 < 0$.
 Now the denominator is $(z+2)^2 +3 + 2 (z+1) x_s > |z+2|^2 > |z+3|^2$
 as $z <-3$. Hence $\frac{x_{s+1} + 3}{x_s +3} =
 \frac{(z+3)^2}{(z+2)^2 +3 + 2 (z+1) x_s} <1$.
And so,  $x_{s+1} > x_s$, and  in particular they are pairwise distinct.
 %in fact x_{s+1} > x_s

%First suppose $a >0$. We first prove $x_s > 1$ for any $s \in \mathbb{N}$ by induction. When $s=0$, we have $x_0=z>1$. Suppose this is true up to $s = k$ for some $k\in \mathbb{N}$. Then $x_{k+1} > 1$ iff $6+6z+3x_k+z^2 x_k > 7 + 4z+z^2+2x_k+2z x_k$ iff $(z-1)^2\cdot (x_k-1) > 0$, which is true by the inductive hypothesis. We now show $x_s$'s are pairwise distinct by showing the sequence $x_s$'s is strictly decreasing. Indeed, the difference $x_{s+1}-x_s<0$ iff $6+6z < x_s\cdot (4+4z+2x_s+2z x_s)$, which is true since $x_s > 1$ for any $s \in \mathbb{N}$.

Note also that the number of bits required to represent $x_s$'s is polynomially bounded in the size of the input
because $x_s$'s come from, by definition, sums of at most $2^{n^{O(1)}}$ terms, each a product of $n^{O(1)}$ factors. 
%%% JYC i changed to "come from" because there was a normailization step

Given any signature grid $\Omega$ where the cross-over signature $\mathcal{C}$ appears $n$ times, we construct signature grids $\Omega_s$, $0 \leq s \leq n$,
by replacing each copy of $\mathcal{C}$ by $\Gamma_{2s+1}$ while respecting
the bipartite restrictions. 
 We now stratify the assignments in the Holant sum for $\Omega$ according to the number $i$, $0 \leq i \leq n$, of total times that the input of $\mathcal{C}$ is (0,0,0,0), (1,0,1,0), (0,1,0,1), or (1,1,1,1) in cyclic order (these are the only inputs
 to  $\mathcal{C}$ with nonzero evaluations). Let $c_i$ be the sum over all corresponding assignments of the products from other signatures with this restriction of $i$. Then we have 
$\operatorname{Holant}(\Omega) = c_n$,
and
\begin{equation}\label{Vandermonde}
   \operatorname{Holant}(\Omega_s) = \sum\limits_{i=0}^n x^i_s \cdot c_i.
\end{equation}
Since  $x_s$'s are pairwise distinct, (\ref{Vandermonde}) is a full ranked
Vandermonde system, and we can solve for all $c_i$ in polynomial time, and 
in particular compute $c_n$, from the values of $\operatorname{Holant}(\Omega_s)$, $0 \leq i \leq n$.
\end{proof}

Hereafter, we say $[1,a,b,c]$ is \numP-hard or in FP to mean the problem $\plholant{[1,a,b,c]}{(=_3)}$ is \numP-hard or in FP. We shall invoke the following theorem in~\cite{MKJYC} when proving our results:
\begin{theorem}[Kowalczyk \& Cai] \label{previous 2-3}
Suppose $a,b \in \mathbb{C}$, and let $X = ab$, $Z = \left( \frac{a^3+b^3}{2} \right)^2$. Then $\plholant{[a,1,b]}{\left(=_3\right)}$ is \numP-hard except in the following cases, for which the problem is in $\operatorname{FP}$.

\begin{enumerate}
\item $X=1$;
\item $X=Z=0$;
\item $X=-1$ and $Z=0$;
\item $X=-1$ and $Z=-1$;
\item $X^3 = Z$.
\end{enumerate}
\end{theorem}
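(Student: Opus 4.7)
The plan is to establish tractability of the five listed cases and \#P-hardness of everything else. Tractability is handled case by case. For case (1), $ab = 1 = 1^2$ makes $[a,1,b]$ degenerate, so it factors as $u^{\otimes 2}$ for some unary $u$, and the Holant decouples into an easily computable product. Case (2) forces $a=b=0$, giving $[0,1,0]$, which together with $(=_3)$ lies in the affine class $\mathcal{A}$. Case (3) forces $(a,b) \in \{(1,-1),(-1,1)\}$, again affine. Case (4) gives complex $(a,b)$ for which a diagonal holographic transformation $D = \diag(1,\alpha)$ simultaneously sends $[a,1,b]$ and $(=_3)$ into matchgate signatures, so planar tractability follows from the FKT algorithm. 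For case (5), rewriting $X^3 = Z$ as $4a^3b^3 = (a^3+b^3)^2$ yields $(a^3-b^3)^2 = 0$, so $b = \omega a$ for a cube root of unity $\omega$; again a suitable diagonal holographic transformation produces matchgate signatures.

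For hardness, assume none of (1)--(5) holds. I would construct a family of binary LHS gadgets $\{H_k\}_{k \ge 1}$ by iterating a fundamental unit built from two copies of $[a,1,b]$ joined through a single $(=_3)$, producing a binary-to-binary transfer operation $T$. The signature of $H_k$ is governed by $T^k$ acting on the symmetric subspace, and its eigenvalue structure is a rational function of the invariants $X = ab$ and $Z = ((a^3+b^3)/2)^2$. Provided the eigenvalues are distinct and their ratio is not a root of unity, a standard Vandermonde interpolation (analogous to the argument in Lemma~\ref{lemma:[1,a,1,a]}) realizes an arbitrary symmetric binary signature $[p,1,q]$ on the LHS. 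Once such binaries are available, I appeal to Theorem~2.27 in the cited dichotomy of \cite{CaiFL21} for $\plholant{[p,1,q]}{(=_3)}$ to obtain \#P-hardness, and lift the non-planar reduction to the planar setting using the cross-over signature technology of Lemma~\ref{lemma:cross-over} (the crossover being constructible from $[p,1,q]$ for generic parameters).

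The main obstacle is delineating precisely when the interpolation fails, i.e.\ when $T$ is nilpotent, has a repeated eigenvalue, or has eigenvalue ratio a root of unity. The expected computation is that these degeneracy loci in the $(a,b)$-parameter space are exactly the algebraic varieties $X = 1$, $X = Z = 0$, $X = -1 \wedge Z = 0$, $X = -1 \wedge Z = -1$, and $X^3 = Z$. Verifying that every failure of the interpolation argument lands in one of these five strata, and that no extra tractable case is missed, requires careful algebraic bookkeeping, especially at the boundary where multiple conditions collide (e.g.\ $X = -1$ and $X^3 = Z$ forcing $Z = -1$). Managing these confluences and ruling out any overlooked tractable family forms the technical heart of the argument.
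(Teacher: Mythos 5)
This statement is not proven in the paper: it is quoted verbatim as an attributed external result from Kowalczyk and Cai~\cite{MKJYC}, introduced with the line ``We shall invoke the following theorem in~\cite{MKJYC} when proving our results.'' There is therefore no internal proof to compare your proposal against; you are attempting to reconstruct a substantial prior dichotomy from scratch, while the paper simply uses it as a black box.

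Taken on its own terms, your hardness route is circular. You propose to interpolate symmetric binaries $[p,1,q]$ on the LHS and then appeal to a dichotomy for $\plholant{[p,1,q]}{(=_3)}$ drawn from~\cite{CaiFL21}; but $\plholant{[a,1,b]}{(=_3)}$ is exactly the problem Theorem~\ref{previous 2-3} classifies, and~\cite{CaiFL21} is a \emph{later} paper that itself imports Kowalczyk--Cai as a black box. Likewise the crossover lifting of Lemma~\ref{lemma:cross-over} belongs to the present paper's toolchain built on top of this cited theorem, not available beneath it; the original Kowalczyk--Cai proof is a self-contained planar argument with gadget constructions and interpolation tailored to $[a,1,b]$, and any correct reconstruction must be as well. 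There are also incompletenesses on the tractability side: over $\mathbb{C}$, case (3) (i.e., $ab=-1$, $a^3+b^3=0$) admits additional complex solutions from the factor $a^2-ab+b^2=0$ beyond $(1,-1)$ and $(-1,1)$, and those solutions still need a tractability argument (typically a diagonal holographic transformation to matchgates, as in your cases (4)--(5), rather than membership in the affine class).
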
 

By restricting Theorem~\ref{previous 2-3} to real numbers, we have the following corollary.

\begin{corollary} \label{2-3}
Suppose $a,b \in \mathbb{R}$, then $\plholant{[a,1,b]}{\left(=_3\right)}$ is \numP-hard except in the following cases, for which the problem is in $\operatorname{FP}$.

\begin{enumerate}
\item $ab=1$;
\item $a=1$ and $b=-1$;
\item $a=-1$ and $b=1$;
\item $a=b$.
\end{enumerate}
\end{corollary}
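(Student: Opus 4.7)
The plan is to deduce Corollary~\ref{2-3} directly from Theorem~\ref{previous 2-3} by specializing each of its five tractable conditions to real parameters $a, b \in \mathbb{R}$ and verifying that the resulting solution set coincides with the four cases listed in the corollary. Throughout, write $X = ab$ and $Z = ((a^3 + b^3)/2)^2$.

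First, I would dispose of the straightforward cases. Case (1) of Theorem~\ref{previous 2-3}, $X = 1$, is verbatim case (1) of the corollary. Case (2) of the theorem, $X = Z = 0$, forces $ab = 0$ and $a^3 + b^3 = 0$; whichever of $a, b$ vanishes drags the other to zero, so $a = b = 0$, which is absorbed into case (4) of the corollary. Case (3), $X = -1$ and $Z = 0$, gives $ab = -1$ together with $a^3 = -b^3$; taking real cube roots (here is where the restriction to $\mathbb{R}$ enters) yields $a = -b$, and then $-a^2 = -1$ forces $a = \pm 1$, producing exactly cases (2) and (3) of the corollary.

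Case (4) of the theorem, $X = -1$ and $Z = -1$, is vacuous over $\mathbb{R}$: since $(a^3 + b^3)/2$ is a real number, $Z$ is a nonnegative real and cannot equal $-1$. The only mildly substantive step is case (5), $X^3 = Z$. Expanding gives $4 a^3 b^3 = (a^3 + b^3)^2$, which rearranges to $(a^3 - b^3)^2 = 0$, so $a^3 = b^3$; again using that $a, b \in \mathbb{R}$, real cube roots give $a = b$, i.e.\ case (4) of the corollary.

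Conversely, each of the four tractable real cases is visibly one of the five complex cases (case (1) matches case (1); cases (2) and (3) match case (3); and case (4) with $a = b$ gives $X^3 = a^6 = ((2a^3)/2)^2 = Z$, matching case (5)), so nothing tractable is lost. I do not anticipate a genuine obstacle; the only place the proof uses anything beyond trivial algebra is the elimination of the complex case (4) and the real-cube-root step in cases (3) and (5), both of which are exactly where the restriction $a, b \in \mathbb{R}$ has force.
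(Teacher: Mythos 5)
Your proposal is correct and takes exactly the route the paper intends: the paper states the corollary as an immediate consequence of Theorem~\ref{previous 2-3} ("by restricting to real numbers") without spelling out the algebra, and your verification of the five conditions — in particular the factoring $4a^3b^3 = (a^3+b^3)^2 \iff (a^3-b^3)^2 = 0$ and the use of injectivity of the real cube root — is precisely the omitted computation.
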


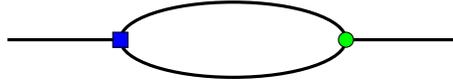
\begin{figure}[ht] 
\centering
  \begin{tikzpicture}% G1
        % \draw[step=1cm, green,very thin](-1.9,-1.9) grid (5.9,5.9);
        % \draw[thick, ->] (0,0) -- (4.5,0) node[anchor=north west] {x axis};
        % \draw[thick, ->] (0,0) -- (0,4.5) node[anchor=south east] {y axis};
        \draw [very thick] (-1,0.5) -- (0.5,0.5);
        
        % \draw[very thick] (3.5, 1) arc (45:135:2.2cm);
        % \draw[very thick] (0.5, 0) arc (225:315:2.2cm);
        \draw[very thick] (2,0.5) ellipse (1.5cm and 0.5cm);
        \draw[very thick] (3.5,0.5) -- (5,0.5);
        \filldraw[fill= blue] (0.4,0.4) rectangle (0.6,0.6);
        \filldraw[fill=green] (3.5,0.5) circle (0.1cm);
    \end{tikzpicture}
    \captionsetup{justification=centering}
    \caption{Gadget $G_1$}
  \label{f1}
\end{figure}

Consider the binary straddled gadget $G_1$ in Figure \ref{f1}. 
Parallel edges are allowed. Its signature is $G_1=\left[\begin{smallmatrix} 1 & b \\ a & c \end{smallmatrix}\right]$, where  $G_1(i,j)$ (at row $i$ column $j$) is the value of this gadget when the left dangling edge (from the ``square") and the right dangling edge (from the ``circle" $(=_3)$) are assigned $i$ and $j$ respectively, for $i,j \in\{0,1\}$.
%The following lemma  can be proved similarly
%  as Lemma 3.1 in~\cite{fan2020dichotomy}.
Iterating $G_1$ sequentially $k$ times is represented by the
matrix power $G_1^k$. It turns out that it is very useful
 either to produce directly  or to obtain by interpolation
a rank \emph{deficient} straddled signature, which would in most cases allow us to obtain unary signatures on either side.
With unary signatures we can connect to a ternary signature
to produce binary signatures on one side and then apply
Corollary~\ref{2-3}. 

The following lemma is proved in~\cite{CaiFL21}.

\begin{lemma} \label{3.1.1}
Given the binary straddled signature $G_1=\left[\begin{smallmatrix} 1 & b \\ a & c \end{smallmatrix}\right]$, we can interpolate the degenerate binary straddled signature $\left[\begin{smallmatrix}y & xy \\ 1 & x\end{smallmatrix}\right]$,
provided that $c\neq ab$, $a\neq 0$, $\Delta =\sqrt{(1-c)^2 + 4ab} \neq 0$ and $\frac{\lambda}{\mu}$ is not a root of unity, where
$\lambda=\frac{-\Delta+(1+c)}{2}$, $\mu=\frac{\Delta+(1+c)}{2}$ are the
 two eigenvalues,  
 and $x=\frac{\Delta-(1-c)}{2 a}$ and $y=\frac{\Delta+(1-c)}{2 a}$.
\end{lemma}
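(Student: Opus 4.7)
The plan is a standard polynomial interpolation argument, whose key structural input is that $G_1$ and the target degenerate straddled signature $S=\smm{y}{xy}{1}{x}$ share the same eigenvectors, so that iterated sequential compositions $G_1^k$ span enough of the relevant two-dimensional space to recover $S$ through a Vandermonde system.

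First I would diagonalize $G_1$. Its characteristic polynomial is $t^2-(1+c)t+(c-ab)$, with roots $\lambda$ and $\mu$ exactly as stated; the hypothesis $\Delta\ne 0$ forces $\lambda\ne\mu$, so $G_1$ is diagonalizable, while the hypothesis $c\ne ab$ is equivalent to $\lambda\mu=\det G_1\ne 0$. A short direct calculation using $a\ne 0$ then shows that the right eigenvectors of $G_1$ can be taken to be $v_\mu=(y,1)^{T}$ and $v_\lambda=(-x,1)^{T}$, where $x$ and $y$ are precisely the quantities defined in the statement; this is the one place where the specific algebraic form of $x$ and $y$ is used. Next I would verify that $S$ is simultaneously diagonalized in the same basis. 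Since $\det S=xy-xy=0$, $S$ has rank one, and one checks directly that $Sv_\mu=(x+y)v_\mu$ and $Sv_\lambda=0$. Hence in the common eigenbasis $S=(x+y)P_\mu$, where $P_\mu$ is the projector onto $v_\mu$ along $v_\lambda$; note that $x+y=\Delta/a\ne 0$ by hypothesis, so this scaling factor is invertible.

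The interpolation itself is then routine. Given any planar bipartite instance $\Omega$ in which $S$ appears $n$ times as a straddled binary signature, form $\Omega_k$ for $k=1,\dots,n+1$ by replacing each occurrence of $S$ with the sequential composition $G_1^k$. This replacement is planar and respects bipartiteness, because $G_1$ has one LHS-end dangling edge on the left and one RHS-end dangling edge on the right, which matches the type of $S$; merging the right end of one $G_1$ with the left end of the next creates a legal bipartite edge between a circle and a square. Writing $G_1^k=\lambda^kP_\lambda+\mu^kP_\mu$ in the common eigenbasis and stratifying by the number $j$ of copies that contribute through $P_\mu$, we obtain
\[
\Holant(\Omega_k)=\sum_{j=0}^n \lambda^{k(n-j)}\mu^{kj}\,c_j,
\]
while $\Holant(\Omega)=(x+y)^n c_n$. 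Dividing the $k$-th equation by $\mu^{kn}$ (legitimate since $\mu\ne 0$) turns the coefficient matrix into a Vandermonde matrix in the $n+1$ quantities $(\lambda/\mu)^{n-j}$, $j=0,\dots,n$, which are pairwise distinct since $\lambda/\mu$ is not a root of unity. Solving this linear system in polynomial time recovers $c_n$ and hence $\Holant(\Omega)$.

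The only genuinely delicate step is the eigenvector identification: verifying that $S$ has the same eigenvectors as $G_1$, with the $\lambda$-eigenvalue of $S$ equal to zero. This is what forces the exact algebraic form of $x$ and $y$, and it is also why we can interpolate a degenerate signature rather than a generic one — the single nontrivial spectral coefficient $(x+y)$ is all we need. Everything else (the Vandermonde, the stratification, the planar bipartite composition of $G_1^k$) is boilerplate in the Holant interpolation framework.
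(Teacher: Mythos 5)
Your proof is correct. Note that the paper does not present a proof of this lemma at all but cites [CaiFL21], so there is no in-paper argument to compare against; that said, your approach is the standard Holant interpolation argument and it matches the algebraic structure the lemma statement is clearly built around. The key verifications all check out: the characteristic polynomial of $G_1$ gives $\lambda,\mu$ as stated, the second row of $G_1-\mu I$ and $G_1-\lambda I$ gives the right eigenvectors $(y,1)^T$ and $(-x,1)^T$ (using $a\ne 0$), the target $S$ satisfies $Sv_\mu=(x+y)v_\mu$ and $Sv_\lambda=0$ with $x+y=\Delta/a\ne 0$, the chain composition $G_1^k$ has signature matrix $M_{G_1}^k$ and respects bipartiteness/planarity, and the Vandermonde system in the nodes $(\lambda/\mu)^k$, $k=1,\dots,n+1$ (after normalizing by $\mu^{kn}$, legal since $c\ne ab\Rightarrow\mu\ne 0$) is full rank exactly because $\lambda/\mu$ is not a root of unity. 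The one detail you elide is the standard bit-size observation that the Holant values $\Holant(\Omega_k)$, and the arithmetic in $\mathbb{Q}(\Delta)$ needed to solve the Vandermonde system, have polynomially bounded representations; this is routine and is the same point the paper makes explicitly in the analogous interpolation in the proof of Lemma~\ref{lemma:[1,a,1,a]}.
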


Given a degenerate binary straddled signature, we want to use it as unary signatures \emph{in a planar way}. It is only in this step that we need our P3EM theorem. More concretely, in the next lemma we show how we can \emph{essentially} separate a binary straddled signature to get a unary
signature.

\begin{lemma} \label{3.1.2}
For $\PlHolant( \, [1,a,b,c] \, | =_3)$,\ 
%problem $[1,a,b,c]$,\ 
$a,b,c \in \mathbb{Q}$, $a\neq 0$,  with the availability of the binary degenerate straddled signature $\left[\begin{smallmatrix}y & xy \\ 1 & x\end{smallmatrix}\right]$ where 
$x=\frac{\Delta-(1-c)}{2 a}$,  $y=\frac{\Delta+(1-c)}{2 a}$ and $\Delta=\sqrt{(1-c)^2 + 4ab}$, we have the following reductions
%problem $[1,a,b,c]$,\ 
\begin{enumerate}
    \item $\PlHolant ( \, [1+ax,a+bx,b+cx] \, | =_3) \leq_T \PlHolant ( \, [1,a,b,c] \, | =_3)$ except for 2 cases:
    %(by Remark under Lemma \ref{3.1.2}, the two cases happen when $y=-1$)
    %: 
    $[1,a,a,1]$, $[1,a,-1-2a, 2+3a]$;
    \item $ \PlHolant ( \, [1,a,b,c] \, | [y,0,1]) \leq_T \PlHolant ( \, [1,a,b,c] \, | =_3)$.
    
\end{enumerate}  
\end{lemma}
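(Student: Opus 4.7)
\textbf{The plan} is to exploit the factorization $M = \smm{y}{xy}{1}{x} = \smmv{y}{1} \cdot \smmh{1}{x}$: as a rank-one straddled gadget, $M$ behaves like a tensor product of two ``virtual'' unary signatures --- a LHS-side unary $[y,1]$ whose dangling edge connects externally to a RHS vertex, and a RHS-side unary $[1,x]$ whose dangling edge connects externally to a LHS vertex. Each part of the lemma attaches one of these virtual unaries into a simulation of the target signature, while the other becomes a leftover. Because the target instance must be bipartite and (essentially) $3$-regular, these leftover unaries can only be absorbed by ternary vertices grouped three at a time, and pulling this off planarly is exactly the service our P3EM theorem (Theorem~\ref{thm:P3EM}) delivers.

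\textbf{Part 2.} Let $\Omega$ be an instance of $\plholant{[1,a,b,c]}{[y,0,1]}$. The arity constraint $3\,|V_L(\Omega)| = 2\,|V_R(\Omega)|$ forces $|V_R(\Omega)| \equiv 0 \pmod 3$. Replace each $[y,0,1]$-vertex by a fresh RHS $(=_3)$-vertex whose first two edges take the place of the original edges and whose third edge meets the LHS-dangling edge of a new copy of $M$; the RHS-dangling edge of $M$ is left loose. The identity $\sum_{k \in \{0,1\}} (=_3)(i,j,k)\,M(k,e') \;=\; x^{e'}\,[y,0,1](i,j)$ shows the replacement faithfully reproduces $[y,0,1](i,j)$, with the loose edge $e'$ carrying only a scalar factor $x^{e'}$. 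To absorb these loose edges planarly, apply Theorem~\ref{thm:P3EM} to the $3$-regular plane multigraph $\Omega^{\star}$ obtained from $\Omega$ by contracting each $[y,0,1]$-vertex into a single edge; its $|V_R(\Omega)|$ edges correspond bijectively to the loose edges, so its planar $3$-way edge matching induces a planar triple-grouping of the loose edges, realized by attaching one new LHS $[1,a,b,c]$-vertex per triple. (The rare P3EM exceptions $K_4$ and $M_{2,3}$ correspond to constant-size components that are solved directly and then combined multiplicatively.) In the resulting planar bipartite $3$-regular instance $\Omega''$ of $\plholant{[1,a,b,c]}{(=_3)}$, every triple decouples and contributes the factor
\[
K\ :=\ \sum_{e_1,e_2,e_3 \in \{0,1\}} [1,a,b,c](e_1,e_2,e_3)\, x^{e_1+e_2+e_3}\ =\ 1 + 3ax + 3bx^2 + cx^3,
\]
whence $\Holant(\Omega'') = K^{|V_R(\Omega)|/3}\cdot \Holant(\Omega)$ and a single division recovers $\Holant(\Omega)$ whenever $K \neq 0$. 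In the sporadic degenerate situation $K = 0$, one interpolates by using several copies $M^{(s)}$ producing distinct $x_s$-values (a standard Vandermonde recovery in the style of the proof of Lemma~\ref{lemma:[1,a,1,a]}).

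\textbf{Part 1} is symmetric: $g := [1+ax, a+bx, b+cx]$ is precisely the binary signature obtained by contracting one edge of $f = [1,a,b,c]$ with the unary $[1,x]$, which is the RHS-side of $M$. So each binary LHS $g$-vertex of $\Omega$ is replaced by a fresh $f$-vertex whose third edge is the RHS-dangling edge of a new $M$-copy, leaving the LHS-dangling edge loose with effective weight $[y,1]$. Here $2\,|V_L(\Omega)| = 3\,|V_R(\Omega)|$ gives $|V_L(\Omega)| \equiv 0 \pmod 3$, so we apply P3EM to the $3$-regular plane multigraph obtained by contracting each binary LHS vertex, triple-partition the loose edges, and absorb each triple by a fresh RHS $(=_3)$-vertex. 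Each triple contributes the factor $y^3 + 1$, so
\[
\Holant(\Omega'')\ =\ (y^3 + 1)^{|V_L(\Omega)|/3}\cdot \Holant(\Omega),
\]
and the reduction goes through unless $y^3 + 1 = 0$, i.e., $y = -1$ over $\mathbb{Q}$. Unrolling $(\Delta + 1 - c)/(2a) = -1$ with $\Delta^2 = (1-c)^2 + 4ab$ yields the algebraic condition $b = a + 1 - c$, and pairing this with the sign choice of $\Delta$ from the principal square root in Lemma~\ref{3.1.1} isolates exactly the two stated exceptions $[1,a,a,1]$ (from the $c=1$ branch) and $[1,a,-1-2a,2+3a]$ (from the $b = -1-2a,\,c = 2+3a$ branch).

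\textbf{The main obstacle} throughout is the planarity bookkeeping: a non-planar analogue of this lemma would be essentially elementary, since each virtual unary could be terminated individually by a fresh vertex, but in the planar setting the $|V_R(\Omega)|$ (respectively $|V_L(\Omega)|$) leftover unaries must be absorbed globally in triples while simultaneously preserving the plane embedding. This global coherence is exactly what the P3EM theorem supplies, and without it the whole reduction strategy collapses. The remaining algebraic work --- verifying nonvanishing of $K$ and $y^3+1$, identifying the Part~1 exceptions --- uses nothing beyond the defining quadratic $ax^2 + (1-c)x = b$ for $x$.
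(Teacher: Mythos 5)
Your strategy is the same as the paper's: factor $M=\smmv{y}{1}\cdot\smmh{1}{x}$, substitute the two virtual unaries into a copy of $f$ or $(=_3)$, and use the P3EM theorem to absorb the leftover unaries three at a time in a planar fashion. The identity you check for Part~2, the arity congruence bookkeeping, and the absorption factor $K=1+3ax+3bx^2+cx^3$ all match the paper (this $K$ is the paper's $f_1$).

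The genuine gap is your treatment of the case when the absorption factor vanishes. You write that when $K=0$ ``one interpolates by using several copies $M^{(s)}$ producing distinct $x_s$-values.'' This is not available: the degenerate binary straddled signature $M=\smm{y}{xy}{1}{x}$ is a \emph{single} signature with $x$ and $y$ rigidly determined by $a,b,c$ through $\Delta=\sqrt{(1-c)^2+4ab}$. The iterated gadgets $G_1^s$ have signature matrices that are linear combinations of the two rank-one eigen-projections, and the interpolation of Lemma~\ref{3.1.1} (the Vandermonde argument) serves to extract \emph{that one} rank-one matrix $M$; it produces no family $\{x_s\}$. There is exactly one $[1,x]$, and if the triple-absorption by a single $f$-vertex gives the factor $K=0$ you cannot rescale your way out. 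The paper handles this by having three different absorption gadgets ($f_1,f_2,f_3$ in Figure~\ref{4g41x}), which produce three distinct cubic polynomials in $x$ with coefficients in $a,b,c$, and then verifying (via a real-closed-field computation) that these three polynomials together with the defining equation for $x$ have no common rational zero with $a\neq 0$. Without that alternative-gadgets argument, Part~2 of your proof is incomplete.

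A smaller point on Part~1: you also use only the single $(=_3)$-absorption gadget giving $y^3+1$. That happens to be adequate in Part~1 because $g_2 = y^3+by^2+ay+c$ also vanishes whenever $y=-1$ (using $y=-1\Rightarrow b=a+1-c$), so the second gadget adds nothing; still, you should state this explicitly rather than asserting that the sign-of-$\Delta$ constraint ``isolates exactly the two stated exceptions.'' The condition $y=-1$ with the principal-root sign constraint describes a two-parameter region of $(a,b,c)$, not a union of two one-parameter curves, and a cleaner formulation is simply that the reduction works except when $y=-1$, deferring the analysis of that locus to the places where Lemma~\ref{3.1.2} is invoked.
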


\begin{figure}[ht]
    \centering
    \begin{subfigure}[b]{0.3\textwidth}
    \centering
    \begin{tikzpicture}  % h1
        \draw[very thick] (0.3,5.05) -- (2.2,4.05);
        \draw[very thick] (0.3,4.05) -- (2.2,4.05);
        \draw[very thick] (0.3,3.05) -- (2.2,4.05);
        \filldraw[fill=black] (0.3,5.17)--(0.2,5)--(0.4,5)--cycle;
        \filldraw[fill=black] (0.3,4.17)--(0.2,4)--(0.4,4)--cycle;
        \filldraw[fill=black] (0.3,3.17)--(0.2,3)--(0.4,3)--cycle;
        \filldraw[fill=green] (2.2, 4.05)circle(0.1);
    \end{tikzpicture}
    \captionsetup{justification=centering}
    \caption{$g_1$}
    \label{h1}
    \end{subfigure}
    \begin{subfigure}[b]{0.3\textwidth}
    \centering
    \begin{tikzpicture}  %h2
        \draw[very thick] (0.3,5.05)--(1.8,4.8);
        \draw[very thick] (0.3,4.05)--(1.8,4.8);
        \draw[very thick] (0.3,3.05)--(1.8,3.5);
        \draw[very thick] (1.8,4.8) -- (3,4.5);
        \draw[very thick] (1.8,3.5).. controls (2.3, 4.2) ..(3,4.4);
        \draw[very thick] (1.8,3.5) .. controls (2.7, 3.7)..(3,4.4);
        \filldraw[fill=black] (0.3,5.17)--(0.2,5)--(0.4,5)--cycle;
        \filldraw[fill=black] (0.3,4.17)--(0.2,4)--(0.4,4)--cycle;
        \filldraw[fill=black] (0.3,3.17)--(0.2,3)--(0.4,3)--cycle;
        \filldraw[fill=green] (1.8,4.8) circle (0.1);
        \filldraw[fill=green] (1.8,3.5) circle (0.1);
        \filldraw[fill=blue] (3,4.3)rectangle(3.2,4.5);
    \end{tikzpicture}
    \captionsetup{justification=centering}
    \caption{$g_2$}
    \label{h2}
    \end{subfigure}
    \captionsetup{justification=centering}
    \caption{Two gadgets where each triangle represents the unary gadget $[y,1]$}
    \label{3g4y1}
\end{figure}

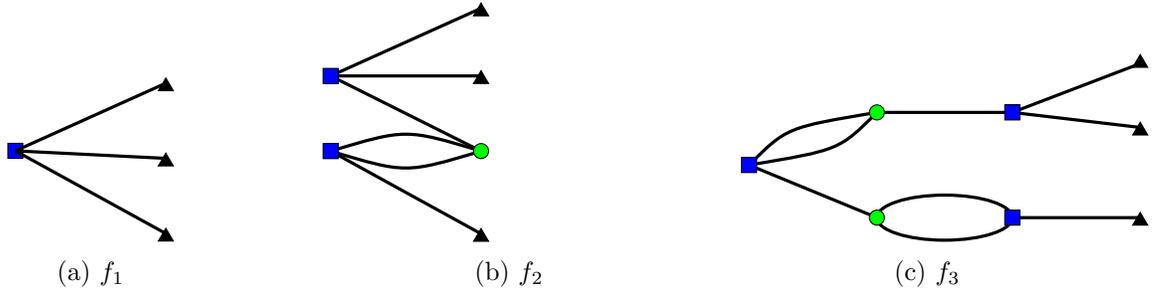
\begin{figure}[ht]
    \centering
    \begin{subfigure} [b] {0.3\textwidth}
    \centering
    \begin{tikzpicture}  % e1
        \filldraw[fill=blue] (0.2,2.2) rectangle (0.4,2.4);
        \filldraw[fill=black] (2.3,3.27)--(2.2,3.1)--(2.4,3.1)--cycle ;
        \filldraw[fill=black] (2.3,2.27)--(2.2,2.1)--(2.4,2.1)--cycle;
        \filldraw[fill=black] (2.3,1.27)--(2.2,1.1)--(2.4,1.1)--cycle;
        \draw[very thick] (0.3,2.3)--(2.3,3.2);
        \draw[very thick] (0.3,2.3)--(2.3,2.2);
        \draw[very thick] (0.3,2.3)--(2.3,1.2);
    \end{tikzpicture}
    \captionsetup{justification=centering}
    \caption{$f_1$}
    \label{e1}
    \end{subfigure}
        \hspace{10pt}
    \begin{subfigure}  [b]{0.3\textwidth}
    % \centering
    \begin{tikzpicture}   %e2
        \filldraw[fill=black] (2.3,5.27)--(2.2,5.1)--(2.4,5.1)--cycle;
        \filldraw[fill=black] (2.3,4.37)--(2.2,4.2)--(2.4,4.2)--cycle;
        \filldraw[fill=black] (2.3,2.27)--(2.2,2.1)--(2.4,2.1)--cycle;

        \draw[very thick] (0.3,4.3) -- (2.3,5.2);
        \draw[very thick] (0.3,4.3)--(2.3,4.3);
        \draw[very thick] (0.3,4.3)--(2.3,3.3);
        \draw[very thick] (0.3,3.3) ..controls (1.3,3.6)..  (2.3, 3.3);
        \draw[very thick] (0.3,3.3) .. controls (1.3, 3.0).. (2.3, 3.3);
        \draw[very thick] (0.3,3.3) -- (2.3, 2.2);
        \filldraw[fill=blue] (0.2,4.2) rectangle (0.4,4.4) ;  % 0.3, 4.3
        \filldraw[fill=blue] (0.2,3.2) rectangle (0.4,3.4) ;  % 0.3, 3.3
        \filldraw[fill=green] (2.3,3.3) circle (0.1cm) ;
    \end{tikzpicture}
    \captionsetup{justification=centering}
    \caption{$f_2$}
    \label{e2}
    \end{subfigure}
    \hspace{10pt}
    \begin{subfigure}  [b]{0.3\textwidth}
    \centering
    \begin{tikzpicture}  %e4
        \draw[very thick] (0.3,4.3)..controls(0.8,4.8)..(2,5);
        \draw[very thick] (0.3,4.3)..controls(1.5, 4.5 )..(2,5);
        \draw[very thick] (0.3,4.3)--(2,3.6);
        \draw[very thick] (3.8,5)--(2,5);
        
        \draw[very thick] (3.8,5)--(5.5,5.65);
        \draw[very thick] (3.8,5)--(5.5,4.8);
        \draw[very thick] (3.8,3.6)--(5.5,3.6);
        
        \draw[very thick]  (2.9, 3.6) ellipse(0.9 and 0.3); %(2,3.6) -- (3.8,3.6)
        \filldraw[fill=blue] (0.2,4.2) rectangle(0.4,4.4); %(0.3,4.3)
        \filldraw[fill=green] (2,5) circle (0.1cm);     % (2,5)
        \filldraw[fill=green] (2,3.6) circle(0.1cm);    % (2,3.6)
        \filldraw[fill=blue] (3.7,4.9) rectangle (3.9,5.1);   %(3.8,5)
        \filldraw[fill=blue] (3.7,3.5) rectangle (3.9, 3.7);  % (3.8, 3.6)
        \filldraw[fill=black] (5.5,5.77)--(5.4,5.6)--(5.6,5.6)--cycle; 
        \filldraw[fill=black] (5.5,4.87)--(5.4,4.7)--(5.6,4.7)--cycle; 
        \filldraw[fill=black] (5.5,3.67)--(5.4,3.5)--(5.6,3.5)--cycle; 
    \end{tikzpicture}
    \captionsetup{justification=centering}
    \caption{$f_3$}
    \label{e3}
    \end{subfigure}
    \captionsetup{justification=centering}
    \caption{Three gadgets where each triangle represents the unary gadget $[1,x]$}
    \label{4g41x}
\end{figure}

\begin{proof}
The signature $[1+ax,a+bx,b+cx]$ is the binary signature obtained by connecting $[1,a,b,c]$ on LHS with $[1,x]$ on RHS. For simplicity, we denote $f : = [1,a,b,c]$ and $f^{\flat} := [1+ax,a+bx,b+cx]$.

To prove the first reduction $\PlHolant ( \,  f^{\flat}\, | =_3) \leq_T \PlHolant ( \, f \, | =_3)$, 
consider any input instance of the LHS  problem.  Let 
$G$ be its underlying 2-3 bipartite plane graph. 
We may assume
$G$ is connected, as the Holant value of $G$ is 
the product of the Holant values of its connected components.
We can view $G$ as the edge-vertex incidence graph of
a plane 3-regular graph 
$G'$, where every vertex of degree 2 in $G$ on the LHS is viewed
as an edge in $G'$. One can also
obtain $G'$ 
by merging the two edges
incident to every vertex of degree 2 in $G$.
If $G'$ is isomorphic to one of the two exceptions in Theorem~\ref{thm:P3EM}, then the size of $G$ is constant and we can compute the Holant value directly. Otherwise, we construct an input of the RHS problem as follows. 
We first obtain the degenerate binary straddled signature $D = \left[\begin{smallmatrix}y & xy \\ 1 & x\end{smallmatrix}\right]$
in $\plholant{[1,a,b,c]}{(=_3)}$. 
Then for every edge of $G'$, which is
assigned  the binary
signature $f^{\flat}$,
%$[1+az,a+bz,b+cz]$,
we replace it by a copy of $[1,a,b,c]$ and  connecting it
with
the edge of $D$ that corresponds to $[1,x]$.
This leaves 
1 dangling edge from each copy of $D$, each 
edge functionally equivalent to
a unary $[y,1]$ on LHS. They need to be connected to other $(=_3)$ signatures in a planar way.
Now we apply Theorem~\ref{thm:P3EM} to the
3-regular plane graph $G'$, which constructively
assigns every edge of $G'$ one of the two incident
faces such that we have a P3EM. We then add a suitable number
of $(=_3)$ and $f$ in each face
and connect them to \emph{exactly} 3 copies of
$[y,1]$ as shown in Figures \ref{h1} and  \ref{h2}.
Theorem~\ref{thm:P3EM} guarantees that
this can be done in a planar way.
Each connection produces a multiplicative factor
$g_1 =y^3 + 1$ in Figure \ref{h1} and a multiplicative factor $g_2 = y^3 + by^2+ay+c$ in Figure \ref{h2}. It can be directly checked that\footnotemark \footnotetext{We use Mathematica to solve the system of equation \begin{equation*}\begin{cases}
g_1 = y^3 + 1 = 0\\
g_2 = y^3 + by^2+ay+c = 0\\
y=\frac{\Delta+(1-c)}{2 a}
    \end{cases}
\end{equation*}
%and for each equation to be zero we set both real and imaginary parts to be zero. 
The empty solution of the system is proved by cylindrical decomposition, an algorithm for Tarski's theorem on real-closed fields.}, for $y=\frac{\Delta+(1-c)}{2 a}$, at least one of these factors is nonzero unless $y=-1$, and in that case the signature has the form $[1,a,a,1]$ or $[1,a,-2a-1,3a+2]$.
The proof of the  reduction is complete.% Some more work is
%needed to prove the Lemma; see the appendix for details.

For the reduction $ \PlHolant ( \, [1,a,b,c] \, | [y,0,1]) \leq_T \PlHolant ( \, [1,a,b,c] \, | =_3)$, we use the same P3EM argument as above. Therefore it suffices to ``absorb'' those dangling unaries $[1,x]$ to produce some nonzero factor. We claim that at least one of the connection gadgets in Figures \ref{e1}, \ref{e2}, and \ref{e3} creates a nonzero global factor. 
% MK8
The factors of these four gadgets are \begin{equation*}\begin{cases}
f_1 = cx^3+3bx^2+3ax+1 \\
f_2=(ab+c)x^3+ (3bc+2a^2+b)x^2 +(2b^2+ac + 3a)x+ab+1\\
\begin{split}f_3 = 
    & (ab+2abc+c^3)x^3+(2a^2+b+2a^2c+3ab^2+bc+3b^2c)x^2 \\
    & +(3a+3a^2b+ac+2b^2+2b^2c+ac^2)x+1+2ab+abc
\end{split}
    \end{cases}
\end{equation*} respectively.
% and $f_5=(a^3+2abc+c^3)x^3 + (3a^2+2bc+2ab^2+2a^2c+3bc^2)x^2 + (3a+2ac+2a^2b+2b^2+3b^2c)x+(1+2ab+b^3))$,
 By setting the three formulae to be 0 simultaneously together with the condition $x=\frac{\Delta-(1-c)}{2 a}$, with $a\neq 0$, $a,b,c\in\mathbb{Q}$, we found that there is no common solution. The proof is now complete.
% : $[1,a,b,c]=[1,-1,-2,0]$ and $x$ is one of the root for equation $5x^9-x^6-x^3+1=0$. But actually, for $[1,-1,-2,0]$, we can interpolate binary straddled signature $\left(\begin{array}{ll}-2 & 2 \\ 1 & -1\end{array}\right)$ by Leamma \ref{3.1.1}. The $x$ here is $-1$ and so we can use it with $f_1$ to generate a global factor $-2$. On all other cases, 
\end{proof}

% i rewrote yur a' by x, b' by y
We note that signatures of the form $[3x+y,-x-y,-x+y,3x-y]$ for some $x,y$ are exactly either of the form $[1,a,-2a-1, 3a+2]$ 
after normalization, or of the form
%$[0,2a,-4a,6a]$ 
$[0,a,-2a,3a]$, for some $a$. 
%Conversely, any signature 
%of the form  $[1,a,-2a-1, 3a+2]$ is of the former type.
%%% no need to give this derivation below.
%Indeed, if $3a'+b' = 0$, then the signature is $[0,2a',-4a',6a']$. If $3a'+b'\neq 0$, then we normalize the signature by dividing $3a'+b'$ to get $[1,\frac{-a'-b'}{3a'+b'}, \frac{-a'+b'}{3a'+b'}, \frac{3a'-b'}{3a'+b'}]$ and set $a = \frac{-a'-b'}{3a'+b'}$. A direct calculation shows that the signature is now $[1,a,-2a-1,3a+2]$.

% We note that the form $[1,a,-2a-1, 3a+2]$ and the form $[3a'+b',-a'-b',-a'+b',3a'-b']$ are equivalent in expressive power. Indeed, one can easily check that $-2(-a'-b') - (3a'+b') = -a'+b'$ and $3(-a'-b') +2(3a'+b') = 3a'-b'$. Reversely, setting $a' = \frac{1+a}{2}$ and $b' = \frac{-(3a+1)}{2}$ yields $[3a'+b',-a'-b',-a'+b',3a'-b'] = [1,a,-2a-1, 3a+2]$.

\begin{remark}\label{remark8}
Just before Lemma~\ref{y11x} we stated that we could
\emph{essentially} separate a binary straddled signature to get a unary.
This statement is delicate. 
Getting unrestricted use of the unary $[1, x]$ on RHS would be
$\plholant{f}{(=_3), [1,x]}$.
The following two problems are equivalent.
$$\plholant{f, f^{\flat}}{(=_3)}
\equiv_T
\plholant{f}{(=_3), [1,x]}.
$$
This is because for the second problem, every occurrence of
$[1,x]$ is connected to $f$ to produce $f^{\flat}$,
and conversely for  the first problem, every occurrence of
$f^{\flat}$ can be replaced by connecting a copy of $f$  with
$[1,x]$.
However, we do not claim that the problem $\plholant{f}{(=_3), [1,x]}$ is reducible to
$\plholant{f}{(=_3)}$, which 
%would be 
is the following
stronger reduction than what we showed:
$$\plholant{f, f^{\flat}}{(=_3)} \leq_{T} \plholant{f}{(=_3)}.$$
%%%%JYC this sentence is not clear to me what it means.
% as is the in any instance of the LHS problem of course each [1,z] is connected to some copy of [1,a,b,c].
The issue is that now the 
 input graph for the LHS problem is not an edge-vertex incidence graph for a 3-regular plane graph, and so
we cannot apply Theorem~\ref{thm:P3EM} as before. 
If we merge the two incident
edges of all degree 2 vertices (assigned the binary
signature $f^{\flat}$) we do get a planar 3-regular graph.
But this graph may still have degree 3 vertices labeled $f$,
and not every edge comes from merging a degree 2 vertex that
was labeled $f^{\flat}$. Thus,
not every edge participates in a 3-way perfect matching.
In summary, a degenerate binary straddled signature is not completely equivalent to a unary $[1,x]$ on RHS.
We further remark  that, if this were true, we would have
a much simpler proof of Theorem~\ref{thm:[0,1,0,0]}.
\end{remark}

\begin{remark}
Reader should think of Lemma 9 mainly as an illustration of what we will call the \emph{P3EM argument}. The main take-away is that we can separate a degenerate binary straddled signature to get unaries so long as we use one of them on \emph{every} ternary signature on one side and the remaining dangling unaries can be absorbed to create a nonzero global factor. For example, in the proof of Theorem \ref{g1works1}, we are in fact using the gadget depicted in Figure~\ref{non_li}. For the sake of simplicity for presentation, we will say ``interpolate'' $[1,x]$ on RHS or $[y,1]$ on LHS hereafter while the reader is welcome to check the delicate issue in Remark~\ref{remark8} is taken care.
\end{remark}

The following proposition is proved in~\cite{CaiFL21}.

\begin{proposition} \label {g1-rou}
For $G_1 = \left[\begin{smallmatrix}1 & b \\ a & c\end{smallmatrix}\right]$, with $a,b,c \in \mathbb{Q}$, if it is non-singular (i.e., $c \ne ab$), then it has two nonzero eigenvalues $\lambda$ and $\mu$. The ratio  $\lambda/\mu$
%of the  
is not a root of unity \emph{unless} at least one of the following conditions holds: \begin{equation}\label{c1eq} \begin{cases}

c+1=0\\
ab+c^2 + c + 1= 0\\
2ab + c^2 + 1= 0\\
3ab+c^2-c+1=0\\
4ab + c^2-2c+1=0\\
\end{cases}
\end{equation}
\end{proposition}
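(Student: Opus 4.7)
The plan is to exploit the fact that $G_1$ has entries in $\mathbb{Q}$, so its characteristic polynomial is rational. Specifically, the characteristic polynomial of $G_1$ is $x^2 - (1+c)x + (c-ab)$, so by Vieta we have $\lambda + \mu = 1 + c$ and $\lambda\mu = c - ab$. Non-singularity $c \ne ab$ immediately gives $\lambda\mu \ne 0$, so both eigenvalues are nonzero, and we can form the ratio $t := \lambda/\mu$.

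The key observation is to compute the \emph{symmetric} rational function of $t$:
\[
t + \frac{1}{t} \;=\; \frac{\lambda^2 + \mu^2}{\lambda\mu} \;=\; \frac{(\lambda+\mu)^2 - 2\lambda\mu}{\lambda\mu} \;=\; \frac{(1+c)^2 - 2(c-ab)}{c-ab} \;=\; \frac{c^2 + 2ab + 1}{c-ab}.
\]
Since $a,b,c \in \mathbb{Q}$, this quantity is rational. Now suppose $t$ is a root of unity, so $t = e^{i\phi}$ with $\phi$ a rational multiple of $\pi$. Then $t + 1/t = 2\cos\phi$ is both rational and equal to $2\cos$ of a rational multiple of $\pi$. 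I would then invoke Niven's theorem, which states that the only values of $\cos\phi$ rational for such $\phi$ are $\cos\phi \in \{0, \pm \tfrac{1}{2}, \pm 1\}$; equivalently $2\cos\phi \in \{-2,-1,0,1,2\}$.

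The rest is a routine case split. Setting $\frac{c^2 + 2ab + 1}{c-ab}$ equal to each of the five possible values and clearing denominators yields:
\begin{itemize}
\item Value $2$ (i.e.\ $\phi = 0$, a repeated eigenvalue): $(1-c)^2 + 4ab = 0$, giving $4ab + c^2 - 2c + 1 = 0$.
\item Value $-2$ (i.e.\ $\phi = \pi$, $\lambda = -\mu$): $(1+c)^2 = 0$, giving $c + 1 = 0$.
\item Value $0$ (i.e.\ $\phi = \pi/2$): $2ab + c^2 + 1 = 0$.
\item Value $1$ (i.e.\ $\phi = \pi/3$): $3ab + c^2 - c + 1 = 0$.
\item Value $-1$ (i.e.\ $\phi = 2\pi/3$): $ab + c^2 + c + 1 = 0$.
\end{itemize}
These are exactly the five equations in (\ref{c1eq}).

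The only real content in the argument is the invocation of Niven's theorem; the rest is algebraic bookkeeping. I would present it as a lemma-style computation, with the Niven step as the single nontrivial input. A minor point to confirm cleanly is the repeated-eigenvalue case: when the discriminant vanishes, strictly speaking one must check that $\lambda/\mu = 1$ is regarded as a root of unity, which is consistent with the statement since that case is covered by the fifth equation.
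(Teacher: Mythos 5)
Your proof is correct. Note, however, that the paper does not actually prove Proposition~\ref{g1-rou} in-text; it cites~\cite{CaiFL21} for the proof, so there is no in-paper argument to compare against. Your route — passing to the rational symmetric function
\[
t + \frac{1}{t} \;=\; \frac{(\lambda+\mu)^2 - 2\lambda\mu}{\lambda\mu} \;=\; \frac{c^2 + 2ab + 1}{c - ab},
\]
then invoking Niven's theorem to restrict $2\cos\phi$ to $\{-2,-1,0,1,2\}$ when $\phi$ is a rational multiple of $\pi$ — is the natural one, and clearing the denominator $c - ab$ for each of the five values reproduces exactly the five conditions in~(\ref{c1eq}). (The list of conditions transparently encodes those five Niven values, so it is essentially certain the cited proof follows the same lines.) Your handling of the repeated-eigenvalue case is the right reading of the statement: when the discriminant $(1-c)^2 + 4ab$ vanishes, $t = 1$ is still a root of unity, and that situation is captured by the fifth equation $4ab + c^2 - 2c + 1 = 0$.
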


\begin{figure}[ht]
        \centering
        \begin{tikzpicture}  %G_3
            \draw[very thick] (0,4.3)--(4.3,4.3);
            \draw[very thick] (1.3,2.7)--(1.3,4.3);
            \draw[very thick] (3.3,2.7)--(3.3,4.3);
            \draw[very thick] (2.3,2.7)ellipse(1 and 0.4);
            \filldraw[fill=blue](1.2,4.2)rectangle(1.4,4.4);
            \filldraw[fill=green] (3.3,4.3)circle(0.1);
            \filldraw[fill=blue] (3.2,2.6)rectangle(3.4,2.8);
            \filldraw[fill=green] (1.3,2.7)circle(0.1);
        \end{tikzpicture}
        \captionsetup{justification=centering}
        \caption{Binary straddled signature $G_2$}
        \label{fig:g2}
\end{figure}
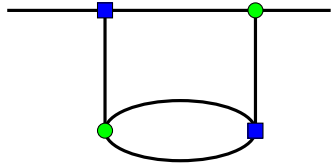

Now we introduce a new binary straddled signature $G_2$ as shown in Figure~\ref{fig:g2}. The signature matrix of 
$G_2$ is $
\left[\begin{smallmatrix}
w & b' \\ a' & c'
\end{smallmatrix}\right]$,
%\left[\begin{smallmatrix}1 + 2a^3 + b^3 & a^2 + 2ab^2 + bc^2 \\ a + 2a^2b + b^2c & a^3 + 2b^3 + c^3\end{smallmatrix}\right]$. We define 
where $w=1+ab$, $a'=a+b^2$, $b'=a^2+bc$ and $c'=ab+c^2$. Similar to $G_1$, we have
$\Delta' = \sqrt{(w-c')^2+4a'b'}$, two eigenvalues $\lambda'=\frac{-\Delta' + (w+c')}{2}$ and $\mu' = \frac{\Delta' + (w+c')}{2}$. If $a'\ne 0$, we have $x'=\frac{\Delta'-(w-c')}{2a'}$, $y'=\frac{\Delta'+(w-c')}{2a'}$ and if further $\Delta' \ne 0$ we can write its Jordan Normal Form as 
\begin{equation}\label{eqn:G2-JNF}
G_2 = \left(\begin{array}{ll}w & b' \\ a' & c'\end{array}\right)=\left(\begin{array}{cc}-x' & y' \\ 1 & 1\end{array}\right)\left(\begin{array}{ll}\lambda' & 0 \\ 0 & \mu'\end{array}\right)\left(\begin{array}{cc}-x' & y' \\ 1 & 1\end{array}\right)^{-1}.
\end{equation}

Similar to Proposition \ref{g1-rou}, we have the following claim on $G_2$.
 \begin{proposition} \label{c3}
If the signature matrix of $G_2$  is non-degenerate, then
 the ratio  $\lambda'/\mu'$ of its eigenvalues 
is not a root of unity \emph{unless} at least one of the following conditions holds, where 
$A = w + c', B = (c'-w)^2 + 4a'b'$.
% $A = 1+2ab+c^2$, $B=4a^3+4a^2b^2 + 4abc + 4b^3c + (c^2-1)^2$:
\begin{equation} \label{c3eq}\begin{cases}
A=0\\
B=0 \\
A^2+B=0\\
A^2+3B=0\\
3A^2 + B = 0\\
\end{cases}
\end{equation}
\end{proposition}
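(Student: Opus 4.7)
The plan is to mirror the proof of Proposition~\ref{g1-rou}, reducing the question to a finite case analysis via a degree argument over $\mathbb{Q}$. The characteristic polynomial of $G_2$ is $t^2 - A t + (wc' - a'b')$, whose discriminant equals $A^2 - 4(wc' - a'b') = (w-c')^2 + 4a'b' = B$, so $\lambda', \mu' = (A \pm \sqrt{B})/2$. Since $a, b, c \in \mathbb{Q}$, the quantities $w, a', b', c'$ are all rational, and hence so are $A = \lambda' + \mu'$ and $\lambda'\mu' = wc' - a'b'$. Consequently $\lambda'$ and $\mu'$ are algebraic of degree at most $2$ over $\mathbb{Q}$, both lying in the field $\mathbb{Q}(\lambda') = \mathbb{Q}(\mu')$.

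Next, I would suppose that $\lambda'/\mu'$ is a primitive $n$-th root of unity. Its minimal polynomial over $\mathbb{Q}$ is the cyclotomic polynomial $\Phi_n$, which has degree $\phi(n)$. Since $\lambda'/\mu' \in \mathbb{Q}(\lambda')$, a field of degree at most $2$ over $\mathbb{Q}$, we must have $\phi(n) \le 2$, forcing $n \in \{1, 2, 3, 4, 6\}$. This is exactly the same degree-two cyclotomic reduction that pins Proposition~\ref{g1-rou} down to five cases for $G_1$.

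The third step is to translate each admissible value of $n$ into one of the listed polynomial conditions in $A$ and $B$. For $n = 1$, $\lambda' = \mu'$ forces $B = 0$; for $n = 2$, $\lambda' = -\mu'$ forces $A = 0$. For $n \in \{3, 4, 6\}$ the ratio $\lambda'/\mu'$ satisfies $t^2 + t + 1$, $t^2 + 1$, $t^2 - t + 1$ respectively; multiplying through by $\mu'^2$ gives $\lambda'^2 + \lambda'\mu' + \mu'^2 = 0$, $\lambda'^2 + \mu'^2 = 0$, and $\lambda'^2 - \lambda'\mu' + \mu'^2 = 0$. Applying the symmetric-function identities $A^2 + B = 2(\lambda'^2 + \mu'^2)$ and $A^2 - B = 4\lambda'\mu'$, these rewrite as $3A^2 + B = 0$, $A^2 + B = 0$, and $A^2 + 3B = 0$ respectively, matching the five conditions in the statement.

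There is no real obstacle in the argument; the whole proof is a routine degree-$2$ cyclotomic reduction as in Proposition~\ref{g1-rou}. The only small subtlety is the final mechanical conversion of each cyclotomic relation between $\lambda'$ and $\mu'$ into the symmetric form involving $A$ and $B$ that the proposition demands, but this is a direct two-line manipulation once the identities above are in hand.
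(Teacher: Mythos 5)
Your proof is correct and is precisely the argument the paper intends — the paper simply cites it as "similar to Proposition~\ref{g1-rou}" without writing it out, and you have supplied the degree-two cyclotomic reduction that makes both propositions work. The computations check out: the characteristic polynomial of $G_2$ is $t^2 - At + (wc'-a'b')$ with discriminant $B$, so $\lambda',\mu' \in \mathbb{Q}(\sqrt{B})$, the constraint $\phi(n)\le 2$ gives $n\in\{1,2,3,4,6\}$, and the symmetrization via $A^2+B = 2(\lambda'^2+\mu'^2)$, $A^2-B = 4\lambda'\mu'$ correctly converts each cyclotomic relation into the five listed conditions (one can further verify that substituting $A_1 = 1+c$, $B_1 = (1-c)^2+4ab$ reproduces exactly the list~(\ref{c1eq}) of Proposition~\ref{g1-rou}).
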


\begin{lemma} \label{y11x}
Suppose  $a,b,c \in \mathbb{Q}$, $a\neq0$ and $c\neq ab$ and $a,b,c$ do not satisfy any condition in  (\ref{c1eq}). Let $x=\frac{\Delta-(1-c)}{2 a}$,  $y=\frac{\Delta+(1-c)}{2 a}$ and $\Delta=\sqrt{(1-c)^2 + 4ab}$. Then 
for $\Holant ( \, [1,a,b,c] \, | =_3)$,
%problem $[1,a,b,c]$,\ 
\begin{enumerate}
    \item we can  interpolate $[y,1]$ on LHS;
    \item we can  interpolate $[1,x]$ on RHS except for 2 cases:
    %(by Remark under Lemma \ref{3.1.2}, the two cases happen when $y=-1$)
    %: 
    $[1,a,a,1]$, $[1,a,-1-2a, 2+3a]$.
\end{enumerate}  
\end{lemma}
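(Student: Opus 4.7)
The plan is to derive Lemma \ref{y11x} directly from three earlier tools: Proposition \ref{g1-rou}, Lemma \ref{3.1.1}, and Lemma \ref{3.1.2}. The statement is essentially a repackaging of those results under the shorthand, introduced alongside Remark~\ref{remark8}, by which ``interpolating a unary on one side'' refers to applying the P3EM argument to the degenerate binary straddled signature $D$ so as to effectively supply that unary.

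First I will check that the hypotheses of Lemma \ref{3.1.1} hold. Two of them, $a \neq 0$ and $c \neq ab$, appear already in the hypothesis of Lemma \ref{y11x}. The condition $\Delta \neq 0$ unwinds to $(1-c)^2 + 4ab \neq 0$, equivalently $4ab + c^2 - 2c + 1 \neq 0$, which is exactly the negation of the fifth identity in (\ref{c1eq}). By Proposition \ref{g1-rou}, since $[1,a,b,c]$ satisfies none of the five identities in (\ref{c1eq}), the eigenvalue ratio $\lambda/\mu$ of $G_1$ is not a root of unity. Hence Lemma \ref{3.1.1} applies and the degenerate binary straddled signature $D = \smm{y}{xy}{1}{x}$ can be planarly interpolated in $\plholant{[1,a,b,c]}{(=_3)}$.

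With $D$ available I then feed it into Lemma \ref{3.1.2}. Its part 2, which is unconditional, gives the reduction $\plholant{[1,a,b,c]}{[y,0,1]} \leq_T \plholant{[1,a,b,c]}{(=_3)}$; under the shorthand convention above this is exactly what ``interpolate $[y,1]$ on LHS'' means, so part 1 of Lemma \ref{y11x} follows with no exceptions. Its part 1 gives $\plholant{[1+ax, a+bx, b+cx]}{(=_3)} \leq_T \plholant{[1,a,b,c]}{(=_3)}$ except when $[1,a,b,c]$ equals $[1,a,a,1]$ or $[1,a,-1-2a,2+3a]$, which in the same shorthand is the assertion that $[1,x]$ can be interpolated on the RHS; this yields part 2 of Lemma \ref{y11x} with precisely the claimed exception list.

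The only algebraically nontrivial content, which is already absorbed into the proof of Lemma \ref{3.1.2}, is confirming that the exception list matches: the two absorption factors $g_1 = y^3 + 1$ and $g_2 = y^3 + by^2 + ay + c$ from Figure~\ref{3g4y1} vanish simultaneously at $y = (\Delta + (1-c))/(2a)$ only when $y = -1$, which then forces $[1,a,b,c]$ into one of the two listed shapes, while at least one of the three factors coming from the gadgets $f_1, f_2, f_3$ in Figure~\ref{4g41x} is always nonzero, ruling out any exception for part 1. Thus no new computation is required and Lemma \ref{y11x} follows as an immediate corollary of Lemma \ref{3.1.1} and Lemma \ref{3.1.2}.
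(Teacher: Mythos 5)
Your proof is correct and takes the same route as the paper: it derives the lemma by verifying the hypotheses of Lemma~\ref{3.1.1} (using $c\neq ab$ for non-degeneracy, Proposition~\ref{g1-rou} and the assumption that no condition in (\ref{c1eq}) holds for the root-of-unity criterion, and noting that $\Delta\neq 0$ is the negation of the fifth condition), and then reads off the two interpolation statements and their exception list directly from the two parts of Lemma~\ref{3.1.2}. The paper's own proof is simply a terser statement of this same reduction.
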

\begin{proof}
This lemma follows from  Lemma \ref{3.1.1} and Lemma \ref{3.1.2}
using the binary straddled gadget $G_1$ with signature matrix $\left[\begin{smallmatrix}1 & b \\ a & c\end{smallmatrix}\right]$.  Note that $c\neq ab$ indicates that matrix $G_1$ is non-degenerate, and $\lambda/\mu$ not being a root of unity is equivalent to none of the equations in (\ref{c1eq}) holds.
\end{proof}

We have similar statements corresponding to $G_2$. When the signature matrix is non-degenerate and does not satisfy any condition in  (\ref{c3eq}), we can  interpolate the corresponding $[y',1]$ on LHS, and we can also interpolate the corresponding $[1,x']$ on RHS except when $y'=-1$.

\begin{definition} \label{d1}
For $\PlHolant ( \, [1,a,b,c] \, | =_3)$, with $a,b,c \in \mathbb{Q}$, $a \neq 0$, we say a binary straddled gadget $G$ \emph{works} if the signature matrix of $G$ is non-degenerate and the ratio of its two eigenvalues $\rlm$ is not a root of unity.
\end{definition}

\noindent
\begin{remark}
    
  Explicitly, the condition that $G_1$ \emph{works} is that $c \neq ab$ and $a,b,c$ do not satisfy any condition in (\ref{c1eq}), which is just the assumptions in Lemma \ref{y11x}. $G_1$ \emph{works} implies that it can be used to interpolate  $[y,1]$ on LHS, 
and to interpolate $[1,x]$ on RHS with two exceptions 
%($[1,a,a,1]$ and $[1,a,-1-2a, 2+3a]$)
for which we already proved the dichotomy. The $x,y$ are as stated in Lemma \ref{y11x}.

Similarly, when the binary straddled gadget $G_2$ \emph{works}, for the corresponding values $x'$ and $y'$,
we can  interpolate $[y',1]$ on LHS,
and we can interpolate  $[1,x']$ on RHS except when $y'=-1$.
 \end{remark}

\begin{figure}[ht] 
\centering
\begin{tikzpicture}  %G_4
\draw[very thick] (0.4,7.3)--(6.2,7.3);
\draw[very thick] (1.3,7.3)--(3.3,3.8);
\draw[very thick] (5.3,7.3)--(3.3, 3.8);
\draw[very thick] (3.3, 3.8)--(3.3, 2.9);
\draw[very thick] (3.3,6.1)--(3.3,7.3);
\draw[very thick] (3.3,6.1)--(4.3, 5.55);
\draw[very thick] (3.3,6.1)--(2.3,5.55);
\filldraw[fill=blue] (1.2,7.2) rectangle(1.4,7.4);  % 1.3  7.3
\filldraw[fill=blue] (5.2,7.2) rectangle(5.4,7.4);  % 5.3 7.3
\filldraw[fill=blue] (3.2,3.7) rectangle(3.4,3.9); % 3.3  3.8
\filldraw[fill=green] (3.3,7.3)circle(0.1);
\filldraw[fill=green] (4.3,5.55)circle(0.1);
\filldraw[fill=green] (2.3,5.55)circle(0.1);
\filldraw[fill=blue] (3.2, 6.0) rectangle(3.4, 6.2);     % 3.3, 6.1
\end{tikzpicture}
\captionsetup{justification=centering}
  \caption{A ternary gadget $G_3$}
  \label{f4}
\end{figure}
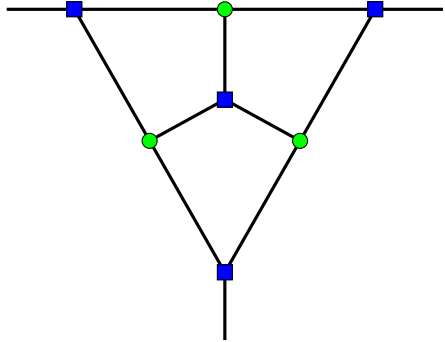

The ternary gadget $G_3$ in Figure \ref{f4} will be used in the proof here and later.

The unary signatures  $\Delta_{0}=[1,0]$ and $\Delta_{1}=[0,1]$
are called  the pinning signatures because they ``pin''
a variable to 0 or 1. Another useful unary signature is $\Delta_{2} = [1,1]$.
One good use of having unary signatures is that we can use Lemma~\ref{3.2} to get these three signatures. They 
are helpful as the following lemma shows.

\begin{lemma} \label{after-unary}
If $\Delta_{0}$, $\Delta_{1}$ and $\Delta_{2}$ can be interpolated on the RHS in
%the problem  
$\PlHolant ( \, [1,a,b,c] \, | =_3)$, 
where $a,b,c \in \mathbb{Q}$, $ab \neq 0$,
then  the problem is \numP-hard unless $[1,a,b,c]$ is affine or degenerate, in which cases it is in FP.
\end{lemma}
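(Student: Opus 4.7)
The plan is to use the three unaries on RHS to extract binary signatures on LHS and apply the binary dichotomy of Corollary~\ref{2-3}. By connecting $\Delta_i$ to one of the three edges of $[1,a,b,c]$, I obtain three binary signatures on LHS:
\[
h_0 = [1, a, b], \qquad h_1 = [a, b, c], \qquad h_2 = [1+a, a+b, b+c].
\]
For each $i$, the problem $\plholant{h_i}{(=_3)}$ is interpolation-reducible to a sub-problem of $\plholant{[1,a,b,c]}{(=_3), \Delta_i}$, itself a sub-problem of ours. Thus, if any $h_i$ lies outside the tractable cases of Corollary~\ref{2-3}, the lemma follows immediately.

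The main work is to handle the case where all three $h_i$ are tractable under Corollary~\ref{2-3}. After normalizing each $h_i$ by its middle entry (when nonzero), each of the four tractable subcases is a polynomial constraint on $(a,b,c)$, giving an algebraic system with up to $4^3 = 64$ combinations. I plan to enumerate these and show that the feasible $(a,b,c)$ correspond exactly to $[1,a,b,c]$ being degenerate (i.e., $[1,a,a^2,a^3]$) or lying in the $ab \neq 0$ portion of the affine class in case (3) of Theorem~\ref{thm:main}, namely $[1,\pm 1,-1,\mp 1]$ up to scale. Corner cases where some $h_i$ has zero middle entry (so Corollary~\ref{2-3} does not directly apply) are dispatched by direct inspection, since $\plholant{[A,0,C]}{(=_3)}$ is trivially tractable and yields no information; in those cases the analysis falls onto the other two binaries.

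For residual combinations where all three $h_i$ are nominally in the tractable list but $[1,a,b,c]$ is neither affine nor degenerate (e.g., $[1,1,-1,1]$), I plan to use further gadgetry: double-pinnings of $[1,a,b,c]$ with pairs of the $\Delta_i$'s produce unary signatures on LHS, which when attached to $(=_3)$ yield binary signatures of the form $[p,0,q]$ on RHS. Inserting such binaries along edges of $[1,a,b,c]$ realizes diagonal transformations $\operatorname{diag}(\alpha,\beta)^{\otimes 3}$ on the ternary signature, producing a new ternary LHS signature $[1,a',b',c']$ whose derived binaries $h'_i$ include a \#P-hard case under Corollary~\ref{2-3}. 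The main obstacle is the careful case enumeration together with verifying that the diagonal-transformation trick exhausts every residual case, so that the dichotomy boundary aligns exactly with the affine and degenerate classes.
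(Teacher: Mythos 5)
Your opening move is the same as the paper's: connect the pins to $[1,a,b,c]$ to obtain LHS binaries and invoke Corollary~\ref{2-3}. The paper does this more economically, using only $\Delta_0,\Delta_1$ to reduce to a short explicit list of residual signatures (degenerate, $[1,a,1,a]$, $[1,a,a^2,a]$, $[1,\pm1,1,\mp1]$, $[1,\pm1,-1,\mp1]$) and then handling each by a targeted tool; your $4^3$ case-grid is correct in spirit but strictly heavier, and it still terminates at essentially the same residual set once you also account for the fact that $\Delta_2$ kills $[1,a,a^2,a]$ (for $a\ne\pm1$) and $[1,1,1,-1]$ but gives no information on $[1,a,1,a]$ (where $h_2\propto[1,1,1]$) or $[1,-1,1,1]$ (where $h_2\propto[0,0,1]$).

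The genuine gap is in the ``further gadgetry.'' The binaries $[p,0,q]$ you produce by attaching a LHS unary to one edge of $(=_3)$ have \emph{both} remaining dangling edges on the RHS side (they emanate from the $(=_3)$ vertex). Such a binary cannot be inserted along an edge of $f$ to act as $\operatorname{diag}(p,q)$: an edge of the signature grid runs from an $f$-vertex to an $(=_3)$-vertex, and attaching one end of $[p,0,q]$ to $f$ leaves the other end still a RHS variable that must attach to another $f$-vertex, not to the original $(=_3)$. The bipartite alternation means you would need a \emph{straddled} binary with one LHS and one RHS leg carrying $\operatorname{diag}(p,q)$, and no small gadget produces that here. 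Consequently the residual case $[1,a,1,a]$ (for $a\ne0,\pm1$) is not reached by your plan; the paper deals with it by invoking Lemma~\ref{lemma:[1,a,1,a]}, which is itself a substantial, separately proved result (crossover-gadget interpolation plus the non-planar dichotomy of \cite{CaiFL21}), and nothing in your proposal substitutes for it. The other stubborn residual, $[1,-1,1,1]$, the paper resolves by building a genuinely planar bipartite ternary gadget $G_3$ and re-running the binary analysis on the resulting $[1,1,-1,3]$; that kind of legitimate re-cooking of a new ternary signature (rather than a side-flipping diagonal trick) is what your plan would need in place of the diagonal transformations.
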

\begin{proof}
Connecting 
%unary signatures 
$[1,0]$, $[0,1]$ to $[1,a,b,c]$ on LHS respectively, we  get binary signatures $[1,a,b]$ and $[a,b,c]$. Then we can apply Corollary \ref{2-3},
and the problem is \numP-hard unless both $[1,a,b]$ and $[a,b,c]$ are in FP. When $ab\neq 0$, both $[1,a,b]$ and $[a,b,c]$ are in FP only when $[1,a,b,c]$ is (1) degenerate, i.e. $b=a^2$ and $c = a^3$, in which case the problem is in FP, (2) of the form $[1,a,1,a]$ which is resolved by Lemma~\ref{lemma:[1,a,1,a]} (when $a = \pm1$, $[1,a,1,a]$ is degenerate), (3) of the form $[1,a,a^2,a]$ which we will resolve later in this proof,  (4) $[1,a,b,c] = [1,1,1,-1]$ or $[1,-1,1,1]$ which we will resolve later in this proof, or (5) $[1,a,b,c] = [1,1,-1,-1]$ or $[1,-1,-1,1]$ which are affine and hence in FP.
% (3) $[1,a,b,c]=$  $[1,1,1-1]$ or $[1,-1,1,1]$ or $[1,-1,-1,-1]$ or $[1,1,-1,1]$ or $[1,1,-1,-1]$ or $[1,-1,-1,1]$, where the last two are affine and hence in FP. 

For case (2), if we connect $[1,1]$ to $[1,a,a^2,a]$, we get a binary signature $[1+a,a+a^2,a+a^2]$ which after normalization (when $a\neq -1$) is $[1,a,a]$. This problem is \numP-hard by Corollary~\ref{2-3} unless $a=\pm1$, in which cases the problem $[1,a,b,c]$ is $[1,1,1,1]$ or $[1,-1,1,-1]$ which are degenerate and thus in FP.

For case (3), due to the symmetry by flipping 0 and 1 in the  signature, it suffices to consider only $f= [1,1,1,-1]$ and $g=[1,-1,1,1]$; they   are neither affine nor degenerate. For both  $f$ and $g$ we use the gadget $G_3$ to produce  ternary signatures  $f' = [1,1,3,3]$
and $g' = [1,1,-1,3]$ respectively.  Neither are among  the exceptional cases above.
So $\Holant ( \, f \, | =_3)$ and $\Holant ( \, g\, | =_3)$ are both \numP-hard.
\end{proof}

\vspace{.1in}
%%%%%%%%%%%%%%%%%%%%%%%%%%%%%%%%%%%%%%%%%%%%%%%%%%%%%%%%

The following lemma lets us interpolate arbitrary unary signatures on RHS, in particular  $\Delta_{0}$, $\Delta_{1}$ and $\Delta_{2}$, from a binary gadget with a straddled signature and a suitable unary signature $s$ on RHS.  

\begin{lemma}[Vadhan, \cite{vadhan2001complexity}] \label{3.2}
Let $M\in \mathbb{R}^{2\times 2}$ be a non-singular signature matrix for a binary straddled gadget which is diagonalizable with distinct eigenvalues, and $s=[a,b]$ be a unary signature on RHS that is not a row eigenvector of $M$. Then $\{s\cdot M^j\}_{j\geq 0}$ can be used to interpolate any unary signature on RHS.
\end{lemma}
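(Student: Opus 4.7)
The plan is to use the standard Vandermonde interpolation strategy. First, I would diagonalize $M = P\,\operatorname{diag}(\lambda,\mu)\,P^{-1}$ with distinct eigenvalues $\lambda,\mu$, and let $p_1, p_2$ denote the two left (row) eigenvectors. Because $s$ is not a row eigenvector of $M$, the decomposition $s = c_1 p_1 + c_2 p_2$ has both $c_1, c_2 \neq 0$. Iterating gives $s\cdot M^j = c_1\lambda^j p_1 + c_2\mu^j p_2$, and the gadget obtained by sequentially composing $s$ with $j$ copies of the straddled binary gadget realizing $M$ has signature exactly $s\cdot M^j$ (the bipartite parity is respected because $M$ already comes from a straddled gadget).

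Next, given any signature grid $\Omega$ in which a target unary $u=[u_0,u_1]$ appears at $n$ distinct edges on the RHS, I would stratify the Holant sum by the number $k$ of those $n$ edges assigned $1$. Let $h_k$ denote the total contribution from all assignments in which exactly $k$ of these distinguished edges are $1$; crucially, $h_k$ depends only on the rest of the grid and not on which unary is placed at those edges. Then $\operatorname{Holant}(\Omega_u) = \sum_{k=0}^{n} u_0^{\,n-k} u_1^{\,k}\,h_k$, and replacing each copy of $u$ by $s\cdot M^j = (A_j, B_j)$ yields
\[
\operatorname{Holant}(\Omega_j) \;=\; \sum_{k=0}^{n} A_j^{\,n-k} B_j^{\,k}\, h_k, \qquad j=0,1,\ldots,n.
\]
After factoring $A_j^n$ from row $j$, the coefficient matrix of this linear system in the unknowns $h_0,\ldots,h_n$ is Vandermonde in the ratios $r_j := B_j/A_j$. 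So if the $r_j$ are pairwise distinct and every $A_j\neq 0$, linear algebra recovers all $h_k$ in polynomial time, and then $\operatorname{Holant}(\Omega_u)$ is obtained as the prescribed linear combination.

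The main obstacle is verifying pairwise distinctness of the sample points $r_j$. Substituting the eigendecomposition gives
\[
r_j \;=\; \frac{c_1 (p_1)_1\, t_j + c_2 (p_2)_1}{c_1 (p_1)_0\, t_j + c_2 (p_2)_0}, \qquad t_j := (\lambda/\mu)^j,
\]
which is a M\"obius function of $t_j$; its coefficient determinant equals $c_1 c_2$ times the determinant of the $2\times 2$ matrix whose rows are $p_1, p_2$, and this is nonzero because $c_1 c_2\neq 0$ and $p_1,p_2$ are linearly independent (distinct eigenvalues). Hence $r_i\neq r_j$ iff $t_i\neq t_j$, which holds whenever $\lambda/\mu$ is not a root of unity. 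In the applications in this paper the hypothesis on $M$ comes from the gadgets $G_1$ or $G_2$ whose eigenvalue ratio has already been verified not to be a root of unity (Propositions~\ref{g1-rou} and \ref{c3}), so the assumption is in force. Finally, if some sporadic $A_j$ vanishes one simply skips that index: infinitely many valid $j$ remain since $\lambda/\mu$ has infinite order, and $n+1$ good samples suffice. Polynomial bit-complexity is routine because $A_j, B_j$ are generated by polynomially many arithmetic operations on inputs of constant size, and inverting a Vandermonde matrix of side $n+1$ is standard.
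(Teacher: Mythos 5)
Your reconstruction is correct. The paper states this lemma purely as a citation to Vadhan and supplies no proof of its own, so there is no internal argument to compare against; your stratify-then-Vandermonde derivation is the standard one. You correctly flag the one delicate point: as written, ``diagonalizable with distinct eigenvalues'' does not by itself exclude $\lambda/\mu$ being a root of unity (for a real $2\times 2$ matrix, e.g.\ $\lambda = -\mu$), in which case $\{s \cdot M^j\}_{j \ge 0}$ has only finitely many distinct projective directions and cannot furnish a full-rank Vandermonde system. The hidden hypothesis that $\lambda/\mu$ has infinite order is precisely what the paper's ``$G$ works'' condition (Definition~\ref{d1}) guarantees in every invocation of this lemma, as you observe. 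Your M\"obius-transformation argument for pairwise distinctness of the ratios $r_j = B_j/A_j$ is the right way to finish: nonvanishing of $c_1 c_2$ times the determinant of the matrix with rows $p_1, p_2$ is exactly how the hypothesis ``$s$ is not a row eigenvector'' enters, and it also implies $A_j = 0$ for at most one index $j$ (the unique $t_j$ the M\"obius map sends to $\infty$), so at most one sample ever needs to be discarded.
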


\subsection{Dichotomy for
$[1, a, b, c]$ when $ab\neq 0$ and $G_1$ works}  \label{large2}
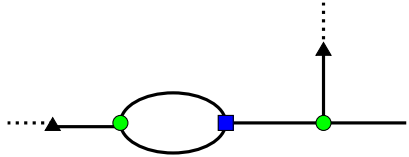
\begin{figure}[h!] 
\centering
\begin{tikzpicture}  % nonli
\draw[very thick] (2,3.3)ellipse(0.7 and 0.4);
\filldraw[fill=black] (0.4,3.37)--(0.3,3.2)--(0.5,3.2)--cycle; 
\draw[very thick] (0.34,3.25)--(1.3,3.25);
\draw[very thick] (2.7,3.3)--(5.1,3.3);
\draw[very thick] (4,4.3)--(4,3.3);
\filldraw[fill=green] (1.3, 3.3)circle(0.1);  % 1.3 3.3
\filldraw[fill=blue] (2.6,3.2) rectangle(2.8,3.4);  % 2.7  3.3
\filldraw[fill=green] (4,3.3)circle(0.1);
\filldraw[fill=black] (4,4.37)--(3.9,4.2)--(4.1,4.2)--cycle;
\draw[very thick, dotted] (0.4,3.3) -- (-0.2,3.3);
\draw[very thick, dotted] (4, 4.3) -- (4,4.9);
\end{tikzpicture}
\captionsetup{justification=centering}
  \caption{Non-linearity gadget, where a triangle represents the unary gadget $[y,1]$}
  \label{non_li}
\end{figure}

Let us introduce a \emph{non-linearity} gadget in Figure \ref{non_li}. If we place in the non-linearity gadget the binary degenerate straddled signature $D$ on triangles (in the way that respects the bipartite structure), $f$ on squares and $(=_3)$ on circles, we get its signature
%If we follow the proof of the basic lemma in \cite{FanC21}, we can get a ternary RHS signature
 $[1,x] \otimes [1,x] \otimes [y^2+yb,ya+c]$.
% using a planar construction.
Note that it is a ternary planar gadget on RHS.
%This is obtained in the setting of
%$\plholant{[1,a,b,c]}{(=_3)}$.
The following two lemmas  will be used in the proof of  Theorem~\ref{g1works1}.

\begin{lemma} \label{1-1 X=1}
Let $a,b, c \in \mathbb{Q}$, $ab \neq 0$,  and
satisfy \\
\emph{(con1)} $a^3 - b^3 - ab(1-c) =0$ and\\
\emph{(con2)} $a^3+ab+2b^3=0$.\\
Then $\PlHolant ( \, [1,a,b,c] \, | =_3)$ is
\numP-hard unless it is $[1,-1,1,-1] = [1, -1]^{\otimes 3}$ which is degenerate, or a matchgate, in both cases the problem is in FP. 
\end{lemma}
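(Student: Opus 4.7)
The overall strategy is to leverage the hypothesis that $G_1$ works to cascade down from the ternary problem to a binary one, where the known dichotomy of Corollary~\ref{2-3} applies. Since $G_1$ works, Lemma~\ref{y11x} lets us planarly interpolate the unary $[1,x]$ on RHS and $[y,1]$ on LHS (with $x,y$ determined by $a,b,c$). Applying Lemma~\ref{3.1.2}(1), this immediately gives the reduction \plholant{[1+ax,a+bx,b+cx]}{(=_3)} $\leq_T$ \plholant{[1,a,b,c]}{(=_3)}, producing a binary signature $f^{\flat}$ on LHS. The plan is to normalize $f^{\flat}$ to the form $[\alpha,1,\beta]$ and invoke Corollary~\ref{2-3}: the problem is \#P-hard unless $f^{\flat}$ lands in one of the four tractable cases there. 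I expect conditions (con1) and (con2), together with the nickname ``X=1'' of the lemma, to correspond exactly to $f^{\flat}$ falling into case (1) $\alpha\beta=1$ of Corollary~\ref{2-3}, making this first reduction uninformative.

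In that exceptional regime, the plan is to build a different binary signature via a secondary gadget. The natural candidates are the binary straddled signature $G_2$ in Figure~\ref{fig:g2} and the ternary gadget $G_3$ in Figure~\ref{f4}: one can attach $G_3$ to $[1,x]$ on RHS to create a \emph{new} binary signature on LHS whose tractability conditions under Corollary~\ref{2-3} are algebraically independent from those of $f^{\flat}$. The goal is then to show that the simultaneous tractability of both the $f^{\flat}$-reduction and the $G_3$-reduction, combined with (con1) and (con2), forces $(a,b,c)$ into an exceedingly restricted solution set. The two exceptional signatures $[1,a,a,1]$ and $[1,a,-1-2a,2+3a]$ from Lemma~\ref{y11x} must also be handled separately; the former is the matchgate case of the lemma (which under (con2) pins down $a = -1/3$), and the latter coincides with case (5) of Theorem~\ref{thm:main} and hence is in FP.

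The tractability direction of the lemma is direct: $[1,-1,1,-1] = [1,-1]^{\otimes 3}$ is degenerate, and the matchgate cases $[1,t,t,1]$ (which, on imposing (con1) and (con2), force $t = -1/3$) are in FP via a holographic transformation to the FKT algorithm, as in Lemma~\ref{lemma:[1,a,a,1]}. One checks directly that $[1,-1,1,-1]$ satisfies both (con1) and (con2), as does $[1,-1/3,-1/3,1]$; these will be exactly the two stated exceptional signatures.

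The main obstacle I anticipate is the final algebraic elimination step: one must show that the polynomial system formed by (con1), (con2), $ab \neq 0$, the tractability condition from the $f^{\flat}$-reduction, and the tractability condition from the secondary gadget has only the claimed rational solutions. This will likely require either resultant computation or a carefully chosen substitution — for instance, using (con2) to express $a^3$ in terms of $ab$ and $b^3$ and then eliminating $c$ using the consequence of (con1) that $c = 1 + (b^3 - a^3)/(ab)$. I expect this to be a finite but delicate case-split rather than a purely conceptual argument, and it is where most of the technical weight of the proof will lie.
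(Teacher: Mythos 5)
Your proposal takes a genuinely different route from the paper, and there are two points where it would likely run into trouble.

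First, your ``step 1'' is circular. Conditions (con1) and (con2) are \emph{exactly} what comes out of running the $G_1$-based reduction (connect $[1,x]$ to $[1,a,b,c]$, normalize, apply Corollary~\ref{2-3}) and landing in case (1) of that corollary; that is why the lemma is nicknamed ``X=1.'' You correctly predict this outcome (``making this first reduction uninformative''), but that means your step 1 produces no information -- the lemma's hypotheses \emph{are} the output of that step. Also, the lemma statement does not assume $G_1$ works (Lemma~\ref{y11x} is not applicable without that hypothesis), and the paper's proof deliberately avoids invoking it: the very first observation in the paper's proof is that if $a+b^2 = 0$ then (con1), (con2), $ab\ne 0$ force $[1,-1,1,-1]$, which is precisely one of the two signatures for which $G_1$ does \emph{not} work. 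So a proof that opens with ``since $G_1$ works'' needs extra justification even for the non-exceptional cases.

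Second, and more fundamentally, your secondary attack is not the one the paper uses and is much weaker. You propose attaching $G_3$ and then $[1,x]$ to get another binary signature and applying Corollary~\ref{2-3} again. The paper instead uses the \emph{straddled} gadget $G_2 = \smm{1+ab}{a^2+bc}{a+b^2}{ab+c^2}$ (not $G_3$), writes its Jordan form to extract a new $[y,1]$ on LHS, feeds that $[y,1]$ into the \emph{non-linearity gadget} (Figure~\ref{non_li}) to produce a \emph{unary} signature $[y^2+yb,\, ya+c]$ on RHS, applies Vadhan's interpolation (Lemma~\ref{3.2}) to obtain \emph{arbitrary} unaries on RHS including $\Delta_0,\Delta_1,\Delta_2$, and then invokes Lemma~\ref{after-unary}. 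The exceptional cases are precisely when $[y^2+yb,\, ya+c]$ is a row eigenvector of $G_2$, and the resulting polynomial system -- together with (con1), (con2) -- is solved (by computer algebra) to yield only $[1,-1,1,-1]$ and $[1,-\tfrac13,-\tfrac13,1]$. The case ``$G_2$ does not work'' is also handled and again leaves only $[1,-1,1,-1]$. By contrast, your approach of producing another binary and feeding it to Corollary~\ref{2-3} does not obviously pin down $c$ strongly enough: there is no reason a priori that a $G_3$-then-$[1,x]$ binary will have tractability conditions that, jointly with (con1), (con2), cut the variety down to two rational points; you acknowledge this as ``where most of the technical weight will lie'' but give no indication it can actually be carried through. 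The non-linearity gadget + Vadhan + Lemma~\ref{after-unary} pipeline is the essential machinery the paper uses here (and throughout Section~4), and omitting it is a real gap, not a stylistic difference.

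Your identification of the two exceptional signatures ($[1,-1,1,-1]$ degenerate and $[1,-1/3,-1/3,1]$ a matchgate, both satisfying (con1), (con2)) is correct and matches the paper.
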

\begin{proof}
If $a+b^2=0$  in addition to (\emph{con1}) and (\emph{con2})
with  $ab \neq 0$, then $[1,a,b,c] = [1,-1,1,-1]$ which is degenerate. Now we assume $a+b^2\ne0$. Here we use Gadget $G_2$. 

First assume $G_2$ \emph{works}. 
%By assumption $a+b^2\ne 0$,
Using  $a+b^2\ne 0$
together with (\emph{con1}) and  (\emph{con2}), we 
can verify that 
$\Delta = \sqrt{4(a+b^2)(a^2+bc)+(c^2-1)^2} \not =0$, and
%get $4(a+b^2)(a^2+bc) + (c^2-1)^2 \ne 0$, so 
we can write the Jordan Normal Form  $$G_2 = \left(\begin{array}{ll}1+ab & a^2+bc \\ a+b^2 & ab+c^2
\end{array}\right)=\left(\begin{array}{cc}-x & y \\ 1 & 1\end{array}\right)\left(\begin{array}{ll}\lambda & 0 \\ 0 & \mu\end{array}\right)\left(\begin{array}{cc}-x & y \\ 1 & 1\end{array}\right)^{-1},$$ where
%$\Delta = \sqrt{4(a+b^2)(a^2+bc)+(c^2-1)^2}$, 
$\lambda = \frac{1+2ab+c^2-\Delta}{2}$, $\mu = \frac{1+2ab+c^2+\Delta}{2}$, $x = \frac{\Delta+c^2-1}{2(a+b^2)}$, $y = \frac{\Delta-c^2+1}{2(a+b^2)}$. Because $G_2$ works, $[y,1]$ on LHS is  available. Use this $[y,1]$
in the non-linearity gadget in Figure \ref{non_li}, we get the unary signature $\left[y^{2}+y b, y a+c\right]$ on the RHS. By Lemma \ref{3.2}, we can interpolate any unary signature, in particular $\Delta_{0}$, $\Delta_{1}$ and $\Delta_{2}$ on RHS
and apply Lemma~\ref{after-unary}, unless $\left[y^{2}+y b, y a+c\right]$ is proportional to a row eigenvector of $G_3$, namely $[1,-y]$ and $[1, x]$. Thus the exceptions are $ya+c = x(y^{2}+y b)$ and $ya+c = -y(y^2+yb)$. Notice that now $xy = \frac{a^2+bc}{a+b^2}$. The first equation implies $c = ab$ or $a + b^2= 0$ or $a^3-b^3c+ab(-1+c^2) = 0$. The second equation implies $a+b^2=0$ or $f_1 =0$ where $f_1= a^3+4a^6+3a^5b^2+a^3b^3-c-4a^3c+6a^4bc-6a^2b^2c-b^3c-3a^2b^5c-3a^3c^2-3abc^2-4b^3c2-a^3b^3c^2-6ab^4c^2-4b^6c^2+3c^3+4a^3c^3+6a^2b^2c^3+3b^3c^3+a^3c^4+3abc^4+4b^3c^4-3c^5-b^3c^5+c^7$. So there are four exceptional cases, \begin{equation} \label {g3works} \begin{cases}
    c = ab \\
    a + b^2= 0\\
    a^3-b^3c+ab(-1+c^2) = 0\\
    f_1 = 0\\ \end{cases}
\end{equation}

 For each of them, together with (\emph{con1}) and (\emph{con2}), we get 3 equations and can solve them using
Mathematica\texttrademark{}.
For rational $a,b,c$, when $ab\neq0$, there are only two possible results --- $[1,-1,1,-1]$ and $[1,-\frac{1}{3}, -\frac{1}{3}, 1]$. The first one violates $a+b^2 \ne 0$, 
%Between them, the first is degenerate 
and the second is a matchgate and thus in $\operatorname{FP}$. For all other cases when $G_2$ works, we have
the pinnng signatures $\Delta_{0}$, $\Delta_{1}$  and $\Delta_{2}$
on the RHS and then the lemma is proved by Lemma \ref{after-unary}.

Now suppose $G_2$ does not work. Then by Proposition \ref{c3}, we get at least one more condition, either one in (\ref{c3eq}) or $(a+b^2)(a^2+bc) = (1+ab)(ab+c^2) $ which indicates that $G_2$ is degenerate. For each of the 6 conditions, together with (\emph{con1}) and (\emph{con2}), we can solve them using
Mathematica\texttrademark{} for rational $a,b,c$. The only 
solution is $[1,-1,1,-1]$ which  violates $a+b^2 \ne 0$.
%is degenerate. 
The proof of the  lemma is complete.
\end{proof}

\begin{lemma} \label{1-2 X=1}
Let $a,b, c \in \mathbb{Q}$, $ab \neq 0$,  and
satisfy\\
\emph{(con1)} $a^3 - b^3 - ab(1-c) =0$ and\\
\emph{(con2)}  $(a^4b+ab^4)^2 = (a^5+b^4)(b^5+a^4c)$.\\
Then $\PlHolant ( \, [1,a,b,c] \, | =_3)$ is
 \numP-hard unless it is $[1,a,a^2, a^3]$ which is degenerate, or a matchgate, in both cases the problem is in FP. 
\end{lemma}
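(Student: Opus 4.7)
The plan is to mirror the proof of Lemma \ref{1-1 X=1} closely, replacing (con2) by the new condition but keeping the overall structure. First, I would observe that when $b=a^2$ and $c=a^3$, the signature $[1,a,b,c]=[1,a]^{\otimes 3}$ is degenerate and so the problem lies in FP; one can directly verify that this one-parameter family satisfies both (con1) and the new (con2). So henceforth assume the signature is not of this form, and in particular $a+b^2$ is not forced to vanish.

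Next I would use the binary straddled gadget $G_2$ from Figure \ref{fig:g2}. Suppose first that $G_2$ \emph{works} in the sense of Definition \ref{d1}. Then, arguing as in Lemma \ref{y11x} applied to $G_2$, the unary $[y,1]$ on the LHS is available, where $y$ is as in the Jordan decomposition (\ref{eqn:G2-JNF}) of $G_2$. Plugging this $[y,1]$ into the non-linearity gadget of Figure \ref{non_li} (with $[1,a,b,c]$ on squares and $(=_3)$ on circles) produces a unary $[y^2+yb,\, ya+c]$ on the RHS. By Lemma \ref{3.2}, powering this unary by $G_2$ interpolates every unary on the RHS, in particular the pinning signatures $\Delta_0,\Delta_1,\Delta_2$, unless $[y^2+yb,\, ya+c]$ is proportional to a row eigenvector of $G_2$. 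Once the pinning signatures are available on RHS, Lemma \ref{after-unary} yields \numP-hardness.

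It remains to analyze the exceptional cases in which either (a) $G_2$ does not work, or (b) $[y^2+yb,\, ya+c]$ is proportional to a row eigenvector of $G_2$, namely $ya+c = x(y^2+yb)$ or $ya+c = -y(y^2+yb)$. Each such exception contributes one polynomial equation in $a,b,c$ (using also $xy = (a^2+bc)/(a+b^2)$ as in the proof of Lemma \ref{1-1 X=1}), which, when adjoined to (con1) and the new (con2), yields a polynomial system over $\mathbb{Q}$. Under the running assumption $ab\neq 0$, I would solve each such system by symbolic computation (e.g.\ Mathematica's cylindrical algebraic decomposition, as already invoked in the proof of Lemma \ref{3.1.2}). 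The expected outcome is that every rational solution either reduces to the degenerate case $[1,a,a^2,a^3]$ or to a signature lying in the matchgate-transformable class of Theorem \ref{thm:main}(4), which is in FP by a reduction to planar \#PM via the FKT algorithm. The case in which $G_2$ does not work is handled in the same spirit: we adjoin one of the conditions from Proposition \ref{c3}, or the singularity condition $(a+b^2)(a^2+bc)=(1+ab)(ab+c^2)$ for $G_2$, to (con1) and (con2), and verify by symbolic computation that only the degenerate or matchgate cases survive.

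The main obstacle is the computer-algebra case analysis. Unlike (con2) in Lemma \ref{1-1 X=1}, which was linear in $c$, the new (con2) is a quartic identity mixing high powers of $a,b,c$, so the Gr\"obner basis computations are heavier; extra care is needed to ensure that every branch of the rational solution set is exhausted and that each sporadic solution that survives really is either degenerate or matchgate-transformable, matching the exceptions allowed in the statement.
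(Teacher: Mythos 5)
Your proposal reproduces the high-level skeleton of the paper's proof of Lemma~\ref{1-1 X=1} and would likely close the case, but it misses the key algebraic shortcut the paper actually uses for Lemma~\ref{1-2 X=1}, and one of your stated obstacles is not really there. Both (con1) and the new (con2) are \emph{linear in $c$}: expanding (con2) gives $a^2b^2(a^3+b^3)^2 = a^5b^5 + b^9 + (a^9+a^4b^4)\,c$, so $c$ appears only to the first power. The paper exploits this by eliminating $c$ between the two conditions, obtaining a single polynomial relation in $a,b$ alone, which then factors (``quite miraculously'') as $(a^2-b)\,(a^9 + a^4b^4 + a^3b^6 + b^9)=0$. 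The factor $a^2-b=0$ combined with (con1) forces $c=a^3$, i.e.\ exactly the degenerate signature $[1,a,a^2,a^3]$ you want to exclude, while the other factor $a^9 + a^4b^4 + a^3b^6 + b^9 = 0$ is a two-variable constraint that plugs directly into the Lemma~\ref{1-1 X=1} machinery in place of $a^3+ab+2b^3=0$; the paper then simply invokes ``the rest of the proof is essentially the same as Lemma~\ref{1-1 X=1}.'' By contrast you propose to feed the raw (con1)$\wedge$(con2) plus each exceptional equation into a Gr\"obner-basis solver. This should succeed in principle, but it is substantially heavier than necessary and, more importantly, your remark that $a+b^2$ ``is not forced to vanish'' after excluding $b=a^2$ papers over a step you would actually have to check: $a+b^2\ne 0$ is needed for the $G_2$ Jordan form, and in Lemma~\ref{1-1 X=1} the degenerate exclusion literally is $a+b^2=0$, whereas here you exclude $b=a^2$, which is a priori a different condition. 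One must verify (as the elimination+factorization makes transparent) that $a+b^2=0$ together with (con1), (con2), $ab\neq 0$ over $\mathbb{Q}$ forces $a=-1$, $b=1$, which falls under $b=a^2$ anyway; your write-up does not make this argument. In short: your route is correct in spirit but bypasses the elimination-and-factorization observation that is the actual content of the paper's proof, and without it you owe the reader the $a+b^2\ne 0$ check and a much larger symbolic computation than the one the paper implicitly relies on.
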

\begin{proof}
Eliminating $c$ from  (\emph{con1}) and (\emph{con2})
we get 
 $a^{11}-a^9b+a^6b^4+a^5b^6-a^4b^5-a^3b^7+a^2b^9-b^{10}=0$, which, quite miraculously, can be factored as $(a^2-b) (a^9 + a^4 b^4 + a^3b^6 + b^9)=0$. If $b=a^2$, then with (\emph{con1}), we get $c=a^3$ thus the signature becomes $[1,a,a^2,a^3]$ which is degenerate. We  assume 
 % (\emph{con3}) 
 $a^9 + a^4b^4 + a^3b^6+b^9=0$, then the rest  of the
 proof is essentially the same as Lemma \ref{1-1 X=1}.
\end{proof}

\begin{theorem} \label{g1works1}
For $a,b, c \in \mathbb{Q}$, $ab \neq 0$, 
 if $G_1$ works, then $\PlHolant ( [1,a,b,c]  | =_3)$ is
 \numP-hard unless it is in the tractable cases of Theorem~\ref{thm:main} and thus in FP.
\end{theorem}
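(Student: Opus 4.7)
The plan is to produce the three pinning-type unary signatures $\Delta_0 = [1,0]$, $\Delta_1 = [0,1]$, and $\Delta_2 = [1,1]$ on the RHS, whereupon Lemma~\ref{after-unary} closes the case to either FP (degenerate or affine) or \#P-hardness. Since $G_1$ works, Lemma~\ref{y11x} immediately supplies the unary $[y,1]$ on the LHS, and also $[1,x]$ on the RHS except in the two exceptions $[1,a,a,1]$ and $[1,a,-1-2a,2+3a]$. Both of these are already tractable: the first by Lemma~\ref{lemma:[1,a,a,1]}, and the second is exactly the normalized form of Case~(5) of Theorem~\ref{thm:main} (as shown in the introductory derivation via the Hadamard transformation), so I set these aside.

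In the remaining generic situation I feed $[y,1]$ into the non-linearity gadget of Figure~\ref{non_li}. After the $[1,x]$ factors are absorbed globally by the P3EM argument supporting Lemma~\ref{3.1.2}, this gadget realizes on the RHS the unary $s = [y^2+yb,\,ya+c]$. Applying Lemma~\ref{3.2} to the non-singular diagonalizable straddled matrix $G_1$ together with $s$, the sequence $\{s \cdot G_1^j\}_{j\ge 0}$ interpolates every unary on the RHS---in particular $\Delta_0,\Delta_1,\Delta_2$---provided $s$ is not a row eigenvector of $G_1$. Feeding the pinnings into Lemma~\ref{after-unary} then yields the desired dichotomy.

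The remaining obstacle, and the main source of complication, is the exceptional locus on which $s$ is proportional to one of the two row eigenvectors $[1,x]$ or $[1,-y]$ of $G_1$. Using the identities $xy = b/a$ (which follows from the fact that both $x$ and $-y$ are roots of $at^2-(1-c)t-b=0$) and $ay^2 = (1-c)y+b$, each proportionality reduces to a polynomial equation in $a,b,c$; after eliminating $y$ these collapse to condition (\emph{con1}) $a^3-b^3-ab(1-c)=0$ of Lemmas~\ref{1-1 X=1} and~\ref{1-2 X=1}, paired with their respective (\emph{con2}). In these subcases I invoke those two lemmas directly; their proofs switch to the second straddled gadget $G_2$ and run an analogous interpolation, ultimately solving the combined polynomial systems over $\mathbb{Q}$ and showing that the only rational solutions are already-tractable signatures ($[1,-1,1,-1]$ and $[1,a,a^2,a^3]$ degenerate, $[1,-\tfrac13,-\tfrac13,1]$ a matchgate). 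Thus the hard part is not any single gadget construction but the careful algebraic bookkeeping of the nested exceptional strata across the two layers of interpolation, which is precisely why the present theorem is organized with Lemmas~\ref{1-1 X=1} and~\ref{1-2 X=1} as prerequisites.
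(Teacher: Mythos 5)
Your high-level strategy agrees with the paper's: interpolate $[y,1]$ on the LHS via Lemma~\ref{y11x}, push it through the non-linearity gadget of Figure~\ref{non_li} to realize $s=[y^2+yb,\,ya+c]$ on the RHS, interpolate arbitrary unaries via Lemma~\ref{3.2}, and finish with Lemma~\ref{after-unary}. You also correctly dispose of the two exceptions to producing $[1,x]$. But the handling of the exceptional locus where $s$ lies in the row eigenspace of $G_1$ is wrong in a way that breaks the argument.

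You assert that both proportionality conditions, after eliminating $y$, collapse to (\emph{con1}) $a^3-b^3-ab(1-c)=0$, paired with the respective (\emph{con2})'s, so that Lemmas~\ref{1-1 X=1} and~\ref{1-2 X=1} apply directly. This is not what happens. Proportionality to $[1,x]$ gives $ya+c=(y^2+yb)x$, which reduces (using $xy=b/a$, i.e.\ Vieta on $at^2+(1-c)t-b=0$ --- note the sign) to $a^3-b^3-ab(1-c)=0$ or $c=ab$. Proportionality to $[1,-y]$ gives $ya+c=-y(y^2+yb)$, which reduces to $c=ab$ or $c=1+a-b$; this is a \emph{distinct} constraint that is not (\emph{con1}). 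After discarding $c=ab$ (forbidden since $G_1$ works) there are therefore \emph{two} exceptional cases, $a^3-b^3-ab(1-c)=0$ and $c=1+a-b$, and your proposal never treats the second one. Moreover, even inside the first case, the conditions (\emph{con2}) of Lemmas~\ref{1-1 X=1} and~\ref{1-2 X=1} do not come from the proportionality locus: the paper must first split on the sign of $(a^3+b^3)/(ab)$, use the corresponding interpolated unary ($[1,x]=[1,b^2/a^2]$ or $[y,1]=[-b^2/a^2,1]$) to build a fresh binary signature, and run Corollary~\ref{2-3} on it; only the tractable branches of that corollary produce the respective (\emph{con2})'s, and an additional sub-branch $[1,-b^2,b,-b^3]$ has to be handled on its own. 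So the nested exceptional strata are not two layers of interpolation as you describe, but one interpolation layer followed by a second round of Corollary~\ref{2-3} applied to a newly constructed binary; skipping that round means the preconditions of Lemmas~\ref{1-1 X=1} and~\ref{1-2 X=1} are never established, and Case~2 is simply absent.
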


\begin{proof}
If $[1,a,b,c]$ has the form $[1,a,a,1]$ or $[1,a,-1-2a,2+3a]$
then the problem is in FP.  We now assume the signature is not of these two forms.
By Lemma \ref{y11x}, when $G_1$ works, we can interpolate $[y,1]$ on LHS and also  $[1,x]$ on RHS.

Let us write down the Jordan Normal Form again: $$G_1 = \left(\begin{array}{ll}1 & b \\ a & c\end{array}\right)=\left(\begin{array}{cc}-x & y \\ 1 & 1\end{array}\right)\left(\begin{array}{ll}\lambda & 0 \\ 0 & \mu\end{array}\right)\left(\begin{array}{cc}-x & y \\ 1 & 1\end{array}\right)^{-1},$$  $\lambda=\frac{-\Delta+(1+c)}{2}$, $\mu=\frac{\Delta+(1+c)}{2}$, $x=\frac{\Delta-(1-c)}{2 a}$,  $y=\frac{\Delta+(1-c)}{2 a}$, $\Delta=\sqrt{(1-c)^2 + 4ab}$.

Using $[y,1]$ and the gadget in Figure \ref{non_li}, we get  $\left[y^{2}+y b, y a+c\right]$ on the RHS.
We can interpolate  $\Delta_{0}$, $\Delta_{1}$ and $\Delta_{2}$ on RHS unless $\left[y^{2}+y b, y a+c\right]$ is proportional to a row eigenvector of $G_1$, namely $[1,-y]$ or $[1, x]$, according to Lemma \ref{3.2}. Thus the exceptions are $y a+c = (y^{2}+y b)x$ or $y a+c = -y(y^{2}+y b)$. The first equation implies $a^{3}-b^{3}-a b(1-c)=0$ or $c=ab$. The second equation implies $c=ab$ or $c=1+a-b$. 

By assumption  $G_1$ works, so $c\ne ab$. Thus, we consider two exceptional cases.

\noindent 
\textbf{Case 1}: $a^{3}-b^{3}-a b(1-c)=0$

In this case, we have $1-c = \frac{a^3-b^3}{ab}$ and thus $\Delta = \sqrt{(1-c)^2+4ab} = |\frac{a^3+b^3}{ab}|$.  One condition ($4ab + c^2 - 2c+1 = 0$) in (\ref{c1eq}) is the same as $\Delta=0$.  Since $G_1$ works, we have $\Delta \neq 0$ and thus $a^3+b^3\neq 0$, which is equivalent to $a+b\ne 0$ when $a,b\in \mathbb{Q}$.

%\noindent
\begin{description}
    \item{Subcase 1: $ \frac{a^3+b^3}{ab} > 0$.}
%\textbf{Subcase 1:} 
%
 We have  $[1,x] = [1,\frac{\Delta - \left(1-c\right)}{2a}] = [1, \frac{b^2}{a^2}]$ on RHS. Connect $[1,x]$ to $[1,a,b,c]$ on LHS, we get the binary signature $[1+\frac{b^2}{a}, a+\frac{b^3}{a^2}, b+\frac{b^2c}{a^2}]$ on LHS. Note that $a+\frac{b^3}{a^2} \neq 0$ when $a+b \neq 0$.
%
%We have $a+\frac{b^3}{a^2}\neq 0$. 
%The problem $[1+\frac{b^2}{a}, a+\frac{b^3}{a^2}, b+\frac{b^2c}{a^2}]$ 
It is \numP-hard (and thus the problem $[1,a,b,c]$ is \numP-hard) unless one of the tractable conditions in Corollary~\ref{2-3} holds. It turns out that the only possibilities are either (1) 
first case in Corollary \ref{2-3}, i.e. $(1+\frac{b^2}{a})(b+\frac{b^2c}{a^2}) = (a+\frac{b^3}{a^2})^2$, which becomes $\left(a^2-b \right)\left(a^3+ab+2b^3 \right) = 0$ after substituting $c = \frac{ab-a^3+b^3}{ab}$,
%To quickly verify, one might first plug c into b+\frac{b^2c}{a^2} to get \frac{b^2(a+b^2){a^3}}
or (2) the second and third case in Corollary \ref{2-3}, which implies $(a = -b^2) \wedge (c= - b^3)$ and thus the problem $[1,a,b,c]$ becomes the problem $[1,-b^2,b,-b^3]$,
% To verify this, a quick way is to first plug c into b+\frac{b^2c}{a^2} to get \frac{b^2(a+b^2){a^3}}, and then set the ratio between \frac{b^2(a+b^2){a^3}} and 1+\frac{b^2}{a} to be -1
or (3) the fourth case in Corollary~\ref{2-3},
%$1+\frac{b^2}{a} = b+\frac{b^2c}{a^2}$, which 
which implies $(a=b) \wedge (c=1)$ or $(a = -b^2) \wedge (c= - b^3)$, where in the former case $[1,a,b,c]$ is of the matchgate form $[1,a,a,1]$ and thus in FP. 

We now deal with the case (1).
When $a^2-b = 0$, together with $a^3-b^3=ab\left(1-c\right)$, we have $c = a^3$, and thus $[1,a,b,c]$ is degenerate. When $a^3+ab+2b^3 = 0$, together with $a^3-b^3=ab\left(1-c\right)$, by Lemma \ref{1-1 X=1}, $[1,a,b,c]$ is \numP-hard (with $a+b\ne0$ ruling out the exception). 

We now deal with the case $[1,-b^2,b,-b^3]$. If $b=0, \pm 1$, it is degenerate or affine. Now assume $b\neq 0, \pm 1$. $G_1 = \smm{1}{b}{-b^2}{b^3} = \smmv{1}{-b^2}\cdot\smmh{1}{b}$. Then we can get $[1,-b^2]$ on the  LHS.
Note that connecting three copies of $[1,b]$ with $[1,-b^2,b,-b^3]$ on LHS produces a global factor 
$1-b^6 \ne 0$.  Connect $[1,-b^2]$ twice to $[1,0,0,1]$ on RHS, and we get $[1,b^4]$ on RHS. Connect $[1,b^4]$ back to $[1,-b^2, b, -b^3]$ on LHS, and we get a binary signature $g=[1-b^6, -b^2 + b^5, b-b^7]$, which by Corollary \ref{2-3} is \numP-hard unless $b = \pm1$ which has been discussed, and therefore $[1,-b^2, b, -b^3]$ is also \numP-hard.

 \item{Subcase 2: $\frac{a^3+b^3}{ab} < 0$}. We have $[y,1] = [-\frac{b^2}{a^2},1]$ on LHS. Connecting two copies of $[y,1]$ to $(=_3)$ we get $[y^2, 1] = [\frac{b^4}{a^4}, 1]$ on RHS.
 Connecting it back to LHS, we  get a binary signature $[\frac{b^4}{a^4}+a,\frac{b^4}{a^3}+b,\frac{b^5}{a^4}+c]$ on LHS. 
It is \numP-hard  unless one of the tractable conditions in Corollary~\ref{2-3} holds. It turns out that the only possibilities are either
 $(a^4b+ab^4)^2 = (a^5+b^4)(b^5+a^4c)$, or $(a=b) \wedge (c=1)$ which is the matchgate case and thus in FP. We now deal with the former case. Together with $a^3-b^3=ab\left(1-c\right)$, by Lemma \ref{1-2 X=1}, $[1,a,b,c]$ is \numP-hard unless it is degenerate.
\end{description}

\noindent 
\textbf{Case 2}: $1+a-b-c=0$

In this case, $\Delta=|a+b|$, and since
$G_1$ works, one condition is $4ab+c^2-2c+1=0$ in (\ref{c1eq})
which says $\Delta \ne 0$,  and thus $a+b \neq 0$.

If $a+b > 0$, then 
$\Delta=a+b$, and $x = \frac{a+b-(1-c)}{2a} = 1$. Then we can interpolate $[1,x] = [1,1]$ on RHS (as $y = \frac{a+b+(1-c)}{2a} = \frac{b}{a}\ne -1$). Else,  $a+b < 0$,  then 
$\Delta=-(a+b)$, and $y = \frac{-a-b + (1-c)}{2a} = \frac{-a-b + (b-a)}{2a}= -1$. We can  get $[1,1]$ on RHS by connecting two copies of $[y,1]=[-1,1]$ to $[1,0,0,1]$. Then connecting $[1,1]$ to $[1,a,b,c]$ on LHS we get a binary signature $[a+1,a+b, a+c]$ on LHS. Again we can apply
 Corollary \ref{2-3} to it, and conclude that it is
 \numP-hard. It turns out that  
 the only feasible cases of tractability leads to 
  $[1,a,-1-2a,2+3a]$, and $(c=1)\wedge(a=b)$, in both cases the problem is in FP.
%  for which we have a dichotomy
% Lemma~\ref{p3}.
 %%% X=1 gives (a+b = \pm (a+1). but (+) ==> b=1,c=a is
 %%% [1,a,1,a] violates  c \not =ab.
% For all other cases it is 
% \numP-hard, and  
This proves the \numP-hardness of
$\Holant ( \, [1,a,b,c] \, | =_3)$.
%
%
% \noindent\textbf{Otherwise,} $\Delta=-a-b>0$. $y = \frac{-a-b + (1-c)}{2a} = \frac{-a-b + (b-a)}{2a}= -1$. The unary signature that can be interpolated on LHS is $[y,1] = [-1,1]$. We connect two copies of $[-1,1]$ to $[1,0,0,1]$ on RHS to get a unary $[1,1]$ on RHS. Then connect $[1,1]$ to $[1,a,b,1+a-b]$ on LHS and we will get a binary signature $[1+a,a+b,1+a]$ on LHS. Then the proof is same as the case $\Delta=b+a > 0$.
%
%So now we've proved Theorem \ref{g1works1}.
%
\end{proof}

\subsection{Dichotomy for $[1,a,b,0]$}   \label{large3}

\begin{theorem}   \label{1ab0}
The problem  $[1,a,b,0]$ 
for $a,b \in \mathbb{Q}$ is
 \numP-hard
 unless it is in the tractable cases of Theorem~\ref{thm:main} and thus in FP.
\end{theorem}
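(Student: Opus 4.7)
The plan is to organize the proof around the behavior of the straddled gadget $G_1 = \smm{1}{b}{a}{0}$, whose eigenvalues satisfy $\lambda^2 - \lambda - ab = 0$ when $c = 0$. Under this specialization the root-of-unity conditions in (\ref{c1eq}) collapse to $ab \in \{-1,-\tfrac{1}{2},-\tfrac{1}{3},-\tfrac{1}{4}\}$, so $G_1$ \emph{works} outside $ab = 0$ and these four rational values. A preliminary enumeration against Theorem~\ref{thm:main} shows that the only tractable signatures of the form $[1,a,b,0]$ are $[1,0,0,0]$ (degenerate), $[1,0,\pm 1,0]$ (affine, case (3)), and $[1,-\tfrac{2}{3},\tfrac{1}{3},0]$ (case (5) with $b = 3a$); every other $[1,a,b,0]$ has to be shown \numP-hard.

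When $ab \ne 0$ and $G_1$ works, I invoke Theorem~\ref{g1works1} directly. Its two tractable exceptions $[1,a,a,1]$ and $[1,a,-1-2a,2+3a]$ intersect the slice $c=0$ only at $[1,-\tfrac{2}{3},\tfrac{1}{3},0]$, which matches the case (5) tractable signature, so the theorem gives \numP-hardness everywhere else in this regime. For $ab = 0$ I split into $f = [1,a,0,0]$ and $f = [1,0,b,0]$. In the first subcase, $G_1$ has rank one but the secondary gadget $G_2 = \smm{1}{a^2}{a}{0}$ has determinant $-a^3 \ne 0$ and, by Proposition~\ref{c3}, fails the root-of-unity test over $\mathbb{Q}$ only at $a = -1$ (since $-\tfrac{1}{2},-\tfrac{1}{3},-\tfrac{1}{4}$ are not rational cubes); for $a \ne 0, -1$ the $G_2$-based interpolation of Lemma~\ref{3.2} together with Lemma~\ref{after-unary} delivers $\Delta_0,\Delta_1,\Delta_2$ on the RHS and closes the case via Corollary~\ref{2-3}. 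In the second subcase, with $b \ne 0, \pm 1$, both $G_1 = \smm{1}{b}{0}{0}$ and $G_2 = \smm{1}{0}{b^2}{0}$ are rank one, so Lemma~\ref{3.2} does not apply verbatim; the plan is to build a short alternating composition of $G_1$'s and $G_2$'s mediated by $(=_3)$, producing a straddled binary whose signature matrix is non-singular with non-root-of-unity eigenvalue ratio, and then to run the standard $\Delta_0,\Delta_1,\Delta_2$ argument on the resulting binaries $[1,0,b]$ and $[0,b,0]$.

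The technical heart of the proof is the exceptional family $ab \in \{-1,-\tfrac{1}{2},-\tfrac{1}{3},-\tfrac{1}{4}\}$ (together with the leftover point $[1,-1,0,0]$ flagged above, where $G_1$ is rank-one and the eigenvalue ratio of $G_2$ is a primitive cube root of unity). For each value of $ab$ I would switch to $G_2 = \smm{1+ab}{a^2+bc}{a+b^2}{ab+c^2}$, verify via Proposition~\ref{c3} that $G_2$ works outside a one-dimensional subvariety in $(a,b)$, and on the complement repeat the interpolation-plus-Corollary~\ref{2-3} strategy used in Theorem~\ref{g1works1}. For the finitely many rational pairs $(a,b)$ at which every polynomial-size gadget up to a bounded order is simultaneously rank-deficient, the fallback is either a direct reduction from a previously-proved hard signature via a holographic transformation by $H$ (for $[1,0,b,0]$ this sends the problem to $\plholant{[1+3b,1-b,1-b,1+3b]}{[1,0,1,0]}$, which is then attacked by attaching an interpolated unary and invoking Corollary~\ref{2-3}), or an explicit resultant computation in the spirit of Lemmas~\ref{1-1 X=1} and~\ref{1-2 X=1} that rules out matches with the tractable list. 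The main obstacle I expect is precisely the case $[1,-1,0,0]$, where one must manufacture a higher-order straddled gadget with a generic eigenvalue ratio before any interpolation machinery becomes available.
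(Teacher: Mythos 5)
Your high-level decomposition (split on whether $ab=0$ and whether $G_1$ works, identify the exceptional $ab\in\{-1,-\tfrac12,-\tfrac13,-\tfrac14\}$ family, isolate the tractable points) matches the paper's, and your enumeration of the tractable signatures inside the slice $c=0$ is correct. But the execution has two genuine gaps, one of which is fatal as stated.

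The subcase $f=[1,0,b,0]$ is wrong. You observe that both $G_1=\smm{1}{b}{0}{0}$ and $G_2=\smm{1}{0}{b^2}{0}$ are rank one, and propose to ``build a short alternating composition of $G_1$'s and $G_2$'s mediated by $(=_3)$'' whose signature matrix is non-singular. This cannot succeed: a matrix product of rank-one matrices has rank at most one, and neither $G_1$ nor $G_2$ supplies any rank-two building block, so no straddled binary you obtain this way will ever be interpolation-friendly in the sense of Lemma~\ref{3.2}. Moreover, the binaries you name as targets, $[1,0,b]$ and $[0,b,0]$, do not have the form $[a,1,b]$ required by Corollary~\ref{2-3} (their middle entry is $0$, so they cannot be normalized into that shape; in fact $[1,0,b]$ is a weighted equality and is tractable). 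The paper handles this case in a completely different way: it exploits the rank-one \emph{factorization} $G_1 = \smmv{1}{0}\cdot\smmh{1}{b}$, uses the P3EM argument to absorb the leftover $[1,0]$'s three at a time, and connects the resulting $[1,b]$ back to $[1,0,b,0]$ to produce the binary $[1,b^2,b]$, which \emph{does} fit Corollary~\ref{2-3}. Your proposal has no substitute for this step.

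The other two branches are closer to correct in spirit but still have issues. For $f=[1,a,0,0]$ you compute $G_2=\smm{1}{a^2}{a}{0}$ correctly and your root-of-unity analysis via Proposition~\ref{c3} (only $a=-1$ fails, since $a^3=-\tfrac12,-\tfrac13,-\tfrac14$ have no rational solutions) is right. However, you then cite Lemma~\ref{after-unary} to close the case, and that lemma explicitly requires $ab\neq0$, which fails here; you would need to argue directly that $\Delta_0$ on RHS yields $[1,a,0]$ and then invoke Corollary~\ref{2-3}. And for the exceptional family $ab\in\{-1,-\tfrac12,-\tfrac13,-\tfrac14\}$ and the isolated point $a=-1$ you explicitly leave the work undone (``the main obstacle I expect''). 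The paper resolves all of these uniformly by applying the \emph{ternary} gadget $G_3$ to produce a new ternary signature with all nonzero entries for which $G_1$ does work, and then invokes Theorem~\ref{g1works1} on that new signature; switching to $G_2$ alone, as you propose, does not obviously carry the needed non-linearity-gadget argument, and you do not show it does.
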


\begin{proof}
If $ab\neq 0$ and $G_1$ works, then this is proved in
Theorem~\ref{g1works1}.
 If $a=b=0$, it is degenerate and in FP. We divide the rest into three cases: \begin{enumerate}
    \item $ab\neq 0$ and $G_1$ does not work;
    \item $f = [1,a,0,0]$ with $a\neq 0$;
    \item $f= [1,0,b,0]$ with $b\neq 0$.
\end{enumerate}

\noindent 
$\bullet$ Case 1:  $ab\neq 0$ in $f = [1,a,b,0]$ and $G_1$ does not work. Since $c=0 \ne ab$, this implies that at least one equation in (\ref{c1eq})
holds. After a simple derivation, we  have the following family of signatures to consider:   $[1,a,-\frac{1}{ka}, 0]$, for $k = 1, 2, 3, 4$. %$[1,a,-\frac{1}{3a}, 0]$ and $[1,a,-\frac{1}{4a}, 0]$.

We  use $G_3$ to produce another symmetric ternary signature in each case. If the new signature is \numP-hard, then so is the given signature. We will describe the case $[1,a,-\frac{1}{a}, 0]$ in more detail; the other three types ($ k=2, 3, 4$) are similar.

For $k=1$, the gadget $G_3$ produces $g= [3 a^3+4, a^4-a-\frac{2}{a^2}, -a^2+\frac{1}{a}+\frac{1}{a^4}, a^3+3]$. %with $[1,a,-\frac{1}{a}, 0]$. 
For $a= -1$, this is $[1,0,-1,2]$, which has the form $[1,a',-1-2a', 2+3a']$ and is in FP. Below we assume $a \not = -1$. Then all entries of $g$ are nonzero.

We claim that the gadget $G_1$ works using $g$. Since $a\in \mathbb{Q}$, it can be checked that $g$ is non-degenerate since $(a^4-a-\frac{2}{a^2})( -a^2+\frac{1}{a}+\frac{1}{a^4}) = (3a^3 + 4)(a^3 + 3)$ has no solution, and that no equation in (\ref{c1eq}) has a solution applied to $g$. Hence, $G_1$ works using $g$ and we may apply Theorem \ref{g1works1} to $g$. Using the fact that $a\in\mathbb{Q}$, one can show that $g$ cannot be a Gen-Eq because it has no zero entry, nor can it be affine or degenerate. Also, it can be checked that there is no solution for $a$ if we were to impose the condition that $g$ is a matchgate, i.e. $(3 a^3+4 = a^3+3) \wedge (a^4-a-\frac{2}{a^2} = -a^2+\frac{1}{a}+\frac{1}{a^4})$, and also $ a = -1$ is the only solution for $g$ being in the form $[1,a',-1-2a',2+3a']$. Thus $[1,a,-\frac{1}{a}, 0]$ is \numP-hard.

\noindent 
$\bullet$ Case 2: $f= [1,a,0,0]$ with $a\neq 0$. 
The gadget $G_3$ produces $g'=[3a^3+1, a^4+a, a^2, a^3]$. Since $a\in \mathbb{Q}$, $3a^3+1\neq0$. 
If $a=-1$, $g'=[-2, 0,1,-1]$ and 
% MK13
it suffices to consider $[1,-1,0,2]$ (which is $-1$ times the reversal $[-1,1,0,-2]$ 
obtained by swapping and roles of 0 and 1), in which case $G_2$ works where the matrix $G_2=\left[\begin{smallmatrix}1 & 1 \\ -1 & 4\end{smallmatrix}\right]$. We can interpolate $[1,x] = [1, -\frac{3 + \sqrt{5}}{2}]$ on RHS. Connect it back to $[1,-1,0,2]$ and get a binary signature $[\frac{5+\sqrt{5}}{2}, -1, -(3+\sqrt{5})]$ on LHS, which, by Corollary \ref{2-3}, is \numP-hard. Thus, $[1,-1, 0,2]$ is \numP-hard and so is $[1,a,0,0]$.

Else, $a\neq -1$.
%, since also we assumed  $a\neq0$, 
%we have $a^4+a\neq0$, $a^2 \neq 0$. 
We claim that the gadget $G_1$ works using $g'$.
The signature $g'$ is non-degenerate 
since $a\in \mathbb{Q}$  is nonzero
and thus  $(a^4+a)a^2 \not =(3a^3+1)a^3$.
Also 
 no equation in (\ref{c1eq}) has a solution applied to $g'$. Hence, $G_1$ works using $g'$ and we may apply Theorem \ref{g1works1} to $g'$. Using the fact that $a\in\mathbb{Q}$, one can show that $g'$ cannot be a Gen-Eq because it has no zero entry, nor can it be affine or degenerate. Also, there is no solution for $g'$ being a matchgate or in the form $[1,a',-1-2a',2+3a']$. Thus $[1,a,0, 0]$ is \numP-hard.

%To conclude, $[1,a,0,0]$ with $a\neq 0$ is  \numP-hard.

\noindent 
$\bullet$ Case 3: 
 $f = [1,0,b,0]$ with $b\neq 0$. The gadget $G_1$ produces a binary straddled signature $G_1=\left[\begin{smallmatrix}1 & b \\ 0 & 0\end{smallmatrix}\right] = \left[\begin{smallmatrix} 1 \\ 0 \end{smallmatrix}\right]\cdot \left[\begin{smallmatrix}1 & b \end{smallmatrix}\right]$ 
which decomposes into 
a unary signature $[1,b]$ on RHS and 
a unary signature $[1, 0]$ on LHS.
This gives us a reduction
$\PlHolant ( [1,b^2,b] | ( =_3)) \le_T
\PlHolant ( \, f \, | =_3)$ by the P3EM argument. The problem 
$\PlHolant ( [1,b^2, b]  | =_3)$ is \numP-hard except $b =\pm 1$,
by Corollary \ref{2-3}, which implies that $\PlHolant ( \, f \, | =_3)$ is also \numP-hard
when $b \ne \pm 1$. If $b = \pm 1$,
then $f$ is affine, and $\PlHolant ( \, f \, | =_3)$ is in FP.
%
 %We can split it and get $[1,b]$ on RHS (since for each 3 redundant $[1,0]$ on LHS, connect them to $(=_3)$ on RHS and we get a global factor \textbf{1}). Next connect $[1,b]$ to 
 %$[1,0,b,0]$ and we get binary signature $[1,b^2, b]$ on LHS. By Theorem \ref{2-3}, unless $b=\pm 1$, it is \numP-hard and so is the given signature. When $b=\pm 1$, this is $[1,0,1,0]$ or $[1,0,-1,0]$, both of which are in the affine class and so are in FP.
%
%So $[1,0,b,0]$ is \numP-hard unless $b=\pm1$, which is in FP.
%
\end{proof}

\subsection{Dichotomy for $[1,a,0,c]$}  \label{large4}
\begin{theorem} \label{1a0c}
The problem $[1,a,0,c]$ with $a,c\in \mathbb{Q}$ is \numP-hard unless $a=0$, in which case it is Gen-Eq and thus in FP.
\end{theorem}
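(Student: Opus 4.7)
The plan splits on the value of $a$. When $a = 0$, the signature $f = [1,0,0,c]$ is a generalized equality, so the problem falls in case (2) of Theorem~\ref{thm:main} and is in $\operatorname{FP}$. Assume hereafter $a \ne 0$, where the goal is to establish \numP-hardness. The sub-case $c = 0$ is already resolved by Case~2 in the proof of Theorem~\ref{1ab0}, so we may further assume $c \ne 0$. A direct check shows that for $a \ne 0$ and $c \ne 0$ the signature $[1,a,0,c]$ lies in none of the tractable families of Theorem~\ref{thm:main}, so \numP-hardness is indeed the target.

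The main tool is the binary straddled gadget $G_1$ with matrix $\left[\begin{smallmatrix} 1 & 0 \\ a & c \end{smallmatrix}\right]$, which is non-singular with eigenvalues $1$ and $c$. Substituting $b = 0$ into the conditions (\ref{c1eq}) of Proposition~\ref{g1-rou}, only $c = 1$ and $c = -1$ produce a root-of-unity ratio for rational $c$. In the generic sub-case $c \ne 0, \pm 1$, $G_1$ works, and Lemma~\ref{3.1.2} (underwritten by the P3EM argument of Theorem~\ref{thm:P3EM}) lets me interpolate the unary $[1,x]$ on the RHS, where $x = \frac{\Delta - (1-c)}{2a}$ and $\Delta = |1-c|$. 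When $c < 1$ this gives $x = 0$, so connecting $[1,0]$ to $f$ pins one variable and produces the binary LHS signature $[1,a,0]$, which is \numP-hard by Corollary~\ref{2-3} whenever $a \ne 0$. When $c > 1$ instead $x = (c-1)/a$, and the resulting binary signature on LHS is (up to normalization) $[1, a/c, (c-1)/a]$, to which I apply Corollary~\ref{2-3} after a short parameter check ruling out its tractable branches over rational $(a,c)$ with $a \ne 0$ and $c > 1$.

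The main obstacle is the two special sub-cases $c = \pm 1$, where $G_1$ fails the root-of-unity test (indeed $\Delta = 0$ when $c = 1$, and $\lambda/\mu = -1$ when $c = -1$), so the above interpolation is unavailable. I would handle these analogously to Case~2 of the proof of Theorem~\ref{1ab0}: apply the ternary gadget $G_3$ to $f$ to manufacture a new symmetric ternary signature $g = g(a)$, then verify over $\mathbb{Q}$ that (i) the middle entry $g_2$ of $g$ is nonzero, (ii) the $G_1$-matrix associated to $g$ is non-singular and satisfies none of the equations in (\ref{c1eq}), and (iii) $g$ itself lies in no tractable family of Theorem~\ref{thm:main}. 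Granted these, Theorem~\ref{g1works1} applies to $g$ and transfers \numP-hardness back to $f$. The verification of (i)--(iii) reduces to a finite case analysis of polynomial equations over $\mathbb{Q}$ parallel to the exceptional-case bookkeeping already carried out in the proof of Theorem~\ref{1ab0}, which I expect to close without surprises.
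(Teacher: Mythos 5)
The case split and general strategy (interpolate a unary on the RHS via $G_1$, connect to $f$ to get a binary signature, apply Corollary~\ref{2-3}) match the paper's, and the $c=0$, $a=0$, and $c=\pm 1$ branches are handled essentially as the paper does. But there is a genuine gap in the sub-case $c>1$.

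For $c>1$ you propose to connect $[1,x]$ with $x=(c-1)/a$ directly to $f=[1,a,0,c]$, obtaining the binary signature $[c,\,a,\,c(c-1)/a]$, normalize to $[c/a,\,1,\,c(c-1)/a^2]$, and then claim a ``short parameter check'' rules out the tractable branches of Corollary~\ref{2-3}. That check does \emph{not} close: both branch~1 ($a'b'=1$, i.e.\ $c^2(c-1)=a^3$) and branch~4 ($a'=b'$, i.e.\ $a=c-1$) of Corollary~\ref{2-3} are realized by rational $(a,c)$ with $a\neq 0$ and $c>1$. For the first, parametrize $u=a/c\in(0,1)$, set $c=1/(1-u^3)$, $a=u/(1-u^3)$; taking $u=1/2$ gives $(a,c)=(4/7,\,8/7)$, and indeed $c^2(c-1)=a^3=64/343$. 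For the second, $(a,c)=(1,2)$ gives the binary signature $[2,1,2]$, which is symmetric and hence tractable. In both cases the reduction to a tractable binary problem yields no hardness information, while the theorem asserts $[1,4/7,0,8/7]$ and $[1,1,0,2]$ are \numP-hard. So your argument leaves infinite families of $(a,c)$ unresolved.

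The paper avoids this by \emph{not} settling for whatever $[1,x]$ happens to give. When $c>1$ it observes $y=0$, so $[y,1]=[0,1]$ is available on the LHS; feeding this into the non-linearity gadget of Figure~\ref{non_li} produces $[0,c]$ on the RHS, which is not proportional to a row eigenvector of $G_1$. Vadhan's interpolation lemma (Lemma~\ref{3.2}) then yields \emph{every} unary on the RHS, in particular $[1,0]$. Pinning $f$ with $[1,0]$ gives the binary $[1,a,0]$, which is \numP-hard by Corollary~\ref{2-3} for all $a\neq 0$ with no exceptional parameters. That extra step---promoting a single awkward unary to the full unary family via the non-linearity gadget and Vadhan's lemma---is the missing ingredient in your $c>1$ branch. (Two smaller notes: the paper's reference to ``$G_2$'' in the $c=\pm 1$ sub-case is a typo for the ternary gadget $G_3$, as you correctly inferred; and your check (i) should be phrased as ``$g$ has no zero entry'' rather than ``the middle entry is nonzero'', since the sub-case $a=-1$ does produce zeros in $g$ and the paper handles it by falling back to Theorem~\ref{1ab0}.)
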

\begin{proof}
When $a=0$, it is Gen-Eq and so is in FP. When $a\neq 0$, if $c=0$, it is \numP-hard by Theorem \ref{1ab0}. In the following we discuss $[1,a,0,c]$ with $ac\neq 0$.

 If $c=\pm1$, the signature is $[1,a,0,1]$ or $[1,a,0,-1]$. We use $G_2$ to produce a ternary signature $g=[3a^3+1, a^4+a, a^2, a^3 + 1]$ (both mapped to the same signature, surprisingly). If $a=-1$, it is $[1,0,-\frac{1}{2}, 0]$ after normalization, which by Theorem \ref{1ab0} is \numP-hard and so is the given signature $[1,-1,0,1]$. If $a\neq -1$, then $g$ has no zero entry. We then claim that the gadget $G_1$ works using $g$. It can be checked that $g$ is non-degenerate since $(a^4+a)a^2=(3a^3+1)(a^3+1)$ has no solution, and that no equation in (\ref{c1eq}) has a solution applied to $g$. Hence, $G_1$ works using $g$ and we may apply Theorem \ref{g1works1} to $g$. Using the fact that $a\in\mathbb{Q}$, one can show that $g$ cannot be a Gen-Eq because it has no zero entry, nor can it be affine or degenerate.  Also, it can be checked that the only solution for $a$ if we were to impose
 the condition that $g$ is a matchgate or in the form $[1,a',-1-2a',2+3a']$ for some $a'$ is $a=0$. Thus $[1,a,0,\pm 1]$ are both \numP-hard.

%%%%%%%%

Now assume $c \ne 0, \pm 1$. We claim that the gadget $G_1$ works. It can be checked that for the non-degenerate matrix $G_1=\smm{1}{0}{a}{c}$, $\Delta = |1-c|$, $\rlm \in \{c, \frac{1}{c}\}$  is not a root of unity. 
Next we claim that we can obtain $[1,0]$ on RHS.
If $c<1$ by Lemma~\ref{y11x}
we can interpolate $[1,x]=[1,0]$  on RHS 
%by the Remark on Definition \ref{d1} 
with two exceptions to which we already give a dichotomy (see
the Remark after Definition \ref{d1}). 
If $c>1$,  we can interpolate $[y,1] = [0,1]$ on LHS and so the gadget in Figure \ref{non_li} produces  $[0,c]$ on RHS, which is not
proportional to the 
 row eigenvectors $[1,-y] = [1,0]$ and 
 $[1,x] = [1,\frac{c-1}{a}]$ of  $G_1$. 
 By Lemma \ref{3.2}, we can interpolate any unary gadget on RHS, including $[1,0]$. Thus we can always get $[1,0]$ on RHS. Connect $[1,0]$ to $[1,a,0,c]$  and we will get a binary signature $[1,a,0]$ on LHS, which is \numP-hard by Corollary \ref{2-3}. Therefore $[1,a,0,c]$ is \numP-hard when $c\ne 0, \pm 1$. 
\end{proof}

\subsection{Dichotomy for $[1,a,b,c]$ when $abc\ne0$}

We need three lemmas to handle some special cases. Lemma~\ref{p1} is a part of Theorem~\ref{g1works1} 
(one verifies that $G_1$ works, in fact for $[1,-b^2, b, -b^3]$ the condition  (\ref{c1eq}) amounts to $b=1$, and the $b=0$ case is degenerate thus trivially in FP). For convenience,
we state it explicitly here.

\begin{lemma} \label{p1}
The problem $[1,-b^2, b, -b^3]$ with $b\in \mathbb{Q}$ is \numP-hard unless $b=0, \pm 1$, which is in FP.
\end{lemma}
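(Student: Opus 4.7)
The plan is to mirror the argument from Subcase 1 of Case 1 in the proof of Theorem~\ref{g1works1}, which already handles exactly the family $[1,-b^2,b,-b^3]$. First I would dispose of the three exceptional values. For $b=0$, the signature is $[1,0,0,0]=[1,0]^{\otimes 3}$, which is degenerate. For $b=1$, we have $[1,-1,1,-1]=[1,-1]^{\otimes 3}$, again degenerate. For $b=-1$, the signature is $[1,-1,-1,1]$, which lies in case (3) of Theorem~\ref{thm:main} (take $a=1$ in $[a,-a,-a,a]$) and equivalently in case (4). All three are tractable.

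Now suppose $b \ne 0, \pm 1$. For $f=[1,-b^2,b,-b^3]$ the signature matrix of the straddled binary gadget $G_1$ is $\smm{1}{b}{-b^2}{-b^3}$, which has determinant $0$ and factors as the rank-one outer product $\smmv{1}{-b^2}\cdot\smmh{1}{b}$. Thus $G_1$ is a degenerate binary straddled signature that is effectively the tensor product of the unary $[1,-b^2]$ on the LHS dangling edge with the unary $[1,b]$ on the RHS dangling edge. By the P3EM absorption argument from Lemma~\ref{3.1.2}, we may use $[1,-b^2]$ on the LHS provided triples of the leftover $[1,b]$'s on the RHS can be absorbed into a nonzero global factor. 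Connecting three copies of $[1,b]$ to $f$ yields $1+3b(-b^2)+3b^2(b)+b^3(-b^3)=1-b^6$, which is nonzero since $b \ne \pm 1$; hence the absorption succeeds.

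Having $[1,-b^2]$ available on the LHS, I would then connect two copies to $(=_3)$ on the RHS to obtain $[1,b^4]$ on the RHS, and then connect $[1,b^4]$ to one dangling edge of $f$ on the LHS to produce the binary LHS signature $g=[1-b^6,\,-b^2+b^5,\,b-b^7]$. Since $1-b^6 \ne 0$, I would normalize and invoke Corollary~\ref{2-3}. A direct check rules out all four tractable cases of that corollary for rational $b \ne 0,\pm 1$: the condition $ab=1$ translates to $b^6+b^3+1=0$, which has no rational root; the two cases $(a,b)=(1,-1),(-1,1)$ each force $b=\pm 1$; and $a=b$ forces $b=1$. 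Hence the normalized binary is \numP-hard, and by reduction so is $\PlHolant([1,-b^2,b,-b^3]\mid(=_3))$.

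The main technical obstacle is verifying that the final case analysis against Corollary~\ref{2-3} does not produce a genuinely new tractable rational value of $b$; the nontrivial step is observing that $b^6+b^3+1=0$ has no rational solution (since $b^3$ would then be a primitive cube root of unity), after which each remaining tractable case collapses to the already-excluded values $b=\pm 1$.
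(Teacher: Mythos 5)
Your proposal is correct and follows the same path the paper takes: it reproduces the inline argument in Subcase~1 of Case~1 of the proof of Theorem~\ref{g1works1}, which is where the paper actually handles $[1,-b^2,b,-b^3]$ (the standalone Lemma~\ref{p1} merely references that passage). Both arguments use the rank-one factorization $G_1 = \smmv{1}{-b^2}\cdot\smmh{1}{b}$, absorb leftover $[1,b]$'s three at a time against $f$ to get the nonzero factor $1-b^6$, build $[1,b^4]$ on the RHS from two copies of $[1,-b^2]$ through $(=_3)$, derive $g=[1-b^6,-b^2+b^5,b-b^7]$, and invoke Corollary~\ref{2-3}; the only minor imprecision in your write-up is that cases~2--4 of Corollary~\ref{2-3} actually force $b\in\{0,-1\}$ or $b\in\{\pm 1\}$, not just $b=\pm 1$, but all of these are already excluded so the conclusion is unaffected.
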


The next lemma is not part of Theorem~\ref{g1works1}  since the condition  (\ref{c1eq}) fails for 
$[1, a, -\frac{1}{a}, -1]$.

\begin{lemma} \label{p2}
The problem $[1, a, -\frac{1}{a}, -1]$ with $a\in \mathbb{Q}$, $a\neq 0$ is \numP-hard unless $a=\pm 1$, in which case it is in FP.
\end{lemma}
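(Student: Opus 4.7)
The plan is to first handle the tractable cases directly: for $a=1$, the signature $[1,1,-1,-1]$ has the form $[a',b',-b',-a']$ with $a'=b'=1$ and so lies in case (4) of Theorem~\ref{thm:main}; for $a=-1$, $[1,-1,1,-1] = [1,-1]^{\otimes 3}$ is degenerate (case (1)). Both are in FP.

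For the hardness when $a \in \mathbb{Q} \setminus \{0, \pm 1\}$, the point is that we cannot invoke Theorem~\ref{g1works1} directly: the matrix $G_1(f) = \smm{1}{-1/a}{a}{-1}$ is singular (since $c = ab = -1$), violating the non-degeneracy assumption, and in fact factors as $\smmv{1}{a}\cdot\smmh{1}{-1/a}$. The signature $G_2(f) = \smm{0}{a^2+1/a}{a+1/a^2}{0}$ also fails, since its eigenvalue ratio is $-1$, a root of unity. This places us in precisely the regime handled in the proofs of Theorems~\ref{1ab0} and~\ref{1a0c}: construct a new ternary signature $h$ from $f$ and $(=_3)$ via the ternary gadget $G_3$ of Figure~\ref{f4}, then apply Theorem~\ref{g1works1} to $h$ instead.

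Concretely, I would compute $h = [h_0,h_1,h_2,h_3]$ as the signature of $G_3$ with $f = [1,a,-1/a,-1]$ at the square vertices and $(=_3)$ at the circle vertices; the entries $h_i$ are explicit rational functions of $a$. I would then verify, as polynomial (in)equalities in $a$:
\begin{enumerate}
\item $G_1(h)$ is non-degenerate, i.e.\ $h_0 h_3 \ne h_1 h_2$.
\item None of the five conditions in (\ref{c1eq}) is satisfied by the entries of $h$.
\item $h$ does not fall into any tractable class of Theorem~\ref{thm:main}: it is not degenerate, not a generalized equality $[a',0,0,b']$, not affine, not of the matchgate form $[a',b',b',a']$ or $[a',b',-b',-a']$, and not of case-(5) form $[3x+y,-x-y,-x+y,3x-y]$.
\end{enumerate}
Each of these conditions is a polynomial equation in $a$, and checking them over $\mathbb{Q}$ reduces via the rational root theorem to a finite symbolic computation. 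Assuming all pass, Theorem~\ref{g1works1} yields the \numP-hardness of $h$, hence of $f$.

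The main obstacle is the possibility that $G_3$ produces a tractable $h$ for some exceptional rational $a \ne \pm 1$. In that event I would fall back on a secondary construction: either exploit the rank-one decomposition $G_1(f) = \smmv{1}{a} \cdot \smmh{1}{-1/a}$ via the P3EM argument of Lemma~\ref{3.1.2} (see also Remark~\ref{remark8}) to introduce further unaries and build a different ternary gadget, or compose $G_3$ once more with itself or with $G_2$ to obtain yet another ternary signature. In every subcase the verification again reduces to solving a short system of polynomial equations over $\mathbb{Q}$, of exactly the flavor already treated in the proofs of Lemmas~\ref{1-1 X=1} and~\ref{1-2 X=1} and Theorems~\ref{1ab0} and~\ref{1a0c}, and the only rational solutions should be $a = \pm 1$, completing the dichotomy.
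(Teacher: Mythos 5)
Your proposal follows the same route as the paper: apply the ternary gadget $G_3$ to $f=[1,a,-\tfrac{1}{a},-1]$ to produce a new ternary signature $g = [3a^3+\tfrac{1}{a^3}+4,\ a^4-\tfrac{1}{a^2},\ -a^2+\tfrac{1}{a^4},\ a^3+\tfrac{3}{a^3}+4]$, verify by symbolic computation that $G_1$ works on $g$ for all rational $a \neq \pm 1$, check that $g$ avoids every tractable class of Theorem~\ref{thm:main}, and then invoke Theorem~\ref{g1works1}. Your fallback to $G_2$ or to iterated $G_3$ turns out to be unnecessary, since the paper's Mathematica check shows a single application of $G_3$ already suffices (and your classification of $[1,1,-1,-1]$ as case (4) rather than affine case (3) is immaterial, both yielding FP).
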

\begin{proof}
If $a=\pm 1$, $[1,1,-1-1]$ is   affine and $[1,-1,1,-1]$ is degenerate, both of which are in FP. Now we assume $a\neq \pm 1$ (so the matrix $\smm{a^{-2}}{a}{1}{1}$ is invertible). We use the ternary gadget $G_3$ to get the signature $g = [3a^3+\frac{1}{a^3}+4, a^4-\frac{1}{a^2},-a^2+\frac{1}{a^4},a^3+\frac{3}{a^3}+4]$ on LHS. A direct computation (using Mathematica) shows that $G_1$ always works for $g$ unless $ a = -1$. Therefore, by Theorem~\ref{g1works1} we have $g$ is \numP-hard and then $[1,a,-\frac{1}{a},-1]$ is \numP-hard unless $g$ is in the tractable cases in Theorem~\ref{thm:main}. The only solution for $g$ being in the tractable cases in Theorem~\ref{thm:main} is $a = \pm1$. This completes our proof.
\end{proof}

\begin{lemma} \label{c=ab}
The problem $[1,a,b,ab]$ with $a,b\in \mathbb{Q}$ and $a,b\ne 0$ is \numP-hard unless it is degenerate or  affine, which is in FP.
\end{lemma}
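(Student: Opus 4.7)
The plan is to exploit the rank-one factorization $G_1 = \left[\begin{smallmatrix}1 & b \\ a & ab\end{smallmatrix}\right] = \left[\begin{smallmatrix}1 \\ a\end{smallmatrix}\right]\left[\begin{smallmatrix}1 & b\end{smallmatrix}\right]$: after rescaling, this is already the degenerate straddled signature $\left[\begin{smallmatrix}y & xy \\ 1 & x\end{smallmatrix}\right]$ of Lemma~\ref{3.1.2} with $y=1/a$ and $x=b$ (corresponding to the sign choice $\Delta = 1+ab$ in Lemma~\ref{y11x}), so the required degenerate straddled gadget is available without any interpolation step.

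First I would dispose of the boundary subcases that fall under earlier results: $b=a^2$ makes $f$ degenerate; $b=1$ is Lemma~\ref{lemma:[1,a,1,a]}; $ab=-1$ is Lemma~\ref{p2}; and $a+b^2=0$ is Lemma~\ref{p1}. In each case the tractable exceptions of the cited lemmas correspond exactly to the degenerate or affine entries in Theorem~\ref{thm:main}. In the remaining range, and assuming for the moment also that $a\neq -1$, apply Lemma~\ref{3.1.2}(1) with $x=b$ to obtain
\[
\plholant{[1+ab,\; a+b^2,\; b(1+ab)]}{(=_3)} \;\le_T\; \plholant{[1,a,b,ab]}{(=_3)}.
\]
The two excepted forms $[1,a,a,1]$ and $[1,a,-1-2a,2+3a]$ of Lemma~\ref{3.1.2} correspond to $(a,b)\in\{(1,1),(-1,-1),(-1,1)\}$, all already dispatched. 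Since $a+b^2\neq 0$, normalize the new binary to $[A,1,B]$ with $A=(1+ab)/(a+b^2)$ and $B=b(1+ab)/(a+b^2)$ and apply Corollary~\ref{2-3}. A short algebraic check reduces its five tractable conditions to $(b^3-1)(a^2-b)=0$ (from $AB=1$) and $(1+ab)(1-b)=0$ (from $A=B$), while the mixed-sign conditions $A=\pm 1,\,B=\mp 1$ force $(1+b)(1-b)(1+b+b^2)=0$ or $(a+b^2)(1+b)=0$; every rational solution lies in a boundary case already handled. Thus Corollary~\ref{2-3} yields \numP-hardness of the reduced problem and hence of $f$.

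The main subtlety is the leftover case $a=-1$, $b\notin\{0,\pm 1\}$, where $y=1/a=-1$ forces the absorption factor $g_1=y^3+1$ of Lemma~\ref{3.1.2} to vanish, and the alternative sign of $\Delta$ produces only the trivially tractable binary $[0,b-1,0]$. I would bypass this by switching to the straddled gadget $G_2$ of Figure~\ref{fig:g2}. A direct check via Proposition~\ref{c3} shows that for $f=[1,-1,b,-b]$ the matrix $G_2 = (1-b)\left[\begin{smallmatrix}1 & 1+b \\ -(1+b) & -b\end{smallmatrix}\right]$ is nondegenerate with eigenvalue ratio not a root of unity, except when $b$ lies in the finite rational set $\{0,\pm 1,-2,-\tfrac12\}$ (Niven's theorem pins down the roots-of-unity cases to $\cos(2\phi)\in\{0,\pm\tfrac12,\pm 1\}$). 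When $G_2$ works I would interpolate $[y',1]$ on the LHS, feed it into the non-linearity gadget of Figure~\ref{non_li} to produce an RHS unary, use Lemma~\ref{3.2} to interpolate the pinning signatures $\Delta_0,\Delta_1,\Delta_2$, and then invoke Lemma~\ref{after-unary}; since $[1,-1,b,-b]$ is neither degenerate nor affine for $b\neq\pm 1$, this gives \numP-hardness.

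The two remaining rational exceptions $b=-2$ and $b=-\tfrac12$ are equivalent under the $0\leftrightarrow 1$ reversal (which sends $[1,a,b,ab]$ to $[1,1/a,1/b,1/(ab)]$ after normalization), so it suffices to handle $f=[1,-1,-2,2]$ directly. I would compute the ternary produced by the $G_3$ gadget and verify that its $G_1$-matrix is nondegenerate and avoids all exceptional forms of Theorem~\ref{g1works1}, thereby delivering hardness for this last case. The hard part of the whole argument is indeed this final verification: the P3EM reduction collapses precisely because $a=-1$ makes the standard absorbers vanish, so one must pivot to a second-order gadget ($G_2$, and then $G_3$) and argue hardness by a route entirely independent of Lemma~\ref{3.1.2}.
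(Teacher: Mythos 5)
Your plan diverges from the paper's proof exactly at the subcase $a=-1$, and the route you propose there has a genuine gap. For $f=[1,-1,b,-b]$ the straddled matrix $G_2$ has $a'=a+b^2=b^2-1$ and $b'=a^2+bc=1-b^2$, so $a'+b'=0$ holds identically (in general, with $c=ab$, $a'+b'=(1+a)(a+b^2)$, so this degeneracy is forced by $a=-1$). It follows that $x'y'=b'/a'=-1$. Consequently the unary delivered by the non-linearity gadget of Figure~\ref{non_li} is
\[
s \;=\; [\,y'^2+y'b,\;\; y'a+c\,] \;=\; (y'+b)\,[\,y',\,-1\,]\;\propto\;[\,1,\,-1/y'\,]\;=\;[\,1,\,x'\,],
\]
a row eigenvector of $G_2$. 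So the hypothesis of Lemma~\ref{3.2} fails, $\{s\cdot G_2^{\,j}\}_{j\geq 0}$ collapses to scalar multiples of $s$, and the pinning signatures $\Delta_0,\Delta_1,\Delta_2$ cannot be interpolated this way for \emph{any} $b$ in the remaining range. (In the paper's own bookkeeping of this step, e.g.\ in Lemma~\ref{1-1 X=1}, the exception test is $ya+c=x(y^2+yb)$; with $xy=-1$ both sides equal $-y-b$, so the test is an identity.) Niven's theorem is never reached; the $a=-1$ branch of your argument does not go through as written.

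The idea you are missing is that you do not need to change straddled gadgets; you need to change the \emph{absorber}. The obstruction at $a=-1$ is that both absorption factors $g_1=y^3+1$ and $g_2=y^3+by^2+ay+c$ vanish at $y=1/a=-1$ with $c=ab$. The paper's proof gets around this by first building the derived ternary $h=G_3(f)$, normalized (for $a=-1$) to $[2-b^2,\ b^2-2b,\ 2b-1,\ 1-2b^2]$, and placing $h$ rather than $f$ at the square vertex of the second absorber (Figure~\ref{i2}). That gadget's value is $h_0+ah_2+a^2h_1+a^3h_3=2(1-b)^2\neq 0$ for $b\neq 1$, so the dangling $[1,a]$ unaries can still be absorbed three at a time, $[1,b]$ is released on the RHS, and the rest of the argument (connecting back to $f$ and applying Corollary~\ref{2-3}) proceeds exactly as in your $a\neq-1$ branch. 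The $a\neq-1$ part of your write-up, and your reduction of the Corollary~\ref{2-3} conditions to the boundary lemmas, are correct; but the $a=-1$ branch must be replaced by something like the derived-absorber trick above.
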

\begin{proof}
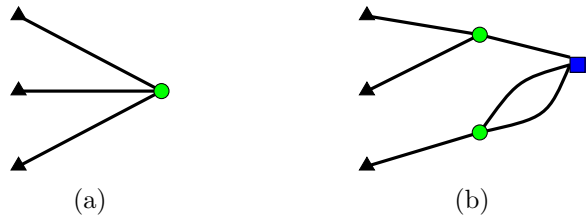
\begin{figure}[h!] 
    \centering
    \begin{subfigure}[b]{0.3\textwidth}
        \centering
        \begin{tikzpicture}  % i1
            \draw[very thick] (0.3,5.05) -- (2.2,4.05);
            \draw[very thick] (0.3,4.05) -- (2.2,4.05);
            \draw[very thick] (0.3,3.05) -- (2.2,4.05);
            \filldraw[fill=black] (0.3,5.17)--(0.2,5)--(0.4,5)--cycle;
            \filldraw[fill=black] (0.3,4.17)--(0.2,4)--(0.4,4)--cycle;
            \filldraw[fill=black] (0.3,3.17)--(0.2,3)--(0.4,3)--cycle;
            \filldraw[fill=green] (2.2, 4.05)circle(0.1);

        \end{tikzpicture}
        %\caption{$i1$}
                 \captionsetup{justification=centering}
  \caption{}
        \label{i1}
    \end{subfigure}
    \begin{subfigure}[b]{0.3\textwidth}
        \centering
        \begin{tikzpicture}  % i2
           \draw[very thick] (0.3,5.05)--(1.8,4.8);
        \draw[very thick] (0.3,4.05)--(1.8,4.8);
        \draw[very thick] (0.3,3.05)--(1.8,3.5);
        \draw[very thick] (1.8,4.8) -- (3,4.5);
        \draw[very thick] (1.8,3.5).. controls (2.3, 4.2) ..(3,4.4);
        \draw[very thick] (1.8,3.5) .. controls (2.7, 3.7)..(3,4.4);
        \filldraw[fill=black] (0.3,5.17)--(0.2,5)--(0.4,5)--cycle;
        \filldraw[fill=black] (0.3,4.17)--(0.2,4)--(0.4,4)--cycle;
        \filldraw[fill=black] (0.3,3.17)--(0.2,3)--(0.4,3)--cycle;
        \filldraw[fill=green] (1.8,4.8) circle (0.1);
        \filldraw[fill=green] (1.8,3.5) circle (0.1);
        \filldraw[fill=blue] (3,4.3)rectangle(3.2,4.5);

        \end{tikzpicture}
                \captionsetup{justification=centering}
  \caption{}
        %\caption{$i2$}
        \label{i2}
    \end{subfigure}
    \captionsetup{justification=centering}
    \caption{Two gadgets where each triangle represents the unary gadget $[1,a]$}
    \label{2g41a}
\end{figure}

If $a=-1$ and $b= \pm 1$
then  problem is in FP. Indeed,  $[1,-1,1,-1]$ is degenerate, and  $[1,-1,-1,1]$ can be transformed to matchgate;
both problems are  in FP. We therefore now assume it is not the case that both $a=-1$ and $b= \pm 1$.

We first use $G_3$ to construct a ternary signature $h = [-b^4+3b^2-2, b^4-2b^3-b^2+2b, 2b^3-b^2-2b+1, -2b^4+3b^2-1]$ on LHS, which can be normalized to $h = [2-b^2,b^2-2b,2b-1,1-2b^2]$ after dividing $(b+1)(b-1)$. 
Using gadget $G_1$, we have a degenerate matrix   $G_1 = \smm{1}{b}{a}{ab} =  \smmv{1}{a}\cdot \smmh{1}{b}$. We  get $[1,b]$ on RHS if $[1,a]$ can appropriately form some nonzero global factor.

Figure \ref{2g41a} indicates two different ways of ``absorbing'' $[1,a]$ on LHS. Importantly, we place 
$g$ instead of $[1,a,b,ab]$ on the square vertex.  Figure~\ref{i1} provides a factor $1+a^3$ which is nonzero if $a \neq -1$. When $a = -1$,  Figure~\ref{i2} provides a factor $2b^2-4b+2 = 2(b-1)^2$ which is nonzero unless $b=1$. Therefore, we can interpolate $[1,b]$ on RHS. Connect $[1,b]$ back to $[1,a,b,ab]$ on LHS and we get the binary signature $g=[1+ab, a+b^2, b+ab^2]$. If $a + b^2 = 0$, the given signature is $[1, -b^2, b, -b^3]$ which, according to Lemma \ref{p1}, is \numP-hard (the exceptions in Lemma \ref{p1} do not apply as $b \ne 0$ and if $b = \pm 1$ then $a =-b^2 = -1$ which is excluded.)
%as
%we assumed just now, it is not $[1,-1,\pm 1,\mp 1]$). 
Now we assume $a+b^2 \neq 0$. Normalize $g$ by dividing $a+b^2$, we have the binary signature $[\frac{1+ab}{a+b^2},1,\frac{b+ab^2)}{a+b^2}]$ on LHS. 
Applying Corollary \ref{2-3} to $g$, it is \numP-hard (and so is the given signature $[1,a,b,ab]$) unless \begin{enumerate}
    \item $(1+ab)(b+ab^2) = (a+b^2)^2$. This implies $\left(a^2-b \right)\left(b^3-1 \right)=0$. If $a^2-b=0$, the given signature is $[1,a,a^2,a^3]$ and  is degenerate. If $b^3-1=0$, since $b\in\mathbb{Q}$, we have $b=1$ and  $a
    \ne  -b^2 = -1$, and the given signature is $[1,a,1,a]$. This is resolved by Lemma~\ref{lemma:[1,a,1,a]}.
    \item $\frac{1+ab}{a+b^2} = 1$ and $\frac{b+ab^2}{a+b^2}=-1$. 
    Dividing the two expressions gives $b=-1$. This implies $a=0$ 
    and therefore violates our assumption that $a \neq 0$.
    %%% excellent!
    % To see this, note that b+ab^2 = b(1+ab).
    \item $\frac{1+ab}{a+b^2} = -1$ and $\frac{b+ab^2}{a+b^2}= 1$. No $(a,b)$ pair solves these two equations, under the assumption $a +b^2 \ne 0$.
    %%%1+ab = -(a+b^2) = -(b+ab^2)
    %% 1-b = ab(b-1)
    %a(1-b^2)+b(b-1)=0
    %b=1 or a+ab+b=0
    %b=1-> 1+a = -(1+a) so a=-1 excluded.
    % so a+ab+b=0
    % subs to 1) 1-a-b = -a-b^2
    % b^2 -b+1=0 not rational.
    %
%%%%%%%%
    %Hi professor, I just replied your e-mail about the line 990 quesiton, ok will look.
    \item $\frac{1+ab}{a+b^2} = \frac{b+ab^2}{a+b^2}$. We have either $b=1$ or $ab=-1$. If $b = 1$, then $[1,a,b,ab]$ becomes $[1,a,1,a]$ which is \numP-hard by Lemma~\ref{lemma:[1,a,1,a]} unless $a = 0$ (this violates our assumption) or $a= \pm 1$, in which cases the signature is affine and therefore the problem is in FP. If $ab=-1$, then  $[1,a,b,ab]$ becomes $[1,a,-\frac{1}{a},-1]$ which is \numP-hard by Lemma~\ref{p2} unless $a=\pm1$, in which cases the signature is affine and therefore the problem is in FP.
    \end{enumerate}
Note that since $a,b\ne 0$, $[1,a,b,ab]$ cannot be Gen-Eq. The lemma is  proved.
%the above sentence is also not necessary.
\end{proof}

Now we prove
%
% Till now, we already proved the dichotomy for $ab\neq 0$ and $G_1$ works, dichotomy for $[1,a,b,c]$ with $abc=0$ (problem $[1,0,b,c]$ with $bc\neq0$ can be taken care of by normalizing and flipping 0 and 1 and get the form $[1,a,0,c]$ with $ac\neq 0$). To problem $[1,a,b,c]$, the remaining part is only the case when $G_1$ does not work and $abc\neq 0$. We present the next main Theorem of our work:
\begin{theorem} \label{large5}
The problem $[1,a,b,c]$ with $a,b,c \in \mathbb{Q}$, $abc\ne0$, is \numP-hard unless it is in the tractable cases in Theorem~\ref{thm:main}.
\end{theorem}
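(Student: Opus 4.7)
My plan is to reduce all remaining signatures to ones for which the basic binary straddled gadget $G_1$ works, so that Theorem~\ref{g1works1} can be applied. Concretely, suppose $f = [1,a,b,c]$ with $abc \ne 0$ and $f$ is not already covered by Theorem~\ref{g1works1}; this means that $G_1 = \smm{1}{b}{a}{c}$ is either singular ($c = ab$) or satisfies at least one of the five equations in (\ref{c1eq}). The singular case is exactly Lemma~\ref{c=ab}, so it remains to treat the five cases in which the eigenvalue ratio $\lambda/\mu$ is a nontrivial root of unity of order at most $6$.

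For each of these five conditions I would parameterize $b$ as a rational function of $a$ and $c$ (using the condition to eliminate $b$), and then apply the ternary gadget $G_3$ to produce a new ternary signature $g = G_3(f)$ on LHS, following the template of the case analysis in the proofs of Theorem~\ref{1ab0} and Theorem~\ref{1a0c}. A direct algebraic computation (using Mathematica) should verify that $G_1$ applied to $g$ is non-degenerate and satisfies none of the equations in (\ref{c1eq}), so that $G_1$ works for $g$ and Theorem~\ref{g1works1} yields \numP-hardness of $g$, and hence of $f$. The remaining burden is then to enumerate the finitely many rational pairs $(a,c)$ for which the transformed $g$ itself happens to land in a tractable class of Theorem~\ref{thm:main} (degenerate, generalized equality, affine, matchgate-transformable, or the new family $[3x+y,-x-y,-x+y,3x-y]$), and to check in each of those exceptional spots that the original $f$ is also tractable.

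The hard part will be the bookkeeping: five conditions in (\ref{c1eq}), and within each roughly a half-dozen ways in which the transformed $g$ could degenerate into one of the tractable classes. Two families are already singled out as delicate corners where the $G_3$ argument breaks---the signatures $[1,-b^2,b,-b^3]$ and $[1,a,-\frac{1}{a},-1]$---and for those I would invoke Lemma~\ref{p1} and Lemma~\ref{p2} respectively. The boundary case $4ab + (c-1)^2 = 0$, where $G_1$ has a repeated eigenvalue and $\Delta = 0$, is the most technically awkward: it may require replacing $G_3$ by $G_2$ or iterating the gadget construction a second time to push $f$ off the coincidence locus. Rationality of $a,b,c$ will be the crucial saving grace throughout: many of the exceptional polynomial systems over $\mathbb{C}$ would have positive-dimensional solution sets, but intersecting with $\mathbb{Q}^3$ reduces them to short, verifiable lists of candidates, each of which can be checked by hand or by computer algebra to correspond either to an already-handled family or to one of the five explicit tractable cases in Theorem~\ref{thm:main}.
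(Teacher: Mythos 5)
Your high-level strategy matches the paper's: invoke Lemma~\ref{c=ab} for the singular case $c = ab$, otherwise apply $G_3$ to land in a regime where $G_1$ works so Theorem~\ref{g1works1} applies, keeping Lemmas~\ref{p1} and~\ref{p2} for the awkward families. But your execution plan has a structural gap. You expect a single application of $G_3$ to produce a ternary signature $g$ for which $G_1$ works, reserving iteration only for the $\Delta = 0$ boundary case $R_5: 4ab + (c-1)^2 = 0$. That is not how the computation goes. For every one of the five conditions $R_i$, applying $G_3$ once leaves a nontrivial failure set: either $G_1$ fails again on the output $[w,x,y,z]$ (one of the five root-of-unity conditions in (\ref{c1eq}) holds for the new signature, or its matrix is singular), or $w = 0$, in which case the output cannot even be normalized to the form $[1,\cdot,\cdot,\cdot]$ --- a possibility your proposal never addresses. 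The paper encodes exactly this as the disjunction $T$ in (\ref{j3}) and disposes of $R \wedge T$ by applying $G_3$ a \emph{second} time, solving the systems $R \wedge T \wedge U$ and $R \wedge T \wedge V$; it is only after two rounds that the rational solution sets collapse to a short explicit list.

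A second issue is that you conflate two logically independent failure modes: (a) the transformed $g$ lands in a tractable class, and (b) $G_1$ does not work on $g$. Case (a) is fine --- it immediately constrains $f$ --- but in case (b) you have learned nothing about $g$ whatsoever and must run another gadget. The paper keeps these cleanly separate (conditions $S$ and $U$ for tractability of the output, versus $T$ and $V$ for failure of $G_1$), while your plan fuses them into ``the finitely many pairs $(a,c)$ for which $g$ is tractable.'' Relatedly, the exceptional loci are not zero-dimensional even after eliminating $b$: solving $R \wedge S$ already produces the one-parameter family $a = -b$, $c = -1$. That family is matchgate-transformable, hence tractable, but this has to be checked, not derived from a presumed finiteness. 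Finally, your plan to ``parameterize $b$ as a rational function of $a$ and $c$'' does not even apply to $R_1: c = -1$, which imposes no relation between $a$ and $b$ at all.
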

\begin{proof}
 By Proposition~\ref{g1-rou}, Theorem \ref{g1works1} and Lemma \ref{c=ab} it suffices to consider the case when the ratio of two eigenvalues in $G_1=\smm{1}{b}{a}{c}$ is a root of unity and $c \neq ab$. If the ratio of eigenvalues of $G_1$ is a root of unity, we know at least one condition in (\ref{c1eq}) holds. For  convenience, we list the conditions in (\ref{c1eq}) here and label them as $R_i$ where $i=1,2,3,4,5$:
\begin{equation} \label{j1}
    R = \bigvee_{i=1}^{5} R_i,
    %R_1\vee R_2 \vee R_3 \vee R_4 \vee R_5 
    \ \ \text{where} \begin{cases}
     R_1: c=-1\\
     R_2: ab + c^2+c+1=0\\
     R_3: 2ab+c^2+1=0\\
     R_4: 3ab+c^2-c+1=0\\
     R_5: 4ab+c^2-2c+1=0\\
    \end{cases}
\end{equation}

% If $G_1$ \emph{works} then the theorem follows from 
% Theorem \ref{g1works1}.  When $c=ab$, it follows from Lemma \ref{c=ab}.
%$[1,a,b,c]$ for $a,b,c\ne 0$ has a dichotomy when the gadget $G_1$ works by Theorem \ref{g1works1}. 
% Recall that $G_1$ does not work iff either the matrix $G_1=\smm{1}{b}{a}{c}$ is degenerate (i.e., $c=ab$) or the ratio of its two eigenvalues is a root of unity (i.e., any one condition in (\ref{c1eq})). So below we assume that $c\ne ab$ and at least one condition in (\ref{c1eq}) holds. 

We apply $G_{3}$ on $[1,a,b,c]$, i.e.\ placing squares to be $[1,a,b,c]$ and circles to be $=_3$, to produce a ternary signature $[w,x,y,z]=[1+3a^3+3a^2b^2+b^3c, a+a^4+2a^2b+a^2bc+2ab^3+b^2c^2, a^2+ab^2+2a^3b+b^4+2ab^2c+bc^3, a^3+3a^2b^2+3b^3c+c^4]$. If $w \neq 0$ and $G_1$ works on $[w,x,y,z]$, by Theorem \ref{g1works1} we have $[w,x,y,z]$ is \numP-hard and thus $[1,a,b,c]$ is \numP-hard unless at least one condition $S_i$ listed below holds, where $i=1,2,3,4,5,6,7$:
\begin{equation} \label{j2}
    S = \bigvee_{i=1}^{7} S_i
    %S_1 \vee S_2 \vee S_3 \vee S_4 \vee S_5 \vee S_6 
    \ \ \text{where} \begin{cases}
    S_1 : x^2=wy \wedge y^2=xz\ (\text{degenerate form})\\
    S_2 : x=0 \wedge y=0\ (\text{Gen-Eq form})\\
    S_3 : w=y \wedge x=0 \wedge z=0\ (\text{affine form $[1,0,1,0]$}) \\
    S_4 : w+y=0 \wedge x = 0 \wedge z=0\ (\text{affine form $[1,0,-1,0]$})\\
    S_5: w = z \wedge x = y\ (\text{matchgate-realizable form $[a',b',b',a']$}) \\
    S_6: w = -z \wedge x = -y\ (\text{matchgate-realizable form $[a',b',-b',-a']$}) \\
    S_7: -w-2x = y \wedge 2w + 3x = z\ (\text{form $[3a'+b',-a'-b',-a'+b',3a'-b']$})
    \end{cases}
\end{equation} 
\noindent Note that the affine forms $[1,1,-1,-1]$ and $[1,-1,-1,1]$ are special forms of  $[a,b,b,a]$ and  $[a,b,-b,-a]$. Solve the equation system $R \wedge S$ for variables $a,b,c \in \mathbb{Q}$, we have the following solutions:
\begin{itemize}
    \item $a=c=-1, b=1$; the problem $[1,-1,1,-1]$ is in FP since it is degenerate;
    \item $a=1, b=c=-1$; the  problem $[1,1,-1,-1]$ is in FP since it is affine;
    % \item $a=-1, b=c=1$; the problem $[1,-1,1,1]$ is \numP-hard by Lemma \ref{after-unary};
    \item $a=c=1, b=-1$; the problem $[1,1,-1,1]$ is \numP-hard (use the gadget $G_3$ to produce $[1,1,-1,3]$ after flipping 0's and 1's, then use it again to produce $[1,1,-5,19]$ which is \numP-hard by Theorem~\ref{g1works1}. Note that we need to apply
    $G_3$ twice in order that the condition that
    $G_1$ works
    in 
    Theorem~\ref{g1works1} is satisfied for the newly created ternary signature);
    % Indeed, if we apply G_3 once, then we have [1,1,-1,3], which will satisfy 4ab+c^2-2c+1 in (4.2), which means G_1 does not work in this case.
    \item $a = -b, c= -1$; the problem $[1,a,-a,-1]$ is matchgate-transformable and thus in FP.
\end{itemize}

Continuing the discussion for the ternary signature $[w,x,y,z]$, it remains to consider the case when $w=0$ or $G_1$ does not work on $[w,x,y,z]$. For $ w \ne 0$ we  normalize $[w,x,y,z]$ to be $[1,\frac{x}{w}, \frac{y}{w}, \frac{z}{w}]$ and substituting $\frac{x}{w}, \frac{y}{w}, \frac{z}{w}$ into $a,b,c$ respectively in (\ref{c1eq}), we get at least one condition $T_i$ listed below, where $i=1,2,3,4,5,6$:
\begin{equation} \label{j3}
    T = \bigvee_{i=1}^{6} T_i,
    %T_1 \vee T_2 \vee T_3 \vee T_4 \vee T_5 \vee T_6 
    \ \ \text{where} \begin{cases}
    T_1 : zw+w^2=0\\
    T_2 : xy+z^2+zw+ w^2=0\\
    T_3 : 2xy+z^2+w^2=0\\
    T_4 : 3xy+z^2-zw+w^2=0\\
    T_5 : 4xy+z^2-2zw+w^2=0\\
    T_6 : xy=wz\\
    \end{cases}
\end{equation}

Note that $T_1$ incorporates the case when $w=0$. So we have the condition $R \wedge T$.
We now apply  $G_{3}$ once
again using $[w,x,y,z]$ to produce another new ternary signature $[w_2,x_2,y_2,z_2]$ where $w_2 = w^4 + 3wx^3 + 3x^2y^2+y^3z$, $x_2=w^3x+2wx^2y+x^4+2xy^3+x^2yz+y^2z^2$, $y_2= w^2x^2+wxy^2+2x^3y+y^4+2xy^2z+yz^3$, $z_2=wx^3+3x^2y^2+3y^3z+z^4$. Similarly as the previous argument, if $w_2 \neq 0$ and $G_1$ works on $[w_2,x_2,y_2,z_2]$, we know $[w_2,x_2,y_2,z_2]$ is \numP-hard and thus $[1,a,b,c]$ is \numP-hard unless at least one condition $U_i$  listed below holds, where $i=1,2,3,4,5,6,7$:
\begin{equation} \label{j4}
    U = \bigvee_{i=1}^{7} U_i,
    %U_1 \vee U_2 \vee U_3 \vee U_4 \vee U_5 \vee U_6 
    \ \ \text{where} \begin{cases}
    U_1 : x_2^2=w_2y_2 \wedge y_2^2=x_2z_2\ (\text{degenerate form})\\
    U_2 : x_2=0 \wedge y_2=0\ (\text{Gen-Eq form})\\
    U_3 : w_2=y_2 \wedge x_2=0 \wedge z_2=0\ (\text{affine form $[1,0,1,0]$}) \\
    U_4 : w_2+y_2=0 \wedge x_2= 0 \wedge z_2=0\ (\text{affine form $[1,0,-1,0]$})\\
    S_5: w_2 = z_2 \wedge x_2 = y_2\ (\text{matchgate-realizable form $[a',b',b',a']$}) \\
    S_6: w_2 = -z_2 \wedge x_2 = -y_2\ (\text{matchgate-realizable form $[a',b',-b',-a']$}) \\
    S_7: -w_2-2x_2 = y_2 \wedge 2w_2 + 3x_2 = z_2\ (\text{form $[3a'+b',-a'-b',-a'+b',3a'-b']$})
    \end{cases}
\end{equation}

Solve the equation system $R \wedge T \wedge U$ for rational-valued variables $a,b,c$, we have the following solutions:
\begin{itemize}
    \item $a=c=-1, b=1$; the problem $[1,-1,1,-1]$ is in FP since it is degenerate;
    \item $a=1, b=c=-1$; the problem $[1,1,-1,-1]$ is in FP since it is affine;
    \item $a=-1, b=c=1$; the problem $[1,-1,1,1]$ is \numP-hard (use the gadget $G_4$ to produce $[1,1,-1,3]$, use it again to produce $[1,1,-5,19]$ which is \numP-hard by Theorem~\ref{g1works1});
    \item $a=c=1, b=-1$; the problem $[1,1,-1,1]$ is \numP-hard
    % MK14
    (this is the reversal of $[1,-1, 1, 1]$);
    % \item $a=\frac{1}{2}, b = -\frac{1}{2}, c=-1$; the problem $[1,\frac{1}{2},-\frac{1}{2},-1]$ is matchgate-transformable and thus in FP.
\end{itemize}

Otherwise, we know $w_2 = 0$ or $G_1$ does not work on $[w_2,x_2,y_2,z_2]$. Similarly, we know at least one condition $V_i$  listed below holds, where $i=1,2,3,4,5, 6$:
\begin{equation}\label{j5}
    V =    \bigvee_{i=1}^{6} V_i,
    %V_1 \vee V_2 \vee V_3 \vee V_4 \vee V_5 \vee V_6 
    \ \ \text{where} \begin{cases}
    V_1 : z_2w_2+w_2^2=0\\
    V_2 : x_2y_2+z_2^2+z_2w_2+ w_2^2=0\\
    V_3 : 2x_2y_2+z_2^2+w_2^2=0\\
    V_4 : 3x_2y_2+z_2^2-z_2w_2+w_2^2=0\\
    V_5 : 4x_2y_2+z_2^2-2z_2w_2+w_2^2=0\\
    V_6 : x_2y_2=w_2z_2\\
    \end{cases}
\end{equation}

Finally, solve the equation system $R\wedge T\wedge V$ for variables $a,b,c \in \mathbb{Q}$, we have the following solutions:
\begin{itemize}
    % \item $a=c=-1, b=1$; the problem $[1,-1,1,-1]$ is in FP since it is degenerate;
    % \item $a=1, b=c=-1$; the problem $[1,1,-1,-1]$ is in FP since it is affine;
    \item $a=-1, b=c=1$; the problem $[1,-1,1,1]$ is \numP-hard (see the case above for $R \wedge T \wedge U$);
    \item $a=c=1, b=-1$; the problem $[1,1,-1,1]$ is \numP-hard 
    % MK15
    (this is the reversal of $[1,-1,1,1]$);
    \item $a=-b, c=-1$; the problem $[1,a,-a,-1]$ is matchgate-transformable and thus in FP.
\end{itemize}

The proof of Theorem~\ref{large5} is now complete.
\end{proof}

\section{Dichotomy for $[0,a,b,0]$} 
We now finish the discussion for $[0,a,b,0]$ with the help of previous theorems on $[1,a,b,c]$. 
\begin{theorem} \label{0ab0}
The problem $[0,a,b,0]$ with $a,b \in \mathbb{Q}, ab\neq 0$ is \numP-hard unless $a=\pm b$.
\end{theorem}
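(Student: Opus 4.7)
The plan is to bypass the standard interpolation machinery (both $G_1$ and $G_2$ fail here: the $G_1$ matrix is $\smm{0}{b}{a}{0}$, whose eigenvalues $\pm\sqrt{ab}$ have ratio $-1$, a root of unity; and $G_2$ evaluates identically to zero since its formula $G_2(i,j) = f(i,j,0)\,f_0 + f(i,j,1)\,f_3$ uses only the vanishing entries $f_0 = f_3 = 0$). Instead I would apply the ternary gadget $G_3$ of Figure~\ref{f4} with $f = [0,a,b,0]$ on the square vertices and $(=_3)$ on the circle vertices, producing a ternary LHS signature of the form $[1,a',b',1]$ after normalization, to which the dichotomy of Theorem~\ref{large5} applies directly.

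Unpacking the $(=_3)$ constraints at the three internal circles of $G_3$ leaves three free Boolean variables $t,\ell,r$ (one per circle), giving
$$G_3(x,y,z) \;=\; \sum_{t,\ell,r \in \{0,1\}} f(x,t,\ell)\,f(y,t,r)\,f(z,\ell,r)\,f(t,\ell,r).$$
Under $f_0 = f_3 = 0$, every term in which one of the four $f$-arguments has weight $0$ or $3$ vanishes, and a short enumeration gives
$$G_3\bigl([0,a,b,0]\bigr) \;=\; \bigl[\,3a^2b^2,\; a(a^3 + 2b^3),\; b(2a^3 + b^3),\; 3a^2 b^2\,\bigr].$$
Because $a,b \in \mathbb{Q}$ with $ab \neq 0$ and $\sqrt[3]{2}$ is irrational, neither $a^3 + 2b^3$ nor $2a^3 + b^3$ can vanish, and $3a^2b^2 \neq 0$; after dividing by $3a^2 b^2$ we obtain $[1,a',b',1]$ with nonzero rational entries $a' = \tfrac{a^3 + 2b^3}{3ab^2}$ and $b' = \tfrac{2a^3 + b^3}{3a^2 b}$, yielding a reduction $\plholant{[1,a',b',1]}{(=_3)} \leq_T \plholant{[0,a,b,0]}{(=_3)}$.

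By Theorem~\ref{large5}, $\plholant{[1,a',b',1]}{(=_3)}$ is \numP-hard unless the signature falls into a tractable case of Theorem~\ref{thm:main}. The conditions $f_0 = f_3 = 1$ and $a',b' \neq 0$ rule out case (2) and all of case (3) except $[a,-a,-a,a]$, and case (5) with $f_0 = f_3$ forces $b = 0$ in the $(3a+b,\ldots,3a-b)$ parametrization and so collapses into case (4). The three surviving tractable sub-cases (degenerate $[1,1,1,1]$, affine $[1,-1,-1,1]$, and matchgate $[1,a',a',1]$) are all captured by the single equation $a' = b'$. Clearing denominators, $a' = b'$ becomes $a^2 b(a^3 + 2b^3) = a b^2(2a^3 + b^3)$; dividing by $ab$ and rearranging gives
$$a^4 - 2a^3 b + 2ab^3 - b^4 \;=\; (a-b)^3(a+b) \;=\; 0,$$
so $a' = b'$ if and only if $a = \pm b$.

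Consequently, whenever $a \neq \pm b$ the gadget reduction combined with Theorem~\ref{large5} yields \numP-hardness of $[0,a,b,0]$. The two excluded cases are genuinely tractable: $a = b$ gives $[0,a,a,0] = [a_0,b_0,b_0,a_0]$ with $(a_0,b_0) = (0,a)$, and $a = -b$ gives $[0,a,-a,0] = [a_0,b_0,-b_0,-a_0]$ with $(a_0,b_0) = (0,a)$, both instances of case (4) and hence matchgate-transformable. The main obstacle is the failure of the usual $G_1$ and $G_2$ machinery on signatures with $f_0 = f_3 = 0$; the key insight that makes the argument work is that $G_3$ computes a four-fold product of $f$-values which can be nonzero even when $f_0 = f_3 = 0$, making it the correct gadget for this regime and allowing the rest of the proof to reduce to a routine algebraic check.
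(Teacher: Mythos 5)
Your proposal follows the same route as the paper's proof: apply the ternary gadget $G_3$ to $[0,a,b,0]$ to obtain $[3a^2b^2,\, a(a^3+2b^3),\, b(2a^3+b^3),\, 3a^2b^2]$, normalize, and invoke Theorem~\ref{large5}, concluding that the only tractable case collapses to $a'=b'$, equivalently $(a-b)^3(a+b)=0$. You carry out the case analysis of the tractable forms and the factorization explicitly, which the paper leaves implicit. One small correction to your introductory remark: the formula for $G_2$ should be $G_2(i,j) = f(i,j,0)\,f_j + f(i,j,1)\,f_{j+2}$ (the second factor comes from the $f$ on $s_2$ whose three incident edges take values $j,k,k$, not $0,0,0$ or $1,1,1$), so for $f=[0,a,b,0]$ one gets $G_2 = \left[\begin{smallmatrix} ab & a^2 \\ b^2 & ab\end{smallmatrix}\right]$, which is rank one (degenerate) rather than identically zero. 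This does not affect your actual argument, which never uses $G_2$.
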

\begin{proof}
 We apply the gadget $G_3$ on $[0,a,b,0]$ to produce the ternary signature $g=[3a^2b^2, a(a^3+2b^3), b(2a^3+b^3), 3a^2b^2]$. 
 
We can normalize $g$  to be the form $[1, a', b', c']$. Since $a, b, c \in \mathbb{Q}$, we have
 $a'b'c' \ne 0$. By Theorem \ref{large5}, we know $[1,a',b',c']$ is \numP-hard (and so is $[0,a,b,0]$) unless it is in the tractable cases in Theorem~\ref{thm:main}. However, the problem $[1,a',b',c']$ in FP implies $a = \pm b$. This finishes the proof.
 %%% changed \pm a to \pm b
 \end{proof}

If now $a=b=0$, then the Holant value is 0 and the problem is trivially in $\operatorname{FP}$. Suppose exactly one of $a$ and $b$ is 0. In this case, by normalizing and possibly flipping 0 and 1 in the input, it suffices to consider the ternary signature $[0,1,0,0]$. %This is Theorem~\ref{thm:[0,1,0,0]}.

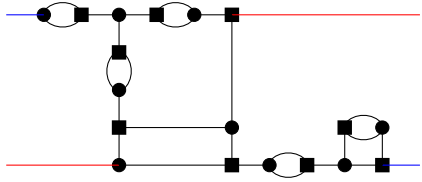
\begin{figure}[ht]
  \centering
  \begin{tikzpicture}[scale=0.5]
  \filldraw (0,0) circle (5pt);
  %\filldraw (0,1) circle (5pt);
  \draw[black, very thick, fill=black] (-0.15,0.85) rectangle (0.15,1.15);
  \filldraw (0,2) circle (5pt);
  %\filldraw (0,3) circle (5pt);
  \draw[black, very thick, fill=black] (-0.15,2.85) rectangle (0.15,3.15);
  \filldraw (0,4) circle (5pt);
  %\filldraw (0,5) circle (5pt);
  %\draw[black, very thick, fill=black] (-0.15,4.85) rectangle (0.15,5.15);
  \draw[black, very thick, fill=black] (-1.15,3.85) rectangle (-0.85,4.15);
  %\filldraw (-1,5) circle (5pt);
  \filldraw (-2,4) circle (5pt);
  %\filldraw (3,0) circle (5pt);
  \draw[black, very thick, fill=black] (2.85,-0.15) rectangle (3.15,0.15);
  %\filldraw (1,4) circle (5pt);
  \draw[black, very thick, fill=black] (0.85,3.85) rectangle (1.15,4.15);
  \filldraw (2,4) circle (5pt);
  %\filldraw (3,4) circle (5pt);
  \draw[black, very thick, fill=black] (2.85,3.85) rectangle (3.15,4.15);
  \filldraw (3,1) circle (5pt);
  \filldraw (4,0) circle (5pt);
  %\filldraw (5,0) circle (5pt);
  \draw[black, very thick, fill=black] (4.85,-0.15) rectangle (5.15,0.15);
  \filldraw (6,0) circle (5pt);
  %\filldraw (7,0) circle (5pt);
  \draw[black, very thick, fill=black] (6.85,-0.15) rectangle (7.15,0.15);
  %\filldraw (6,1) circle (5pt);
  \draw[black, very thick, fill=black] (5.85, 0.85) rectangle (6.15,1.15);
  \filldraw (7,1) circle (5pt);
  \draw[red] (-3,0) -- (0,0);
  \draw[red] (3,4) -- (8,4);
  %\draw[blue] (-2,5) -- (-1,5);
  \draw[blue] (-2,4) -- (-3,4);
  \draw[blue] (7,0) -- (8,0);
  \draw (0,0) -- (4,0);
  \draw (0,0) -- (0,2);
  \draw (0,3) -- (0,4);
  \draw (-1,4) -- (1,4);
  \draw (2,4) -- (3,4);
  \draw (0,1) -- (3,1);
  \draw (3,0) -- (3,4);
  \draw (5,0) -- (7,0);
  \draw (6,0) -- (6,1);
  \draw (7,0) -- (7,1);
  %\draw (-0.1,2) -- (-0.1,3);
  \draw (-0.1,2) .. controls (-0.4,2.25) and (-0.4,2.75) .. (-0.1,3);
  %\draw (0.1,2) -- (0.1,3);
  \draw (0.1,2) .. controls (0.4,2.25) and (0.4,2.75) .. (0.1,3);
  %\draw (-1, 4.9) -- (0,4.9);
  %\draw (-1,5.1) -- (0,5.1);
  %\draw (1,3.9) -- (2,3.9);
  \draw (1,3.9) .. controls (1.25,3.6) and (1.75,3.6) .. (2,3.9);
  %\draw (1,4.1) -- (2,4.1);
  \draw (1,4.1) .. controls (1.25,4.4) and (1.75,4.4) .. (2,4.1);
  %\draw (4,-0.1) -- (5,-0.1);
  \draw (4,-0.1) .. controls (4.25,-0.4) and (4.75,-0.4) .. (5,-0.1);
  %\draw (4,0.1) -- (5,0.1);
  \draw (4,0.1) .. controls (4.25,0.4) and (4.75,0.4) .. (5,0.1);
  %\draw (6,0.9) -- (7,0.9);
  \draw (6,0.9) .. controls (6.25,0.6) and (6.75,0.6) .. (7,0.9);
  %\draw (6,1.1) -- (7,1.1);
  \draw (6,1.1) .. controls (6.25,1.4) and (6.75,1.4) .. (7,1.1);
  \draw (-1,4.1) .. controls (-1.25,4.4) and (-1.75,4.4) .. (-2,4.1);
  \draw (-1,3.9) .. controls (-1.25,3.6) and (-1.75,3.6) .. (-2,3.9);
  \end{tikzpicture}
  \captionsetup{justification=centering}
  \caption{The cross-over-pinned-0 gadget $\mathcal{P}$}
  \label{a nice gadget}
\end{figure}
\begin{theorem}\label{thm:[0,1,0,0]}
The problem $\plholant{[0,1,0,0]}{(=_3)}$ is \numP-complete.
\end{theorem}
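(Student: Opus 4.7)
My plan is to reduce from the non-planar version $\holant{[0,1,0,0]}{(=_3)}$, which is already known to be \#P-hard by \cite{CaiFL21}. Membership in \#P is immediate since the Holant value counts (a subset of) 0/1 edge-assignments to a 3-regular bipartite graph; the content is the \#P-hardness. The standard route via Lemma~\ref{lemma:cross-over} would require interpolating the full cross-over signature $\mathcal{C}$ from gadgets built out of $[0,1,0,0]$ and $(=_3)$, but $[0,1,0,0]$ is so restrictive that $\mathcal{C}$ cannot be realized directly. Instead I would use the weaker ``cross-over-pinned-0'' gadget $\mathcal{P}$ of Figure~\ref{a nice gadget}, which simulates a crossing provided that the wires passing through it satisfy additional zero-forcing constraints that are already guaranteed by the surrounding Exact-One/Equality structure.

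The first step is to compute the signature of $\mathcal{P}$ as a straddled 4-ary signature, with the two red external edges designated to be connected to LHS ($[0,1,0,0]$) vertices and the two blue external edges designated to be connected to RHS ($(=_3)$) vertices, or vice versa, consistent with bipartiteness. I would verify that on the $2^4=16$ external input patterns the nonzero values of $\mathcal{P}$ occur only for a restricted family of patterns in which each ``color pair'' carries equal values, and moreover on each such pattern the value equals the value of $\mathcal{C}$ up to a fixed nonzero scalar. The key fact enabling this is that in any satisfying edge-assignment of an $\holant{[0,1,0,0]}{(=_3)}$ instance, each LHS vertex has exactly one incident edge carrying a 1 (Exact-One) and each RHS vertex carries all three incident edges equally (Equality). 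Hence a ``crossing'' between two edges in a drawing of the original graph involves edges whose allowed global values are tightly coupled, and $\mathcal{P}$ is built precisely to let \emph{one} crossing wire carry an arbitrary bit while \emph{pinning} the companion direction to 0 whenever the transversal wire is 0, and forcing an Exact-One–compatible flow through the gadget when it is 1.

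Next I would carry out the reduction. Given any input signature grid $\Omega$ for the non-planar problem $\holant{[0,1,0,0]}{(=_3)}$, place it on the plane with at most two edges meeting at any non-vertex crossing. Replace each crossing by a copy of $\mathcal{P}$, choosing the rotation and the LHS/RHS designation of its four dangling edges so as to respect the bipartite structure of $\Omega$ at that crossing (this is always possible by the same local analysis as in the proof of Lemma~\ref{lemma:cross-over}). By the signature computation of the previous step, the resulting planar signature grid $\Omega'$ satisfies $\operatorname{Pl-Holant}(\Omega')=\alpha^{c}\cdot\operatorname{Holant}(\Omega)$, where $c$ is the number of crossings and $\alpha\ne 0$ is the common nonzero value of $\mathcal{P}$ on the cross-over patterns, because every nonzero term in the Holant sum of $\Omega$ corresponds bijectively to a term in the Holant sum of $\Omega'$ with the same weight (up to $\alpha^c$), and all terms of $\Omega'$ whose internal assignment at some copy of $\mathcal{P}$ falls outside the crossover patterns are killed by the pinning-0 structure of $\mathcal{P}$ combined with the Exact-One constraint at the neighboring LHS vertices.

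The main obstacle, and the step I expect to require the most care, is the signature computation for $\mathcal{P}$ together with the verification that the unwanted patterns (those where the two red edges carry unequal values, or the two blue edges do, in the putative cross-over) all evaluate to zero in the global context of the Exact-One constraint. Because $\mathcal{P}$ as drawn contains many internal Exact-One and Equality vertices and parallel/multi-edges, a direct enumeration of internal assignments is unwieldy. I would instead analyze $\mathcal{P}$ by successively contracting each $(=_3)$ vertex (which merges three edges into a common bit) and each $[0,1,0,0]$ vertex (which enforces exactly one 1 among its three incident edges), propagating these constraints around the gadget and showing that the only surviving external patterns are the cross-over patterns. Once this is established, combined with the fact that the multiplicative factor $\alpha$ is a known nonzero constant, the reduction $\holant{[0,1,0,0]}{(=_3)}\le_T \plholant{[0,1,0,0]}{(=_3)}$ follows, completing the proof of Theorem~\ref{thm:[0,1,0,0]}.
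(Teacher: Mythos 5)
There is a fundamental gap: the gadget $\mathcal{P}$ is \emph{not} a cross-over gadget, and your reduction would compute the wrong quantity. By property (2) in the paper's description of $\mathcal{P}$ (Figure~\ref{a nice gadget}), the two blue dangling edges of $\mathcal{P}$ are \emph{always} pinned to 0; only the red pair can carry an arbitrary (equal) bit. If you place $\mathcal{P}$ at a generic crossing of an arbitrary planar drawing of a non-planar instance $\Omega$, you are forcing one of the two crossing wires to carry the value 0, which kills every contribution to $\operatorname{Holant}(\Omega)$ in which that wire carries a 1. There is no reason two edges that happen to cross in a drawing should be ``tightly coupled'' --- they can be arbitrarily far apart in the graph, e.g.\ lie in different components --- so the Exact-One constraints at the LHS endpoints do not rescue the argument. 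Consequently $\operatorname{Pl\mbox{-}Holant}(\Omega') \ne \alpha^c \operatorname{Holant}(\Omega)$ in general, and the reduction from $\holant{[0,1,0,0]}{(=_3)}$ fails.

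The paper's proof uses $\mathcal{P}$ in an entirely different way, and starts from a different hard problem. It reduces from $\plholant{[0,1,0,0],[0,1,0]}{(=_3)}$, whose \#P-completeness follows from Dyer and Frieze's parsimonious reduction for Planar-X3C with element multiplicities in $\{2,3\}$ --- note this is a \emph{planar} hard problem, so no global cross-over is needed. The reduction then needs only to supply a unary pin-0 to each $[0,1,0]$ vertex (turning it into $[0,1,0,0]$ plus a pin). Since the number of such vertices is $\equiv 0 \bmod 3$, one adds one new $(=_3)$ vertex per triple and routes three pin-0 ``wires'' from it to three $[0,1,0]$ sites using linked copies of $\mathcal{P}$; here the blue edges carry the pin-0 wire (which really is always 0, so the pinning is correct), while the red pair lets this wire pass over other edges of $\Omega$ without destroying planarity. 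Your proposal treats $\mathcal{P}$ as if it were the full cross-over $\mathcal{C}$, which it provably is not; this is precisely the subtlety the paper's remark after this theorem is pointing to.
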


\begin{proof}
In~\cite{DyerF86}, Dyer and Frieze 
proved the problem Planar-X3C  NP-complete: An input is a
collection $\cal S$ of 3-element subsets of a set $U$, where the 
bipartite incidence graph is planar, and we ask for
an exact cover of $U$ by some ${\cal S'} \subseteq {\cal S}$.
Their reduction in fact produces instances 
where every $x \in U$ appears in exactly two or three sets of $\cal S$.
One can further verify that their reduction is parsimonious.
Thus, their proof yields the \#P-completeness for
$\plholant{[0,1,0,0],[0,1,0]}{(=_3)}$.
We prove our theorem by a reduction
\begin{equation}\label{reduction-in-thm4}
\plholant{[0,1,0,0],[0,1,0]}{(=_3)}\leq_{T} \plholant{[0,1,0,0]}{(=_3)}.
\end{equation}

Note that a unary pin-0 signature $\Delta_0$
connected to
$[0,1,0,0]$  produces $[0,1,0]$. If we replace  each $[0,1,0]$ in  $\Omega$ by 
 $[0,1,0,0]$ connected with  $\Delta_0$, the Holant value is unchanged.  
 So  if 
we can produce 
 $\Delta_0$ on the RHS in  $\plholant{[0,1,0,0]}{(=_3)}$, then (\ref{reduction-in-thm4}) follows. 
 But, in any bipartite 3-regular
problem 
\emph{provably} no construction can produce
individual unary signatures.  
Next, 
 note that in any signature grid $\Omega$ of  $\plholant{[0,1,0,0],[0,1,0]}{(=_3)}$,
 the number of  appearances of $[0,1,0]$ is congruent to $0 \bmod 3$, by counting the total degrees of LHS and RHS. Then, our idea is to create  triples of $\Delta_0$ so that we can apply them, one triple 
 at a time.

There remains the difficulty of how to construct  triples of $\Delta_0$ on the RHS in the  setting
 $\plholant{[0,1,0,0]}{(=_3)}$, and more importantly, not only the construction must be
 planar but also we must be able to apply them in $\Omega$, three at a time,
 in a planar fashion.  Notice that the appearances of $[0,1,0]$ in  $\Omega$ 
 generally do not allow this planar grouping (and indeed the output instances
 in~\cite{DyerF86} do not have this property).
 
The following construction accomplishes all these requirements in one fell swoop!
The planar cross-over-pinned-0 gadget $\mathcal{P}$ is illustrated in Figure~\ref{a nice gadget}, where we place  $[0,1,0,0]$ at the squares and  $=_3$ at the circles. It has the following properties:
\begin{enumerate}
    \item externally the two left dangling edges are to be connected to LHS, and two right dangling edges are to be connected to RHS;
    \item the two blue dangling edges are pinned to be 0;
    \item the two red dangling edges can be assigned to either 0 or 1, but must be the same value, and either choice
    induces  a unique assignment for the internal edges.
   % \item the gadget is planar.
\end{enumerate}
Note also that if we ``flip'' $\mathcal{P}$ along the ``axis'' of the two blue edges,
thereby exchange the two red edges, we have a reflected copy of $\mathcal{P}$,
call it  $\mathcal{P}'$,
where the North-East red edge connects externally to LHS, and the
South-West red edge connects externally to RHS, exactly the opposite of  $\mathcal{P}$.

Thus, the gadget  $\mathcal{P}$   allows ``passing over'' one crossing edge (the pair of
red edges will  take its place) 
while the two end blue edges are pinned to 0. 
We can link any $k \ge 1$ copies of   $\mathcal{P}$ or  $\mathcal{P}'$
by the blue edges to ``pass over''  $k$ crossings. Note that the linking of the 
two end blue edges respects the bipartite structure, and  $\mathcal{P}$ or  $\mathcal{P}'$
allow any individual bipartite orientation of the crossed edge.
We call this a linked  $\mathcal{P}$ gadget.

Let $n=3m$ be the number of $[0,1,0]$ in $\Omega$ for some integer $m\ge 0$.  We now add $m$ new vertices on RHS assigned the signature $=_3$. We then use three copies of the linked
 $\mathcal{P}$ gadgets to connect  this $=_3$ to three occurrences of $[0,1,0]$ in $\Omega$, while 
replacing  the signatures there by $[0,1,0,0]$.
(If  some passage from $=_3$ to  $[0,1,0]$ does not encounter any crossing edge, we will
artificially introduce two such crossings!)
This defines a signature grid $\Omega'$ in $\plholant{[0,1,0,0]}{(=_3)}$
with  $\operatorname{Holant}(\Omega') = \operatorname{Holant}(\Omega)$.
\end{proof}

\begin{remark}
One can easily construct a degenerate ternary signature $[1,0,0,0] = [1,0] \otimes [1,0] \otimes [1,0]$ on RHS by placing circles to be $[0,1,0,0]$ and squares to be $(=_3)$ on $G_2$. However, we cannot  apply Theorem~\ref{thm:P3EM} and conclude that 
$\plholant{[0,1,0,0],[0,1,0]}{(=_3)} \leq_{T} 
\plholant{[0,1,0,0]}{(=_3)}.$
% JYC pl check
%\plholant{[0,1,0,0]}{(=_3),[1,0,0,0]},$$
See also Remark~\ref{remark8}.
%after Lemma 4.4.
%in full paper attached. 
\textbf{The use of the gadget $\mathcal{P}$ is essential.}
%this yields an even simpler proof. However, we don't have such a reduction. In fact, if one turns to the gadget given by Dyer and Frieze in their parsimonious reduction to $\plholant{[0,1,0,0]}{(=_3),[1,0]}$, there is no planar way to assign an additional edge for each node with degree 2 into one of its adjacent faces so that every face has $0 \pmod{3}$ edges coming in if there exists a variable in the original Planar 1-SAT problem which appears only complemented or only uncomplemented as defined in~\cite{DyerF86}. 
\end{remark}

% \begin{proof}
% Consider the signature $[3b^2, 1+2b^3, 2b+b^4, 3b^2]$ on LHS which is obtained by placing squares to be $[0,1,b,0]$ and circles to be $=_3$ in $G_2$. Lemma~\ref{lemma:basic} applies (after normalizing) and the problem is \numP-hard unless $[3b^2, 1+2b^3, 2b+b^4, 3b^2]$  is degenerate or it satisfies $1+2b^3=2b+b^4$. Both cases will lead to $b=1$, in which case the problem $\plholant{[0,1,1,0]}{(=_3)}$ is the Pl-{\sc \#HyperGragh-Moderate-3-Cover} problem in~\cite{FanC21} and it was shown to be in $\operatorname{FP}$.
% \end{proof}

\section{Main Theorem}
We are now ready to prove our main theorem. At the end
of the proof there is a flowchart of the logical structure for this proof of
Theorem~\ref{thm:main}.

 \begin{proof}[Proof of Theorem~\ref{thm:main}]
 First, if $f_0 = f_3 = 0$, we separate the discussion into whether $f_1 f_2 \neq 0$. If $f_1 f_2 \neq 0$, by Theorem \ref{0ab0}  we know that it is  \numP-hard unless $f_1=f_2$, in which case it is matchgate-transformable and thus in FP. If $f_1 f_2 =0$, then by Theorem~\ref{thm:[0,1,0,0]} we know that it is  \numP-hard unless $f_1=f_2=0$, in which case the problem is trivially in FP. This finishes the case when $f_0 = f_3 = 0$.
  
 Assume now at least one of $f_0$ and $f_3$ is not $0$. 
% MK 16
 By considering the reversal of the signature, % flipping the role of 0 and 1,
 we can assume $f_0\ne 0$, then the signature becomes $[1,a,b,c]$ after normalization. 
 
 If $c=0$, the dichotomy for $[1,a,b,0]$ is proved in Theorem \ref{1ab0}.
 
 If in $[1,a,b,c]$, $c\ne 0$, then $a$ and $b$ are symmetric by flipping. Now if $ab=0$, we can assume $b=0$ by the afore-mentioned symmetry, i.e., the signature becomes $[1,a,0,c]$. By Theorem \ref{1a0c}, it is \numP-hard unless $a=0$, in which case it is Gen-Eq. 
 
 Finally, for the problem $[1,a,b,c]$ where $abc\ne0$, Theorem \ref{large5} proves the dichotomy that it is \numP-hard unless the signature is in the tractable cases of Theorem~\ref{thm:main}.
 \end{proof}

\bigskip  
\newpage
\textbf{Flowchart of proof structure: }

\medskip

\tikzstyle{startstop} = [rectangle, rounded corners, minimum width=2cm, minimum height=1cm,text centered, draw=black, fill=red!20]
\tikzstyle{io} = [trapezium, trapezium left angle=70, trapezium right angle=110, minimum width=3cm, minimum height=1cm, text centered, draw=black, fill=blue!20]
\tikzstyle{process} = [rectangle, minimum width = 2cm, minimum height=1cm, text centered,text width=2cm, draw=black, fill=yellow!20]
\tikzstyle{decision} = [diamond, minimum width=3cm, minimum height=1cm, text centered, draw=black, fill=green!20]
\tikzstyle{arrow} = [thick,->,>=stealth,  text centered, text width=2.2cm]

\begin{tikzpicture}[node distance=4cm]
    \node (s1) [startstop] {$[f_0,f_1,f_2,f_3]$};
    \node (s2) [process, left of=s1, xshift=-1.8cm] {Dichotomy for $[0,a,b,0]$, Theorem~\ref{0ab0}};
    \node (s3) [startstop, below of=s1, yshift=1.0cm] {$[1,a,b,c]$};
    \node (s4) [process, left of=s3,xshift=-1.8cm] {Dichotomy for $[1,a,b,0]$, Theorem~\ref{1ab0}};
    \node (s5) [process, right of=s3, xshift=1.8cm] {Dichotomy for $[1,a,0,c]$, Theorem~\ref{1a0c}};
    \node (s6) [process, below of=s3, yshift=1.0cm, text width=3cm] {Dichotomy in Theorem~\ref{large5}};
    \node (s7) [process, right of=s1, xshift = 1.8cm] {Hardness for $[0,1,0,0]$, Theorem~\ref{thm:[0,1,0,0]}};
    
    \draw [arrow] (s1) -- node[anchor=south] {if $f_0=f_3=0$ and $f_1 f_2 \neq 0$} (s2);
    % \draw [arrow] [text width=3.8cm](s1) -- node[anchor=west] {else, possibly by flipping} (s3);
    \draw [arrow] (s1) -- node[anchor=east] {else, possibly by flipping} (s3);
    \draw [arrow] (s3) -- node[anchor=south] {if $c=0$} (s4);
    \draw [arrow] (s3) -- node[anchor=south] {if $c \ne 0$ and $ab=0$ } (s5);
    \draw [arrow] [text width =3.5cm](s3) -- node[anchor=east] {else (i.e., $abc\ne 0$)} (s6);
    \draw [arrow] (s1) -- node[anchor=south] {if $f_0=f_3=0$ and $f_1 f_2 = 0$} (s7);
\end{tikzpicture}

% \section*{References}

% \section*{Acknowledgement}
% We sincerely thank Yin Liu, Ashwin Maran, Ilay Raz, and Benjamin Young for many helpful discussions.

\newpage

\bibliographystyle{alpha}
\bibliography{reference.bib}

\newpage

\input{Appendix}

\end{document}